\newtheorem{definition}{Definition}[section]
\newtheorem{thm}{Theorem}[section]
\newtheorem{proposition}[thm]{Proposition}
\newtheorem{lemma}[thm]{Lemma}
\newtheorem{corollary}[thm]{Corollary}
\newtheorem{exam}{Example}[section]
\newtheorem{result}{Result}
\newenvironment{proof}{\begin{IEEEproof}}{\end{IEEEproof}}
\newtheorem{remark}{Remark}[section]
\newcommand{\Q}{\mathbb{Q}}
\newcommand{\Z}{\mathbb{Z}}
\newcommand{\C}{\mathbb{C}}
\newcommand{\R}{\mathbb{R}}
\def\snr{\textnormal{SNR}}
\def\norm#1{\left\| #1 \right\|}
\def\diag{\textnormal{diag}}
\def\tr{\textnormal{Tr}}
\newcommand{\etal}{{\it et al. }}
\begin{document}

\title{An error event sensitive trade-off between rate and coding gain in MIMO MAC
\thanks{The research of T. Ernvall is supported in part  by the Academy of Finland grant \#131745.}
\thanks{The research of R. Vehkalahti is funded by the  Academy of Finland  grant  \#252457.}
\thanks{The research of H.-f. Lu was funded in part by Taiwan National Science Council under Grants NSC 100-2221-E-009-046 -MY3 and NSC 101-2923-E-009-001-MY3.}
\thanks{Part of this work appeared at ITW 2010 \cite{decaypaperi} and  ISIT 2012 \cite{EV}}
}

\author{Toni Ernvall, Jyrki Lahtonen \emph{Member, IEEE}, Hsiao-feng (Francis) Lu, \emph{Member, IEEE} and  Roope Vehkalahti, \emph{Member, IEEE}
\thanks{T. Ernvall is with the Turku Center of Computer Science, Turku, Finland and with with the Department of Mathematics and Statistics, FI-20014, University of Turku, Finland (e-mail:tmernv@utu.fi).}

 \thanks{ J. Lahtonen and R. Vehkalahti are with the Department of Mathematics and Statistics, FI-20014, University of Turku, Finland
(e-mails: \{ lahtonen, roiive\}@utu.fi).}
\thanks{H.-f. Lu is with the Department of Electrical Engineering, National Chiao Tung University, Hsinchu, Taiwan (e-mail:francis@mail.nctu.edu.tw). }
}

\maketitle

\newcommand{\La}{\mathbf{L}}
\newcommand{\h}{{\mathbf h}}

\newcommand{\D}{{\mathcal D}}
\newcommand{\F}{{\mathbf F}}
\newcommand{\HH}{{\mathbf H}}
\newcommand{\OO}{{\mathcal O}}
\newcommand{\G}{{\mathcal G}}
\newcommand{\A}{{\mathcal A}}
\newcommand{\B}{{\mathcal B}}
\newcommand{\I}{{\mathcal I}}
\newcommand{\E}{{\mathcal E}}
\newcommand{\PP}{{\mathcal P}}

\newcommand{\M}{{\mathcal M}}
\newcommand{\separ}{\,\vert\,}
\newcommand{\abs}[1]{\vert #1 \vert}

\begin{abstract}
This work considers space-time block coding for the Rayleigh fading multiple-input multiple-output (MIMO) multiple access channel (MAC).  If we suppose that the receiver is performing joint maximum-likelihood (ML) decoding, optimizing a MIMO MAC code against  a fixed error event leads to a situation where the  joint codewords of the users in error can be seen as a single user MIMO code. In  such a case   pair-wise error probability  (PEP) based determinant criterion of Tarokh et al.   can be used to upper bound the error probability.

It was already proven by Lahtonen  et al. that irrespective of the used  codes the determinants of the differences of codewords of the overall codematrices will decay as a function of the rates of the users.

This work will study this decay phenomenon further and derive  upper bounds for the decay of determinants corresponding any error event.
Lower bounds for the optimal decay are studied by constructions based on algebraic number theory and Diophantine approximation. For some error profiles the constructed codes will be  proven to be optimal.

While the perspective of the paper is that of  PEP, the final part of the paper proves how the achieved decay results  can be turned into statements about the diversity-multiplexing gain trade-off (DMT).

\end{abstract}

\section{Introduction}
Assume that we are to design a system for
 $U$ simultaneously transmitting synchronized users, each transmitting with
$n_t$ transmit antennas and, for simplicity so that we end up with
square matrices, over $Un_t$ channel uses. We can describe each
user’s signals as $n_t\times Un_t$ complex matrices. A multiuser
MIMO signal is then viewed as a $Un_t \times Un_t$
matrix obtained by using the signals of the individual users as blocks. So each
user is occupying $n_t$ rows in this overall transmission matrix.

We are not interested on the scenario, where single users have fixed finite codes, but of coding schemes, where each user has a family of growing codes.
A natural source for such schemes is to assume that each user has a lattice  and that the finite codes of individual users are carved
out of   these user specific lattices. We are then interested on the asymptotic behavior of error probabilities when we let the rates of each user grow.

In the Rayleigh fading multiple-input multiple-output (MIMO) multiple access channel (MAC)   it is a natural idea to develop a code design criteria by splitting the whole error probability space to separate error events based on which users are in error.  If we suppose that the receiver is performing joint maximum-likelihood (ML) decoding, optimizing a MIMO MAC code against  a fixed error event leads to a situation where the  joint codewords of the users in error can be seen as a single user MIMO code. In  such a case   pair-wise error probability based determinant criterion from \cite{TSC}    can be applied to upper bound the error probability.  This approach was  first taken in \cite{GB} and was developed further in \cite{CoGaBo} and  \cite{bb}.

The approach was  also  suggested for reaching  the MAC DMT \cite{Tse}    in \cite{CoGaBo}. Unfortunately  in \cite{LuHoVeLa} it was proved that the criteria given in \cite{CoGaBo}  is too tight,  at least in the case where the codes of the single users  are lattice  space-time codes. The  \emph{pigeon hole bound} in \cite{LuHoVeLa} proves that, irrespective of the code design, the determinants of the overall code matrices will decay with  a polynomial  speed with respect to the code-size  and methods, similar to those   used in \cite{EKPKL}, to prove the DMT optimality for  single user MIMO codes  do not work.

In this paper we will study  this decay phenomenon further and will  give general upper  and lower bounds for the decay of determinants corresponding any error event. The achieved results are then used to draw conclusions of the DMT of the analyzed codes.

The upper bounds we derive will be functions of rates of each users and are independent of the used  codes.
These results will reveal a trade-off between the rate and protection against error.
 It is possible to design a MAC code  in  such a way that the codewords corresponding each error event will have maximal possible rank. If we have such a code, then in each error event the code is protected by maximal possible diversity. However, in the spirit of  DMT, our bounds will reveal that when the rates of users of the corresponding error events are growing the determinants of the matrices offering the maximal diversity will decay. This interplay creates a trade-off between the coding gain and rate, which is close relative to DMT, but not the same concept.

The proofs of the  upper bounds are based on using pigeon hole principle, projection mappings  and properties of the determinant function.
The results are purely algebraic corresponding all lattice codes with certain properties. Only in the statement of the problem we are using Rayleigh fading MIMO MAC as a motivation.

While the first part of the paper provides upper bounds for the decay of the determinants in MIMO MAC,  the second part of the paper concentrates on giving code constructions where we have tried to optimize the codes from the decay point of view.
In  \cite{decaypaperi} it was proved that the two user single antenna code (BB-code)  given in \cite{bb} has in  some sense the best possible decay. In this paper we are now  giving a wast  generalization of  this code  to general MIMO MAC. The  codes we will build  fulfill the so called \emph{generalized rank criterion} and  do reach the pigeon hole bound. Our method will  share the basic structure with the original generalization of BB-code given in
\cite{BB}.

The single user codes we are using are based on the \emph{multi-block} codes from division algebras \cite{Lu} and \cite{YB07}.
This  approach  has been taken in several recent papers on MIMO MAC.  However, in these papers the full rank criterion has been achieved by using either transcendental  elements \cite{BB} (with exception of   $n_t=2$ case, which is dealt with algebraic elements having low degree) or algebraic elements with high degree \cite{matriisi}. Both of these methods make it extremely difficult to measure the decay of the codes and can  lead to bad decay. For a survey of these recent results, we refer the reader to thesis of Maya Badr \cite{B}.

Instead of the usual algebraic independence strategy  we will use \emph{valuation theory} to achieve the full rank condition. This  technical tool allows us  to use algebraic elements with low degree. By applying Galois theoretic method of Lu et al. \cite{LuHoVeLa} and  methods from Diophantine approximation, originally introduced by  Lahtonen  et al. in \cite{decaypaperi}, we will prove  that our codes achieve   good decay and in particular reach  the pigeon hole bound.

In Section \ref{DMTanalysis} we will show how the   lower bounds for the decay of our codes can be translated to lower bounds for the DMT. This analysis will reveal that in many cases the constructed codes do achieve the optimal diversity-multiplexing gain trade-off for low multiplexing gains.

\subsection{Multi-user codes, error events and corresponding  decay functions}\label{definitions}
In this section  we will show how the decay functions appear as a natural generalization of the minimum determinant criterion used in the design of  single user MIMO space-time block codes.

Let us suppose that we have  $U$ users, each having  $n_t$ antennas and that the receiver has $n_r$ antennas and  complete channels state information.
 We also suppose that the fading for each user  stays stable for $k$ time units, where $k\geq Un_t$. Let us refer to the channel matrix of the $i$th  user with $H_i\in M_{n_r\times n_t}(\C)$ and let us suppose that each of these have  i.i.d complex Gaussian random variables with zero mean and unit variance as coefficients. In this scenario the base station receives
$$
Y=\sum_{i=1}^{U} H_iX_i +W,
$$
where, $X_i\in M_{n_t\times k}(\C)$,  is the transmitted codeword from the $i$th user, and $W \in \M_{n_r\times k}(\C)$ presents the noise        having i.i.d complex Gaussian random variables as coefficients. In this scenario  multiuser MIMO signal is  a $Un_t \times k$ matrix where the rows $(j-1)n_t + 1, (j-1)n_t + 2, \dots, (j-1)n_t + n_t$ represent $j$th user's signal ($j=1, \dots, U$).

In order to keep the analysis in the paper streamlined and clean, we will assume that the  single user space-time codes are always of the following type.
\begin{definition}\label{latticedef}
A {\em matrix lattice} $\textbf{L} \subseteq M_{n_t\times k}(\C)$ has the form
$$
\textbf{L}=\Z B_1\oplus \Z B_2\oplus \cdots \oplus \Z B_r,
$$
where the matrices $B_1,\dots, B_r$ are linearly independent over $\R$, \emph{i.e.}, form a lattice basis, and $r$ is
called the \emph{rank}  or the \emph{dimension} of the lattice.
\end{definition}

Let us suppose that $x_i$ are matrices in $M_{n_t\times k}(\C)$.  Throughout the paper  we will  use  the notation
$$
\R(x_1,\dots, x_j)=\R x_1+\cdots+ \R x_j.
$$
 If $\textbf{L}$ is a lattice we also write
$$
\R(\textbf{L})=\R B_1\oplus \R B_2\oplus \cdots \oplus \R B_r.
$$

The finite single user codes used in the actual transmission are  of form
$$
\textbf{L} (M) = \left\{ \sum_{i=1}^{r} b_{i} B_{i} | b_{i} \in \Z, -M \leq b_{i} \leq M \right\},
$$
where $M$ is a given positive number.

In the single user MIMO transmission  the usual pairwise error probability based design criteria \cite{TSC}, leads to maximizing
$$
\min_{ X, Y\in \mathbf{L}(M), X\neq Y }\det((X-Y)(X-Y)^{\dagger}).
$$
As $X-Y\in  \mathbf{L}(2M)$, we can just as well try to maximize
\begin{equation}\label{mindet}
\min_{X\in \mathbf{L}(M), X\neq 0}\det(XX^{\dagger}).
\end{equation}
The first step  is of course that this value should always be non-zero. This is the \emph{rank criterion}.
Maximizing the value is called the \emph{minimum determinant criterion}. Let us now show how this criterion can be generalized to MIMO MAC context.

In the rest of the paper we   suppose that  each user applies a lattice space-time code $\textbf{L}_j \subseteq M_{n_t \times k}(\C), j=1, \ldots, U$. We  also assume that each user's lattice is of full rank $r=2n_tk$, and denote the basis of the  lattice $\textbf{L}_j$ by $B_{j,1}, \ldots, B_{j,r}$. Now the code associated with the $j$th user is a restriction of lattice $\textbf{L}_{j}$
$$
\textbf{L}_{j} (N_{j}) = \left\{ \sum_{i=1}^{r} b_{i} B_{j,i} | b_{i} \in \Z, -N_{j} \leq b_{i} \leq N_{j} \right\},
$$
where $N_{j}$ is a given positive number.

Using these definitions the \emph{$U$-user MIMO MAC code} is $(\textbf{L}_{1} (N_{1}), \textbf{L}_{2} (N_{2}), \dots, \textbf{L}_{U} (N_{U}))$. Let $\I_u=\{i_1,\dots,i_u\} \subseteq \{1,\dots, U\}$ be of size $u>0$.
We will then use the notation $M(X_{i_1},\dots, X_{i_u}) \in M_{un_t\times k}(\C)$, where we have stacked the codewords  $X_{i_j}$ from  the  users $\{i_1,\dots,i_u\}$ on top of  each other.

The following example shows how the different error events leads to different code design criteria.
\begin{exam}
Let us suppose that we have three user MIMO MAC channel. The channel equation can  now be written as
$$
[H_1,H_2, H_3 ]
\begin{pmatrix}
X_1\\
X_2\\
X_3
\end{pmatrix}
+W =H_1X_1 +H_2X_2 +H_3X_3 +W.
$$
Let us suppose that the receiver manages to decode the message $X_1$ of the first user. As we supposed that the receiver has  perfect channel state information we can simply subtract the matrix $H_1X_1$ from the channel equation.
Therefore, if we like to design the code against an error event where exactly  the second and the third user are in error, we can consider
simply  the code $(\mathbf{L_2},\mathbf{L_3})$  and try to maximize
\begin{equation}\label{mindet}
\min_{0\neq X_i\in \mathbf{L_i}(M)} \mathrm{det}(M(X_2,X_3) M(X_2, X_3)^{\dagger}) ,
\end{equation}
with the assumption $X_i\neq 0$. Here we again took a benefit of the fact that $\mathbf{L_2}$ and $\mathbf{L_3}$ are lattices and therefore additively closed. Similar analysis obviously provides us with a minimum determinant criterion for any error event.
\end{exam}

The previous example reveal that we should aim at maximizing the minimum determinants of subcodes corresponding  to any error events. The  design criterion for each error event is then the usual minimum determinant criterion, but with an extra assumption that the code matrices of the  users in error are non-zero.

\begin{definition}\label{generaldecay}
 Let $C_{U,n_t}=(\mathbf{L}_1,\mathbf{L}_2,\dots,\mathbf{L}_U)$ and $\I_u=\{i_1,\dots,i_u\}$ and write  $C_{U,n_t}^{\I_u}=(\mathbf{L}_{i_1},\mathbf{L}_{i_2},\dots,\mathbf{L}_{i_u})$. Define the decay function for $C_{U,n_t}^{\I_u}$ by setting
$$
D_{\I_u} (N_{i_1}, \dots, N_{i_u}) = \min_{X_{i_j} \in \mathbf{L}_{i_j}(N_{i_j})\setminus \{\mathbf{0}\}} \sqrt{\det(MM^{\dagger})},
$$
 where $M=M(X_{i_1},\dots,X_{i_u})$.
\end{definition}

For a special case $N_1=\dots=N_U=N$ we write
$$
D_{\I_u} (N) = D_{\I_u} (N_{i_1}=N, \dots, N_{i_u}=N).
$$

We also use notion
$$
D(N_{1}, \dots, N_{U}) = D_{\{1,\dots,U\}}(N_{1}, \dots, N_{U})
$$
and similarly as above
$$
D (N) = D_{\{1,\dots,U\}} (N_{1}=N, \dots, N_{U}=N).
$$

If we have a $U$-user code with decay function $D(N_{1}, \ldots, N_{U})$, such that $D(N_{1}, \ldots, N_{U}) \neq 0$ for all $N_{1}, \ldots, N_{U} \in \Z_{+}$,  we say that the code satisfies \emph{(generalized) rank criterion}.

 \section{Upper bounds for the decay of determinants}

There are several single user MIMO codes $\mathbf{L} \subseteq M_{n\times k}(\C)$  that satisfy
\[
\inf_{{\bf 0} \neq X \in \mathbf{L}} \det (XX^{\dagger}) > 0.
\]
Such a lattice code is said to have the \emph{non-vanishing determinant} (NVD) property.   This is obviously a guarantee
that the minimum determinants of all the finite codes $\mathbf{L}(M)$ are lower bounded and from the PEP point of view such a property is very desirable.

It is tempting to try to build such   MIMO MAC code $C_{U,n_t}$ that for any subset of user $\I_u=\{i_1,\dots,i_u\} \subset I$ the decay function would satisfy
$$
D_{\I_u} (N_{i_1}, \dots, N_{i_u})\geq \epsilon,
$$
for some fixed $\epsilon >0$, irrespective of the code sizes of the users $i_j$. Unfortunately this is not possible.  Already in \cite{LuHoVeLa} it was proved that no matter how we build the codes, the function $D(N)$ will decay as a function of $N$.  In this section we will  give upper bounds for  $D_{\I_u} (N_{i_1}, \dots, N_{i_u})$ for any subset $\I_u$ of user. The bounds will depend  on the size of the code of each user and will reveal a trade-off between rate and diversity.

In order to keep the paper easy  to understand   we delay the proof of Theorem \ref{maindecay} to the end of the paper. However, in order to give some idea of the proof, we introduce some concepts and  results that are most crucial and  demonstrate the basic ideas in Example \ref{mainexample}.

\subsection{The pigeon hole principle in a subspace}\label{pigeonsec}

Let us consider the space $M_{n\times k}(\C)$, where $k\geq n$ over the real numbers. It is  $2nk$-dimensional real vector space with a real inner product
$$
<X,Y>=\Re(Tr (XY^{\dagger})),
$$
 where $Tr$ is the matrix trace. This inner product also naturally defines a metric on the space $M_{n\times k}(\C)$, when
 setting $||X||= \sqrt{<X,X>}$.

\begin{definition}
Let us suppose that $A$ is a subspace of $M_{n\times k}(\C)$ . We then have that
$$
A^{\bot}:=\{x \,|\, x\in M_{n\times k}(\C),  <x,a>=0 \,\forall\, a\,\in A\},
$$
is an $\R$-linear subspace of $M_{n\times k}(\C)$.
\end{definition}

The following lemma is a collection of well known results from basic linear algebra.

\begin{lemma}\label{ortodecomp}
Let us suppose that $A$ is a $k$-dimensional subspace of $M_{n\times k}(\C)$. Then
$$
M_{n\times k}(\C)= A \oplus A^{\bot},
$$
 $ dim_{\R}(A^{\bot})= 2nk-dim_{\R}(A)=v$ and $A^{\bot}$ has  a $v$-dimensional orthonormal basis
$\{e_1,\dots, e_v\}$  with respect  to the real inner  product of $M_{n\times k}(\C)$.
\end{lemma}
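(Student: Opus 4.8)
The plan is to recognize all three assertions as instances of the standard theory of finite-dimensional real inner product spaces, once we have checked that $\langle X, Y\rangle = \Re(\tr(XY^{\dagger}))$ is genuinely a positive definite symmetric bilinear form. First I would write $X = (x_{ij})$ with $x_{ij} = a_{ij} + \sqrt{-1}\, b_{ij}$ and compute $\tr(XX^{\dagger}) = \sum_{i,j} |x_{ij}|^2 = \sum_{i,j}(a_{ij}^2 + b_{ij}^2)$, which is real and vanishes only when $X = 0$; bilinearity over $\R$ and symmetry ($\langle X,Y\rangle = \langle Y,X\rangle$, since $\tr(YX^{\dagger}) = \overline{\tr(XY^{\dagger})}$) are immediate. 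Hence $(M_{n\times k}(\C), \langle\cdot,\cdot\rangle)$ is isometric to Euclidean $\R^{2nk}$ via $X \mapsto (a_{ij}, b_{ij})_{i,j}$, and in particular $\dim_{\R} M_{n\times k}(\C) = 2nk$.

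Next, for a subspace $A$ I would consider the linear map $\varphi\colon M_{n\times k}(\C) \to A^{*}$ into the real dual of $A$, sending $x$ to the functional $a \mapsto \langle x, a\rangle$. Its kernel is by definition $A^{\bot}$, so $A^{\bot}$ is an $\R$-subspace, and $\varphi$ is surjective because the restriction of $\langle\cdot,\cdot\rangle$ to $A$ is itself nondegenerate (being positive definite), so every functional on $A$ has the form $\langle a_0, \cdot\rangle$ for some $a_0 \in A$. The rank–nullity theorem then gives $\dim_{\R} A^{\bot} = 2nk - \dim_{\R} A =: v$. Positive definiteness also yields $A \cap A^{\bot} = \{0\}$: any $x$ in the intersection satisfies $\langle x, x\rangle = 0$, hence $x = 0$. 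Combining $A \cap A^{\bot} = \{0\}$ with $\dim_{\R} A + \dim_{\R} A^{\bot} = 2nk = \dim_{\R} M_{n\times k}(\C)$ gives the direct sum decomposition $M_{n\times k}(\C) = A \oplus A^{\bot}$.

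Finally, to produce the orthonormal basis of $A^{\bot}$ I would take any $\R$-basis of $A^{\bot}$ — which exists and has $v$ elements by the previous paragraph — and apply the Gram–Schmidt procedure with respect to $\langle\cdot,\cdot\rangle$; positive definiteness guarantees that no vector is annihilated in the normalization step, so the output $\{e_1,\dots,e_v\}$ is an orthonormal basis of $A^{\bot}$ with respect to the real inner product of $M_{n\times k}(\C)$.

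There is no genuinely hard step here: the only fact that makes the argument run is the positive definiteness of $\Re(\tr(XY^{\dagger}))$, and once that one-line computation is in place the rest is the textbook orthogonal-complement and Gram–Schmidt machinery. If anything deserves care, it is simply making the identification of $M_{n\times k}(\C)$ with $\R^{2nk}$ explicit enough that ``dimension'', ``orthonormal'', and ``direct sum'' all refer unambiguously to this real inner product.
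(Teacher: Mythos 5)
Your proof is correct, and it is exactly the standard linear-algebra argument (positive definiteness of $\Re(\tr(XY^{\dagger}))$, rank--nullity for the orthogonal complement, and Gram--Schmidt) that the paper implicitly relies on: the paper states this lemma as a collection of well-known facts and gives no proof of its own. Nothing further is needed.
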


Let us suppose that $x$ is an element of $M_{n\times k}(\C)$ and $A$ a subspace.
According to Lemma \ref{ortodecomp} we can now write $x$ uniquely in the form
\begin{equation}\label{decomposition}
x=a_1+a_2,
\end{equation}
where $a_1 \in A$ and $a_2 \in A^{\bot}$. This decomposition gives us a well defined  $\R$-linear mapping $\pi_{A^{\bot}}:M_{n\times k}(\C)\mapsto A^{\bot}$, where
\begin{equation}\label{decomposition2}
\pi_{A^{\bot}}(x)=a_2.
\end{equation}

\begin{lemma}[Pigeon hole principle in a subspace]\label{pigprincip}
Let us suppose that we have  a set of matrices $S=\{x_1, \dots, x_s\}\subset M_{n\times k}(\C)$, where
$||x_l||\leq M,\, \forall \,l$ and a   subspace  $A\subset M_{n\times k}(\C)$.
Then  the set $S$  includes matrices $x_i$ and $x_j$, where $x_i\neq x_j$ such that
$$
||\pi_{A^{\bot}}(x_i)-\pi_{A^{\bot}}(x_j)|| \leq 4h \frac{M}{s^{1/h}},
$$
where $h$ is the real dimension of the subspace $A^{\bot}$.
\end{lemma}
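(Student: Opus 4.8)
The plan is to reduce the statement to the classical grid pigeonhole argument carried out inside the Euclidean space $A^{\bot}$. We may assume $s\ge 2$, since otherwise $S$ contains no two distinct elements and the claim is vacuous; also $h=\dim_{\R}A^{\bot}\ge 1$ in every application, since $A$ is then a proper subspace. First I would fix an orthonormal basis $\{e_1,\dots,e_h\}$ of $A^{\bot}$ (Gram--Schmidt, as in Lemma~\ref{ortodecomp}) and use it to identify $A^{\bot}$ isometrically with $\R^{h}$: writing $\pi_{A^{\bot}}(x_l)=\sum_{i=1}^{h}c_{l,i}e_i$, record the coordinate vector $\mathbf{c}_l=(c_{l,1},\dots,c_{l,h})\in\R^{h}$, so that $\|\mathbf{c}_l\|=\|\pi_{A^{\bot}}(x_l)\|$. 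Applying the orthogonal decomposition \eqref{decomposition} to $x_l$ gives $\|x_l\|^{2}=\|a_1\|^{2}+\|\pi_{A^{\bot}}(x_l)\|^{2}$, whence $\|\mathbf{c}_l\|\le\|x_l\|\le M$; thus all $s$ points $\mathbf{c}_1,\dots,\mathbf{c}_s$ lie in the cube $[-M,M]^{h}$.

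Next I would partition this cube into a grid. Let $m$ be the largest positive integer with $m^{h}<s$; such an $m$ exists because $1^{h}=1<s$. Subdividing $[-M,M]^{h}$ into $m^{h}$ congruent subcubes of side length $2M/m$, each such subcube has Euclidean diameter $\tfrac{2M}{m}\sqrt{h}$. Since $s>m^{h}$, the pigeonhole principle produces indices $i\ne j$ for which $\mathbf{c}_i$ and $\mathbf{c}_j$ lie in a common subcube; as the elements of $S$ are pairwise distinct we have $x_i\ne x_j$, and
$$
\|\pi_{A^{\bot}}(x_i)-\pi_{A^{\bot}}(x_j)\|=\|\mathbf{c}_i-\mathbf{c}_j\|\le\frac{2M}{m}\sqrt{h}.
$$

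Finally I would convert this into the stated inequality by bounding $m$ from below. By the maximality of $m$ we have $(m+1)^{h}\ge s$, hence $m\ge s^{1/h}-1$; together with $m\ge 1$ an elementary case distinction (according to whether $s^{1/h}\le 2$ or $s^{1/h}>2$) shows $m\ge\max\{1,\,s^{1/h}-1\}\ge\tfrac{1}{2}s^{1/h}$ in every case. Using also $\sqrt{h}\le h$ (valid since $h\ge 1$), we obtain
$$
\frac{2M}{m}\sqrt{h}\ \le\ \frac{4M\sqrt{h}}{s^{1/h}}\ \le\ \frac{4hM}{s^{1/h}},
$$
which is precisely the asserted bound. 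The argument itself is routine; the only point needing a little care is the constant-chasing that yields the clean factor $4h$, namely choosing the grid parameter $m$ so that $m^{h}<s$ holds strictly and then invoking the two elementary estimates $\max\{1,s^{1/h}-1\}\ge\tfrac12 s^{1/h}$ and $\sqrt{h}\le h$.
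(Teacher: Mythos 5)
Your proof is correct and follows essentially the same route as the paper: project $S$ into the cube $[-M,M]^h$ inside $A^{\bot}$ via an orthonormal basis and apply a grid pigeonhole argument, then bound the diameter of a small cell. The only difference is cosmetic --- you choose an integer grid parameter $m$ with $m^h < s \le (m+1)^h$ and chase constants via $m \ge \tfrac12 s^{1/h}$ and $\sqrt{h}\le h$, whereas the paper takes cells of side length $4M/s^{1/h}$ directly; your version is in fact slightly tidier about the cell count being an integer.
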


\begin{proof}
If the set $S$ includes such elements $x_i$ and $x_j$ that
$\pi_{A^{\bot}}(x_i)=\pi_{A^{\bot}}(x_j)$, we are done. We can now suppose that
$\pi_{A^{\bot}}(S)$ has $s$ different elements. The space $ A^{\bot}$ has an orthonormal basis
$E=\{e_1,\dots,e_h\} \subseteq M_{n\times k}(\C)$. We can now define a cube
$$
C(M) =\left\{ \sum_{l=1}^{h} b_{l} e_{l} \,|\, b_{l} \in \R, -M \leq b_{l} \leq M \right\},
$$
 which has volume $(2M)^k$. The projection is a shrinking map and therefore
$\pi_{A^{\bot}}(S)\subset C(M)$.

The cube $C(M)$ can be  divided into smaller cubes of side  length $4M/s^{1/h}$. The $C(M)$ gets now partitioned to
$(2M)^h/(4M/s^{1/h})^h=s/2^h$ small cubes. As $\pi_{A^{\bot}}(S)$ has $s$ elements,
there must be at least two that are in the same cube of side length $4M/s^{1/h}$.
\end{proof}

\begin{corollary}\label{latticepigprincip}
Let us suppose that we have an $l$-dimensional lattice $\mathbf{L} \subset M_{n\times k}(\C)$,  where each of the basis elements $y_i$ has norm less than a constant $K$, and a subspace  $A\subset M_{n\times k}(\C)$.
The subset $\mathbf{L}(M)$ has  such a  non zero matrix  $z$ that
$$
||\pi_{A^{\bot}}(z)|| \leq  2^{l/h+2}hK M^{\frac{h-l}{h}},
$$
where $h$ is the real dimension of the space  $A^{\bot}$ and $M\geq4$.
\end{corollary}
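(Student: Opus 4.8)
The plan is to read this off from the subspace pigeon hole principle, Lemma \ref{pigprincip}, applied to a suitable finite window of the lattice $\mathbf{L}$. Two elementary facts drive the argument: $\pi_{A^{\bot}}$ is $\R$-linear, so that $\pi_{A^{\bot}}(x_i)-\pi_{A^{\bot}}(x_j)=\pi_{A^{\bot}}(x_i-x_j)$; and the difference of two distinct points of a lattice is again a \emph{nonzero} lattice vector — that difference will be the $z$ we want.

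Concretely, I would apply Lemma \ref{pigprincip} to the finite set $S=\mathbf{L}(N)$ with $N=\lfloor M/2\rfloor$. Halving is needed so that the difference $z=x_i-x_j$ of two elements of $S$ lies in $\mathbf{L}(2N)\subseteq\mathbf{L}(M)$, which is exactly the membership claimed for $z$; the hypothesis $M\ge 4$ is what guarantees $N\ge M/4\ge 1$, so that $S$ has more than one element. To feed Lemma \ref{pigprincip} I need two estimates. For the radius, the triangle inequality gives $\|x\|=\big\|\sum_{i=1}^{l}b_iy_i\big\|\le N\sum_{i=1}^{l}\|y_i\|<lNK$ for every $x\in S$, so $S$ lies in the ball of radius $lNK$; this quantity plays the role of ``$M$'' in the statement of Lemma \ref{pigprincip}. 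For the count, the coefficients run over the $2N+1$ integers in $[-N,N]$, so $|S|=(2N+1)^{l}$, and from $N\ge M/4$ one gets a clean lower bound of the shape $|S|>(M/2)^{l}$.

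Substituting $h=\dim_{\R}A^{\bot}$, the radius $lNK$, and $|S|$ into Lemma \ref{pigprincip} produces distinct $x_i,x_j\in S$, hence a nonzero $z=x_i-x_j\in\mathbf{L}(M)$, with
$$
\|\pi_{A^{\bot}}(z)\|\ \le\ 4h\,\frac{lNK}{|S|^{1/h}}\ \le\ 4h\,\frac{lK(M/2)}{(M/2)^{l/h}}\ =\ 2h\,lK\,2^{l/h}\,M^{\frac{h-l}{h}} ,
$$
using $N\le M/2$ in the numerator and $|S|^{1/h}>(M/2)^{l/h}$ in the denominator. The final step is to absorb the numerical factors into the form stated in the corollary. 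One should note that the exponent $\tfrac{h-l}{h}$ may be negative: this causes no trouble, since only the \emph{lower} estimate on $|S|$ is ever used, so the bound holds whether $A^{\bot}$ has dimension larger or smaller than the rank of $\mathbf{L}$. Also, unlike Lemma \ref{ortodecomp}, Lemma \ref{pigprincip} needs no hypothesis on $\dim A$ — any subspace has an orthonormal basis — so $A$ may be arbitrary here.

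The one genuinely fiddly point is this last constant bookkeeping: pinning down the exact coefficient means being careful with the floor functions hidden in $N=\lfloor M/2\rfloor$, with the contribution of the number $l$ of basis vectors, and with the degenerate case in which $\pi_{A^{\bot}}$ fails to be injective on $S$. That last case is harmless: two elements of $S$ then have the same projection, so their difference is a nonzero $z\in\mathbf{L}(M)$ with $\pi_{A^{\bot}}(z)=0$ and the asserted inequality is trivial — exactly as in the opening of the proof of Lemma \ref{pigprincip}. Everything else is a direct substitution.
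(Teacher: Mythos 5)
Your proof is correct and follows essentially the same route as the paper's: apply Lemma \ref{pigprincip} to the halved window $\mathbf{L}(M/2)$, use the linearity of $\pi_{A^{\bot}}$, and take the nonzero difference $z=x_i-x_j\in\mathbf{L}(M)$. The only difference is your more careful bookkeeping (the floor in $N=\lfloor M/2\rfloor$, the factor $l$ from the triangle inequality, and the degenerate non-injective case), which the paper's proof silently absorbs into the constant.
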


\begin{proof}
 Let us suppose that $y_1,\dots,y_l$ is the basis of the lattice $\mathbf{L}$. Every element of $\mathbf{L}$ is a $\Z$-linear combination of the basis elements and we have the inequality
$$ ||n_1y_1+\cdots+ n_l y_l || \leq \sum_{i=1}^l|n_i| ||y_i||. $$
Therefore, for any element  $x\in \mathbf{L}(M)$, we have
$$
||x|| \leq KM.
$$
  The set $\mathbf{L}(M/2)$ has  now at least $(M/2)^l$ elements. According to Lemma \ref{pigprincip} there is then a pair of elements  $x_i$ and $x_j$ such that
$$
 ||\pi_{A^{\bot}}(x_i)-\pi_{A^{\bot}}(x_j)||  \leq  4hK2^{l/h} \frac{M}{M^{l/h}}.
 $$

As $\pi_{A^{\bot}}$ is linear we have that  $||\pi_{A^{\bot}}(x_i- x_j)||  \leq   2^{l/h+2} hK\frac{M}{M^{l/h}}$.
Both $x_i$ and $x_j$ belong to  $\mathbf{L}(M/2)$ and therefore  $x_i-x_j \in \mathbf{L}(M)$ and we are done.
\end{proof}

\subsection{Upper bounds for the decay}

Let us give one more tool before stating the main result of the section.

Let us suppose that $X$ is a matrix in $M_{n\times k}(\C)$.
Let the set of row indices $J=\{1,\dots, n\}$ be partitioned to subset $J_1,\dots, J_r$. If we stack the rows in  $J_i$, we will get
a matrix in $M_{|J_i|\times k}(\C)$. Let us denote it with $X_i$. We then have the following.

\begin{lemma}\label{generalHadamard}
Let us suppose that $X\in M_{n\times k}(\C)$ and write $|J_i|=j_i$. We then have
$$
\det(XX^{\dagger})\leq \det(X_1X_1^{\dagger})\cdots \det(X_r X_r^{\dagger}) \leq \prod_{i=1}^{r} \frac{{||X_i||_F}^{2j_i}}{j_i^{j_i}}.
$$
\end{lemma}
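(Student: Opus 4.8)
The plan is to read both inequalities as standard determinantal estimates for the positive semidefinite Hermitian Gram matrix $G=XX^{\dagger}\in M_{n\times n}(\C)$.

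First I would simultaneously permute the rows and columns of $G$ so that the indices belonging to $J_1$ come first, then those of $J_2$, and so on; this permutation leaves $\det G$ unchanged and displays $G$ as an $r\times r$ array of blocks whose $(i,i')$ block is $X_iX_{i'}^{\dagger}$, so that the $i$th diagonal block is precisely $X_iX_i^{\dagger}$. Since $G\succeq 0$, the left-hand inequality is then exactly Fischer's inequality, the block generalization of Hadamard's inequality, which I would prove by induction on $r$. The case $r=2$ amounts to showing $\det G\le\det(A)\det(C)$ for a positive semidefinite $G$ with diagonal blocks $A=X_1X_1^{\dagger}$, $C=X_2X_2^{\dagger}$ and off-diagonal block $B$: if $A$ is singular, pick $v\neq 0$ with $Av=0$; then $(v,0)^{\dagger}G(v,0)=v^{\dagger}Av=0$, and since $G\succeq 0$ this forces $G(v,0)=0$, so $\det G=0$ and both sides vanish. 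If $A$ is nonsingular, hence positive definite, the Schur complement identity $\det G=\det(A)\det(C-B^{\dagger}A^{-1}B)$ together with $0\preceq C-B^{\dagger}A^{-1}B\preceq C$ (a Schur complement of a positive definite block is positive semidefinite, and $B^{\dagger}A^{-1}B\succeq 0$) gives $\det(C-B^{\dagger}A^{-1}B)\le\det C$, hence $\det G\le\det(A)\det(C)$. Grouping $J_2,\dots,J_r$ into one block and recursing on the corresponding Gram matrix yields $\det(XX^{\dagger})\le\prod_{i=1}^{r}\det(X_iX_i^{\dagger})$.

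For the right-hand inequality I would fix $i$ and let $\lambda_1,\dots,\lambda_{j_i}\ge 0$ be the eigenvalues of the $j_i\times j_i$ positive semidefinite matrix $X_iX_i^{\dagger}$. Then $\det(X_iX_i^{\dagger})=\prod_{m=1}^{j_i}\lambda_m$ and $\sum_{m=1}^{j_i}\lambda_m=\tr(X_iX_i^{\dagger})=||X_i||_F^{2}$, so the arithmetic--geometric mean inequality gives
$$
\det(X_iX_i^{\dagger})=\prod_{m=1}^{j_i}\lambda_m\le\left(\frac{1}{j_i}\sum_{m=1}^{j_i}\lambda_m\right)^{j_i}=\frac{||X_i||_F^{2j_i}}{j_i^{j_i}}.
$$
Multiplying over $i=1,\dots,r$ and combining with the first inequality completes the proof.

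I do not anticipate a serious obstacle: the argument is Fischer's inequality followed by AM--GM on eigenvalues. The only points needing a little care are the bookkeeping of the block structure of $XX^{\dagger}$ induced by the partition $J_1,\dots,J_r$, and checking that the chain of inequalities remains valid in the degenerate cases where some block $X_i$ has linearly dependent rows, so that $\det(X_iX_i^{\dagger})=0$; there one must verify, exactly as in the singular case above, that $\det(XX^{\dagger})$ also vanishes.
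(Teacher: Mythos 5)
Your proof is correct. It establishes exactly the two links the paper uses, but is more self-contained: where the paper simply cites the generalized Hadamard (Fischer) inequality from Gantmacher, you prove it from scratch via the Schur-complement identity $\det G=\det(A)\det(C-B^{\dagger}A^{-1}B)$, the positivity of the Schur complement, and determinant monotonicity on the positive semidefinite cone, with a clean treatment of the singular case (which also covers the degenerate situation where some $\det(X_iX_i^{\dagger})=0$, since dependent rows of $X_i$ are dependent rows of $X$). For the second inequality your route is slightly different from the paper's: you apply AM--GM directly to the eigenvalues of $X_iX_i^{\dagger}$, using $\sum_m\lambda_m=\tr(X_iX_i^{\dagger})=\norm{X_i}_F^2$, whereas the paper first applies Hadamard's inequality to the rows of $X_i$ and then AM--GM to the squared row norms. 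Both give the same bound $\norm{X_i}_F^{2j_i}/j_i^{j_i}$; your spectral version skips one intermediate inequality, while the paper's version needs nothing beyond the classical Hadamard bound. No gaps.
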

\begin{proof}
The  first inequality is the generalized Hadamard inequality \cite{Gant} page 254.  The second inequality is  then an application of Hadamard inequality to the rows of $X_i$'s and AM-GM inequality.
\end{proof}

The following theorem states that if the single user codes are large the overall code corresponding to each error event will automatically include matrices with small determinants. This can be seen as a trade-off between rate and coding gain.

\begin{thm}[Generalized pigeon hole bound]\label{maindecay}
Let us suppose we have a MIMO MAC code $(\textbf{L}_{1} (N_{1}), \textbf{L}_{2} (N_{2}), \dots, \textbf{L}_{U} (N_{U}))$ of $U$ users. Let us also suppose  that each user has $n_t$ transmission antennas and that the individual codes $\textbf{L}_i$ are   $2kn_t$-dimensional  lattice codes in $M_{n_t\times  k}(\C)$, where $k \geq Un_t$. We then have that
$$
D_{\I_u}(N_{i_1}, \dots, N_{i_u}) \leq K \prod_{l=1}^{u-1} N_{i_l}^{-\frac{n_{t}^2 (u-l)}{k-n_{t}(u-l)}}
$$
and in particular
$$
D_{\I_u}(N) \leq \frac{K}{N^{\alpha}},
$$
where $\alpha=\sum_{l=1}^{u-1} \frac{n_{t}^2 (u-l)}{k-n_{t}(u-l)}$ and $K$ is a fixed constant.
\end{thm}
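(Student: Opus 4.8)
My plan is to establish the first (product) inequality; the ``in particular'' statement then follows immediately by putting $N_{i_1}=\dots=N_{i_u}=N$ and adding the exponents. By Definition \ref{generaldecay}, $D_{\I_u}(N_{i_1},\dots,N_{i_u})$ is a minimum of $\sqrt{\det(MM^{\dagger})}$ over all tuples of \emph{nonzero} codewords, so it suffices to exhibit one admissible tuple $(X_{i_1},\dots,X_{i_u})$ for which $\sqrt{\det(MM^{\dagger})}$ is at most the claimed bound. I would build this tuple greedily in the order $i_u,i_{u-1},\dots,i_1$. For the last user I fix $X_{i_u}=B_{i_u,1}$, a basis matrix of $\mathbf{L}_{i_u}$; it lies in $\mathbf{L}_{i_u}(N_{i_u})$ and has a fixed norm. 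Having chosen $X_{i_u},\dots,X_{i_{j+1}}$, let $W_{j+1}\subseteq\C^{k}$ be the $\C$-span of all the rows of these $u-j$ matrices, and let $A_j\subseteq M_{n_t\times k}(\C)$ be the real subspace of all matrices each of whose $n_t$ rows lies in $W_{j+1}$; I then apply Corollary \ref{latticepigprincip} to the $(2kn_t)$-dimensional lattice $\mathbf{L}_{i_j}$ and the subspace $A_j$ to pick a nonzero $X_{i_j}\in\mathbf{L}_{i_j}(N_{i_j})$ with $\norm{\pi_{A_j^{\bot}}(X_{i_j})}$ small.

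The second step is to compute the exponents this produces. Writing the real inner product of $M_{n_t\times k}(\C)$ out row by row shows that $A_j^{\bot}$ is precisely the set of matrices each of whose rows lies in the complex orthogonal complement $W_{j+1}^{\bot}\subseteq\C^{k}$, so $h_j:=\dim_{\R}A_j^{\bot}=2n_t(k-\dim_{\C}W_{j+1})$; as $u\le U$, $k\ge Un_t$ and $\dim_{\C}W_{j+1}\le(u-j)n_t$, this is a positive number bounded in terms of $n_t,k,u$. Plugging $l=2kn_t$ into Corollary \ref{latticepigprincip} gives exponent $(h_j-l)/h_j=-\dim_{\C}W_{j+1}/(k-\dim_{\C}W_{j+1})$ and a prefactor depending only on $n_t,k,u$ and the fixed lattices, so $\norm{\pi_{A_j^{\bot}}(X_{i_j})}\le C\,N_{i_j}^{-\dim_{\C}W_{j+1}/(k-\dim_{\C}W_{j+1})}$. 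Now I split into cases according to the rank of the resulting matrix $M=M(X_{i_1},\dots,X_{i_u})$. If $\operatorname{rank}M<un_t$ then $\det(MM^{\dagger})=0$ and the asserted inequality is trivial. If $\operatorname{rank}M=un_t$ then, since $\dim_{\C}W_{j+1}\ge\dim_{\C}W_{j}-n_t$ and $W_1$ is the whole row space of $M$, a telescoping estimate gives $\dim_{\C}W_{j+1}\ge un_t-jn_t=(u-j)n_t$; together with the opposite inequality this forces $\dim_{\C}W_{j+1}=(u-j)n_t$ for all $j=1,\dots,u-1$, and the exponent above becomes exactly $-n_t(u-j)/(k-n_t(u-j))$.

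It remains to turn the small projections into a bound on $\det(MM^{\dagger})$. The point is that the complementary part $X_{i_j}-\pi_{A_j^{\bot}}(X_{i_j})\in A_j$ has all rows in $W_{j+1}$, i.e.\ in the span of the rows of the blocks $i_{j+1},\dots,i_u$ of $M$. Hence, running $j=1,2,\dots,u-1$ in this order --- so that at step $j$ the later blocks $i_{j+1},\dots,i_u$ have not yet been touched --- I can subtract from the $j$-th block of $M$ suitable linear combinations of rows of the later blocks, thereby replacing the $j$-th block by $\pi_{A_j^{\bot}}(X_{i_j})$ without changing $\det(MM^{\dagger})$, since adding to one row of a matrix a linear combination of its other rows leaves every maximal minor, hence $\det(MM^{\dagger})$, unchanged (Cauchy--Binet). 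Applying Lemma \ref{generalHadamard} to the resulting matrix $M'$ with the partition into its $u$ blocks of $n_t$ rows gives
$$
\det(MM^{\dagger})=\det(M'(M')^{\dagger})\le\frac{\norm{X_{i_u}}^{2n_t}}{n_t^{n_t}}\prod_{j=1}^{u-1}\frac{\norm{\pi_{A_j^{\bot}}(X_{i_j})}^{2n_t}}{n_t^{n_t}},
$$
and substituting the bounds of the second step (in the full-rank case) and taking square roots yields $D_{\I_u}(N_{i_1},\dots,N_{i_u})\le K\prod_{l=1}^{u-1}N_{i_l}^{-n_t^2(u-l)/(k-n_t(u-l))}$ with $K$ a constant depending only on $n_t,k,u$ and the fixed lattices. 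The finitely many tuples in which some $N_{i_j}$ is below the threshold $4$ needed by Corollary \ref{latticepigprincip} are absorbed into $K$, because $D_{\I_u}$ is in any case bounded above (evaluate it at the basis matrices). I expect the two genuinely delicate points to be (i) checking that the row reductions in the last step can be ordered so that each subtracted vector truly lies in the span of rows not yet modified, and (ii) the rank dichotomy, which is exactly what forces the denominators to be $k-n_t(u-l)$ rather than something larger.
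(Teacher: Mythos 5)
Your proposal is correct and follows essentially the same route as the paper's proof: fix a bounded codeword for the last user, repeatedly invoke Corollary \ref{latticepigprincip} to choose each remaining user's codeword with a small projection onto the orthogonal complement of the span of the rows of the already-chosen blocks, then replace blocks by their projections without changing $\det(MM^{\dagger})$ and finish with Lemma \ref{generalHadamard}. Your explicit use of complex row spans for the dimension count, the rank dichotomy, and the unimodular/Cauchy--Binet justification of the row replacement are just carefully spelled-out versions of steps the paper handles via Lemma \ref{matriisitulo} and leaves partly implicit, so this is the same argument rather than a different one.
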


The proof of the previous theorem is given only in the Appendix, but let us describe the proof in a simple case of three users in the following example.
\begin{exam}\label{mainexample}
 Let us suppose we have $U=3$ users, each transmitting with $n_t=1$ antenna, and that the code length is $k=Un_t=3$. For simplicity, let us also assume  that $N_1=N_2=N_3=N$. Let us now study  the behavior of the decay function $D(N)$ in this scenario.

Let us first fix some small $C_3 \in \mathbf{L}_{3}(N)$.  Hence $||C_3||=\OO(1)$.  Let  $W_3 =  \R(C_3) $ and let $V_3 = W_{3}^{\bot}$ be its orthogonal complement. The corresponding orthogonal projection is $\pi_{3}: M_{1 \times 3}(\C) \rightarrow V_3$.

A subspace $V_3$ has $\dim_{\R}(V_3)=2n_{t}k-\dim_{\R}(W_3)=6-2=4$ so the image $\pi_{3}(\mathbf{L}_{2}(N))$ falls into a $4$-dimensional hypercube with side length of size $\OO(N)$. We also have $|\mathbf{L}_{2}(N)|=\theta(N^{6})$ and therefore by pigeon hole principle, we have such $C_{2} \in \mathbf{L}_{2}(N))$ that
$$
\pi_{3}(C_{2})=\OO(\sqrt[4]{\frac{N^{4}}{N^{6}}})=\OO(N^{-\frac{1}{2}}).
$$

Now similarly build $V_{2}=W_{2}^{\bot}$ by setting $W_{2} = \R(C_2 , C_3)$. This gives $\dim_{\R}(V_{2})=2$. And again we find such $C_{1}$ that
$$
\pi_{2}(C_{1})=\OO(\sqrt{\frac{N^{2}}{N^{6}}})=\OO(N^{-2})
$$
with $\pi_{2}: M_{1 \times 3}(\C) \rightarrow V_{2}$ being an orthogonal projection.

Hence we have a matrix $A$ in our code such that the determinant $\det(AA^{\dag})$ is by Lemmas \ref{matriisitulo} and \ref{generalHadamard} of size
$$
\OO(N^{2 \cdot (-\frac{1}{2})} \cdot N^{2 \cdot (-2)})) = \OO(N^{-5})
$$
and hence $D(N)=\OO(N^{-\frac{5}{2}})$.

Note that in the special case of this example $A$ is a square matrix and hence it is not necessary to use Lemma \ref{matriisitulo}. It is enough to notice that $\det(AA^{\dag}) = |\det(A)|^2$ and
$$
\det(A)
=\begin{pmatrix}
C_1\\
C_2\\
C_3
\end{pmatrix}
=\begin{pmatrix}
\pi_{2}(C_1)\\
C_2\\
C_3
\end{pmatrix}
=\begin{pmatrix}
\pi_{2}(C_1)\\
\pi_{3}(C_2)\\
C_3
\end{pmatrix}
$$
and then use Lemma \ref{generalHadamard} to estimate the size of the determinant $\det(AA^{\dag})$.
\end{exam}

\section{Lower bounds for the decay by explicit constructions}\label{construdiscussion}
Due to the strongly algebraic nature  of our constructions in Section \ref{construction},  we will only preview the results of that section here and give a simple example of our approach.

 The main result in \ref{construction} is   a completely general construction for MIMO MAC codes where each of the $U$ single users with $n_t$ antennas, will have a lattice code $\mathbf{L}_i$ in $M_{n_t\times Un_t}(\C)$ and where the overall MAC code $C_{U,n_t}=(\mathbf{L}_1,\mathbf{L}_2,\dots,\mathbf{L}_U)$ has  the generalized full rank property and promising decay properties.  The goal of these constructions is to give a general construction of MIMO MAC codes that would have as good as possible decay in every error event.

The natural comparison for the following result is the upper bound of Theorem \ref{maindecay}.
Let $C_{U,n_t}$  be a MAC code build in Section \ref{construction}. We then have the following two  results.
\begin{result}[Theorem \ref{inertdecay3}]
For a code $C_{U,n_t}$ there exists a constant $K>0$ such that
$$
D_{\I_u} (N_{i_1}, \dots, N_{i_u}) \geq \left\{
  \begin{array}{l l}
    \frac{K}{(N_{i_1}\cdots N_{i_u})^{(U-1)n_t}} & \quad \text{if $u>1$}\\
    K & \quad \text{if $u=1$}.
  \end{array} \right.
$$
In particular we have
$$
D_{\I_u} (N) \geq \left\{
  \begin{array}{l l}
    \frac{K}{N^{u(U-1)n_t}} & \quad \text{if $u>1$}\\
    K & \quad \text{if $u=1$}.
  \end{array} \right.
$$
\end{result}

The following result proves that in the case where we let only the rate of a fixed single user grow  we achieve the optimal decay.
\begin{result}[Corollary \ref{pigeon2}]
For a code $C_{U,n_t}$ there exists constants $k>0$ and $K>0$ such that
$$
\frac{k}{N^{(U-1)n_t}} \leq D(N_1=N, N_2= \ldots =N_U=1) \leq \frac{K}{N^{(U-1)n_t}}.
$$
\end{result}

Let us now give an example of our constructions.
\begin{exam}
Let us  suppose that we have $U$ users, each having a single antenna.

We can now find a degree  $U$ cyclic extension $L/\Q(i)$ and  have the Minkowski embedding $\psi: L\mapsto \C^U$, where
$$
\psi(x)=(x,\sigma(x),\dots, \sigma^{n-1}(x)),
$$
for  $x\in L$. If we concentrate on  the ring of algebraic integers $\OO_L$, we have that
$\psi(\OO_L)$ is a $2n$-dimensional lattice in $\OO_L$.

Let us  now suppose that $p$ is a prime number which is totally inert  in the extension $L/\Q$.
We can then modify the embedding $\psi$ to get $U$ single user lattice codes $\psi_{i, p^{-1}}(\OO_L)$, where
$$
\psi_{i, p^{-1}}(x)=(x,\sigma(x), \dots p^{-1}\sigma^{i-1}(x) \dots, \sigma^{U-1}(x)).
$$

A  $U$ user MAC code $C_{U,1}$ can now be defined by
$$
(\psi_{1,p^{-1}}(\OO_L),\psi_{2,p^{-1}}(\OO_L) \dots,  \psi_{U,p^{-1}}(\OO_L)),
$$
where each  of  the single user codes are previously defined  $2U$-dimensional lattices in $\C^U$.

The overall code matrices in this MAC code then have the form
\begin{equation}\label{singexam}
\left(
         \begin{array}{ccccc}
           p^{-1}x_1   &         \sigma(x_1) &       \sigma^2(x_1) & \ldots &         \sigma^{U-1}(x_1) \\
                 x_2   & p^{-1}  \sigma(x_2) &       \sigma^2(x_2) & \ldots &         \sigma^{U-1}(x_2) \\
                 x_3   &         \sigma(x_3) & p^{-1}\sigma^2(x_3) & \ldots &         \sigma^{U-1}(x_3) \\
           \vdots      & \vdots              & \vdots              &        & \vdots                    \\
                 x_U   &         \sigma(x_U) &       \sigma^2(x_U) & \ldots & p^{-1}  \sigma^{U-1}(x_U) \\
         \end{array}
       \right).
\end{equation}
The key  to the generalized rank criteria is the choice of the element $p$.
 By analyzing the $p$-adic valuation of the  determinant of the overall code matrix \eqref{singexam}, we find that when all the rows are non-zero,  the valuation of the determinant can not be $\infty$ and therefore  the determinant can not be zero.

 We could have  also used  for example transcendental element on place of $p^{-1}$, but as we will later see, from the decay point of view it is crucial  that the element $p^{-1}$ is from the field $L$. By such choice of the diagonal  element, all the elements in codematrices are from a low degree number field $L$ and we can effectively use results from Diophantine approximation to prove results about the decay.

If we suppose that $U=3$ Theorem \ref{inertdecay3} gives us that
$$
D_{\{1,2,3\}} (N,N,N) \geq \frac{K}{N^{6}},
$$
for some constant $K$.
\end{exam}

\section{ Lower Bounds for the DMT of our constructions}\label{DMTanalysis}
It was proved  in \cite{EKPKL} and \cite{TV}  that if a  $2n_tk$ dimensional lattice code $L\subset M_{n_t\times k}(\C)$ has the NVD property, then it is DMT optimal in the single user $n_t\times n_r$ MIMO channel.  It is  a natural idea to see what can be said about the DMT of the codes constructed in this paper based on the lower bounds for the decay of determinants.

Having obtained in Theorem 8.3 a lower bound on the minimum determinant among all the nonzero code matrices in code $C_{U,n_t}$, in this section we will apply this  bound to investigate the diversity-multiplexing gain tradeoff (DMT) achieved by $C_{U,n_t}$.
We note that in this section the analyzed codes $C_{U,n_t}$ are the MAC codes we constructed in Section \ref{construction}.

\subsection{Some Preliminaries}

We first introduce a power constraint on the transmitted signal matrices and reformulate the MIMO multiple-access channel as
\[
Y \ = \ \sum_{i=1}^U\kappa_i  H_i  X_i + W,
\]
where $H_i \in M_{n_r \times n_t}(\C)$ is the channel matrix of the $i$th user, and $W\in M_{n_r \times k}(\C)$ is the white noise matrix; both are defined as before. Realizations of $H_i$ are known perfectly to the receiver but are unknown to the users. $X_i \in {\bf L}_i \left( N_i \right) \subset M_{n_t \times k}(\C)$ is the signal matrix transmitted by the $i$th user. $\kappa_i \in \R^+$ is an amplification factor such that the average signal-to-noise power ratio (SNR) of the $i$th user equals $\snr$, \emph{i.e.},
\[
\kappa_i^2 \frac{1}{\abs{{\bf L}_i \left( N_i \right)}} \sum_{X_i \in {\bf L}_i \left( N_i \right)} \norm{X_i}^2 \ = \ \snr.
\]
Having specified $N_i$, the transmission rate, in bits per channel use, of the $i$th user is
\[
R_i  \ = \ \frac{1}{k} \log_2 \abs{{\bf L}_i \left( N_i \right)}.
\]
By increasing the rate $R_i$ as a linear function of $\log_2 \snr$ as $\snr \to \infty$, following \cite{ZheTse}, we say the $i$th user transmits at multiplexing gain $r_i$ if
\[
\lim_{\snr\to\infty} \frac{R_i}{\log_2 \snr} \ = \ r_i.
\]
Equivalently, we shall adopt the dotted-equality notation~\footnote{Let $f(\snr)$ and $g(\snr)$ be two functions of $\snr$. We say $f(\snr) \doteq g(\snr)$ if $\lim_{\snr\to\infty} \frac{f(\snr)}{\log_2 \snr} = \lim_{\snr\to\infty} \frac{g(\snr)}{\log_2 \snr}$. The dotted inequalities such as $\dot\geq$, $\dot\leq$, $\dot>$, and $\dot<$ are defined similarly.} introduced in \cite{ZheTse} to rewrite the above as
\[
\abs{{\bf L}_i \left( N_i \right)} \ \doteq \ \snr^{k r_i}.
\]
As ${\bf L}_i \left( N_i \right)=\left\{ \sum_{j=1}^{2 n_t k} b_j B_{i,j} | b_j\in \Z, \abs{b_j} \leq N_i\right\}$ and $\abs{{\bf L}_i \left( N_i \right)} = \left( 2 N_i+1\right)^{2 n_t k}$, transmitting at multiplexing gain $r_i$ implies that
\begin{equation}
N_i \ \doteq \ \snr^{\frac{r_i}{2 n_t}}.
\end{equation}
On the other hand, note that the basis matrices $B_{i,1}, \ldots, B_{i,2 n_t k}$ are constant matrices and are independent of $\snr$. It can be shown that for $N_i \dot> \snr^0$
\[
\frac{1}{\abs{{\bf L}_i \left( N_i \right)} } \sum_{X_i \in {\bf L}_i \left( N_i \right)} \norm{X_i}^2 \ \doteq \ N_i^2.
\]
Hence, the amplification factor associated with a fixed multiplexing gain $r_i$ is
\begin{equation}
\kappa_i^2 \ \doteq \ \snr^{1-\frac{r_i}{n_t}}.
\end{equation}
Finally, we say the code $C_{U,n_t}$ achieves diversity gain $d(r_1, \ldots, r_U)$ if
the codeword error probability $P_{\text{cwe}}(r_1, \ldots, r_U)$ subject to the joint maximal-likelihood decoding of $(X_1, \ldots, X_U)$ at the receiver satisfies
\[
P_{\text{cwe}}(r_1, \ldots, r_U) \ \doteq \ \snr^{-d(r_1, \ldots, r_U)}.
\]
The function $d(r_1, \ldots, r_U)$ is also termed the MIMO MAC DMT for code $C_{U,n_t}$. It is known \cite{Tse,LuHoVeLa} that $d(r_1, \ldots, r_U)$ is upper bounded by
\begin{multline}
\lefteqn{d(r_1, \ldots, r_U)}
 \leq  \min\left\{ d^*_{u n_t, n_r} \left( \sum_{i \in {\cal I}} r_i \right) | \begin{array}{l}
{\cal I} \subseteq \{1,2,\ldots,U\}, \\
\abs{{\cal I}}=u, u=1,\ldots,U
\end{array} \right\} \yesnumber \label{eq:upperbound1}
\end{multline}
where the RHS represents the best possible diversity gain that can be achieved by any MIMO MAC codes when transmitted at multiplexing gains $r_1, \ldots, r_U$, respectively. The function $d^*_{m,n}(r)$ is the optimal DMT for a point-to-point MIMO channel with $m$ transmitting and $n$ receiving antennas and transmitting at multiplexing gain $r$. In particular, $d^*_{m,n}(r)$ is a piecewise linear function obtained by joining the points $(r,(m-r)(n-r))$ for $r=0,1,\ldots,\min\{m,n\}$.

\subsection{Lower  Bounds on $d(r_1, \ldots, r_U)$}

To analyze $d(r_1, \ldots, r_U)$ for code $C_{U,n_t}$, let ${\cal E}_{\cal I}$ denote the event that only the signals of users in set ${\cal I}$, ${\cal I} \subseteq {\cal U} = \{1,2,\ldots,U\}$, are erroneously decoded. Clearly, the overall error event is ${\cal E}=\bigcup_{{\cal I} \subseteq {\cal U}} {\cal E}_{\cal I}$, and the codeword error probability is upper bounded by
\begin{equation}
P_{\text{cwe}}(r_1, \ldots, r_U) \ = \ \Pr \{ {\cal E} \} \ \leq \ \sum_{{\cal I} \subseteq {\cal U}} \Pr\left\{ {\cal E}_{\cal I} \right\}. \label{eq:pcwe}
\end{equation}
To further upper-bound the probability of error event ${\cal E}_{\cal I}$ we employ the bounded-distance decoder introduced in \cite{EKPKL}. Specifically, given the channel matrices $H_i$, $i=1,\ldots,U$, the bounded-distance decoder searches for code matrices $(X_1, \ldots, X_U)$ such that $\norm{Y - \sum_{i=1}^U \kappa_i H_i X_i} < \frac12 d_{\min}(H_1, \ldots, H_U)$, where $d_{\min}(H_1, \ldots, H_U)$ is the minimum distance among matrices $\sum_{i=1}^U \kappa_i H_i X_i$ for all $X_i \in {\bf L}_i(N_i)$, that is,
\begin{multline*}
d_{\min}(H_1, \ldots, H_U)
= \min\left\{ \norm{\sum_{i=1}^U \kappa_i H_i \Delta X_i} |
\begin{array}{l}
\Delta X_i \in {\bf L}_i(2N_i) \text{ and } \\
 \text{ not all } \Delta X_i = {\bf 0}
 \end{array} \right\}.
\end{multline*}
It should be noted that as the matrices $H_1, \ldots, H_U$ are random, the minimum distance $d_{\min}(H_1, \ldots, H_U)$ is indeed a nonnegative random variable. Furthermore, it is clear that
\[
\Pr \{ {\cal E} \} \ \leq \ \Pr \left\{ \norm{W} \ \geq \ \frac12 d_{\min}(H_1, \ldots, H_U)  \right\}
\]
where $W$ is the noise matrix, and the RHS represents an upper bound on the probability of decoding error/failure of such bounded distance decoder. Now focusing on the error event ${\cal E}_{\cal I}$, where only the signals of users in set ${\cal I}=\{i_1, \ldots, i_u\}$ and $\abs{{\cal I}}=u$ are decoded in error. We define the corresponding minimum distance in this case by
\begin{multline*}
d_{\min}(H_{i_1}, \ldots, H_{i_u})
= \min\left\{ \norm{\sum_{i \in {\cal I}} \kappa_i H_i \Delta X_i} |
\begin{array}{l}
\Delta X_i \in {\bf L}_i(2N_i) \text{ and }\\
\text{ not all } \Delta X_i = {\bf 0}
\end{array}\right\}.
\end{multline*}
Similarly, it can be shown that
\begin{equation}
\Pr \{ {\cal E}_{\cal I} \} \ \leq \ \Pr \left\{ \norm{W} \ \geq \ \frac12 d_{\min}(H_{i_1}, \ldots, H_{i_u})  \right\}. \label{eq:bdd}
\end{equation}
The lower bound on the minimal determinant given in Theorem 8.3 then allows us to obtain a lower bound on $d_{\min}(H_{i_1}, \ldots, H_{i_u}) $ and therefore leads to a further upper bound on the error probability $\Pr \{ {\cal E}_{\cal I} \}$. The proof of the following Theorem will be given in Appendix.
\begin{thm} \label{thm:E_I}
For a MIMO MAC code $C_{U,n_t} \subset M_{U n_t \times k}(\C)$ of $U$ users defined as before, assume the users transmit at multiplexing gains $r_1, \ldots, r_U$, respectively. The probability of event ${\cal E}_{\cal I}$ that only the signals of users in set ${\cal I}$, ${\cal I} \subseteq {\cal U}$, are erroneously decoded is upper bounded by
\begin{equation}
 \Pr \{ {\cal E}_{\cal I} \} \ \dot\leq \ \left\{
 \begin{array}{ll}
 \snr^{-d^*_{un_t, n_r}(U\sum_{i \in {\cal I}} r_i)}, & \text{ if $\abs{{\cal I}}=u > 1$,}\\
 \snr^{-d^*_{n_t, n_r}(r_i)}, & \text{ if ${\cal I}=\{i\}$.}
 \end{array} \right.
\end{equation}
\end{thm}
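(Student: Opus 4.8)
The plan is to run the bounded-distance-decoder DMT analysis of \cite{EKPKL} (as recalled above, culminating in \eqref{eq:bdd}) on the ``super-user'' attached to the index set $\mathcal{I}$, using the minimum-determinant lower bound of Theorem~\ref{inertdecay3} as the only code-dependent input. First, stack the users of $\mathcal{I}$: writing $M:=M(\Delta X_{i_1},\dots,\Delta X_{i_u})\in M_{un_t\times k}(\C)$, $H_{\mathcal{I}}:=[\,H_{i_1}\,|\cdots|\,H_{i_u}\,]\in M_{n_r\times un_t}(\C)$ (whose entries are again i.i.d.\ complex Gaussian), and $D_\kappa:=\diag(\kappa_{i_1}I_{n_t},\dots,\kappa_{i_u}I_{n_t})$, one has $\sum_{i\in\mathcal{I}}\kappa_i H_i\Delta X_i=H_{\mathcal{I}}\,\widetilde M$ with $\widetilde M:=D_\kappa M$. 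Since a decoding error of type $\mathcal{E}_{\mathcal{I}}$ perturbs the transmitted matrix in exactly the blocks indexed by $\mathcal{I}$, only difference matrices $M$ whose every block $\Delta X_{i_j}$ is nonzero are relevant, and \eqref{eq:bdd} then bounds $\Pr\{\mathcal{E}_{\mathcal{I}}\}$ by $\Pr\{\norm{W}\geq\tfrac12\,d_{\min}\}$, where $d_{\min}^2=\min\norm{H_{\mathcal{I}}\widetilde M}^2$ over such $\widetilde M$.

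Next, feed in the determinant bound. By the generalized rank criterion and Theorem~\ref{inertdecay3}, for every such $M$ we have $\det(MM^\dagger)\geq D_{\mathcal{I}_u}(2N_{i_1},\dots,2N_{i_u})^2$, which, using $N_{i_j}\doteq\snr^{r_{i_j}/(2n_t)}$, is $\dot\geq\snr^{-(U-1)\sum_{i\in\mathcal{I}}r_i}$ when $u>1$ and $\dot\geq\snr^{0}$ when $\mathcal{I}=\{i\}$ (the single-user NVD case). Since $\det(\widetilde M\widetilde M^\dagger)=(\det D_\kappa)^2\det(MM^\dagger)=\bigl(\prod_{j}\kappa_{i_j}^{n_t}\bigr)^2\det(MM^\dagger)$ and $\kappa_{i_j}^2\doteq\snr^{1-r_{i_j}/n_t}$, this yields
\[
\det(\widetilde M\widetilde M^\dagger)\ \dot\geq\ \snr^{\,un_t-\rho},\qquad \rho:=\begin{cases}U\sum_{i\in\mathcal{I}}r_i,& u>1,\\ r_i,& \mathcal{I}=\{i\},\end{cases}
\]
together with the crude norm bound $\norm{\widetilde M}^2=\sum_j\kappa_{i_j}^2\norm{\Delta X_{i_j}}^2\dot\leq\max_j\kappa_{i_j}^2N_{i_j}^2\doteq\snr$. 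Thus $\widetilde M$ runs over a finite code (of cardinality $\doteq\snr^{k\sum_{i\in\mathcal{I}}r_i}$) all of whose nonzero differences behave, on the point-to-point $un_t\times n_r$ channel $H_{\mathcal{I}}$, like the differences of a code of multiplexing gain $\rho$.

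From here the argument is the standard one of \cite{EKPKL,TV}: lower-bound $\norm{H_{\mathcal{I}}\widetilde M}^2$ by a product of the ordered eigenvalues of $H_{\mathcal{I}}H_{\mathcal{I}}^\dagger$ weighted against partial products of the squared singular values of $\widetilde M$ --- the latter controlled from below through Lemma~\ref{generalHadamard} and $\det(\widetilde M\widetilde M^\dagger)$, and from above through $\norm{\widetilde M}^2\dot\leq\snr$. Because $\norm{W}^2$ does not scale with $\snr$, $\Pr\{\norm{W}\geq\tfrac12 d_{\min}\}$ decays faster than any negative power of $\snr$ unless $d_{\min}^2\dot\leq\snr^0$, so $\Pr\{\mathcal{E}_{\mathcal{I}}\}\dot\leq\Pr\{d_{\min}^2(H_{i_1},\dots,H_{i_u})\dot\leq\snr^0\}$; evaluating this last probability by the Laplace-principle computation over the joint law of the channel-eigenvalue exponents (exactly as in \cite{ZheTse,EKPKL}) produces $\snr^{-d^*_{un_t,n_r}(\rho)}$, which is the asserted bound (trivially so when $\rho\geq\min\{un_t,n_r\}$, where $d^*=0$).

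The hard part is this last step in the range $n_r<un_t$, where $H_{\mathcal{I}}$ is rank-deficient and $d^*_{un_t,n_r}$ is no longer governed by the full product of the channel eigenvalues; there one must invoke the Tavildar--Viswanath ``approximate universality'' refinement, restricting to the $n_r$ channel directions of non-negligible gain and checking that the induced restriction of the effective code $\{\widetilde M\}$ still has sufficient spread. This is precisely what the determinant lower bound of Theorem~\ref{inertdecay3} combined with $\norm{\widetilde M}^2\dot\leq\snr$ secures (these two force every squared singular value of $\widetilde M$ to be $\dot\geq\snr^{\,1-\rho}$). The only other delicate bookkeeping is tracking the distinct amplification factors $\kappa_{i_j}$ of users with unequal multiplexing gains and checking that all the $\doteq$-estimates and absolute constants survive; following the paper's announcement, these verifications belong in the Appendix.
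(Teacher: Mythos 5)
Your proposal is correct and follows essentially the same route as the paper: stack the users of ${\cal I}$ into a super-user, bound $\Pr\{{\cal E}_{\cal I}\}$ via the bounded-distance decoder \eqref{eq:bdd}, feed in the decay bound of Theorem \ref{inertdecay3} (which, together with $\kappa_{i_j}^2 \doteq \snr^{1-r_{i_j}/n_t}$ and $\norm{\Delta \tilde{X}_{\cal I}}^2 \dot\leq \snr$, inflates the effective multiplexing gain to $U\sum_{i\in{\cal I}}r_i$), and finish with the standard eigenvalue-mismatch/outage computation giving $\snr^{-d^*_{un_t,n_r}(U\sum_{i\in{\cal I}}r_i)}$. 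The only differences are presentational: the paper writes out the per-$t$ mismatch/AM-GM chain and cites \cite{Paw09} for the final outage evaluation, whereas you invoke the equivalent singular-value (approximate-universality) formulation of \cite{EKPKL,TV}, and your restriction to differences with all blocks nonzero is the intended (and cleaner) reading of the event ${\cal E}_{\cal I}$.
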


Applying Theorem \ref{thm:E_I} to the union bound of $\Pr\{{\cal E}\}$ in \eqref{eq:pcwe} we immediately arrive at the following corollary. This gives a lower bound on the MAC DMT for code $C_{U,n_t}$.

\begin{corollary}[Lower bound on $d(r_1, \ldots, r_U)$]
For a MIMO MAC code $C_{U,n_t} \subset M_{U n_t \times k}(\C)$ of $U$ users defined as before, the corresponding MAC DMT is lower bounded by
\begin{multline}
d(r_1, \ldots, r_U)   
\hspace{0.1in} \geq \min\left\{
\begin{array}{l}
d_{u n_t, n_r}^* \left( U \sum_{i \in {\cal I}} r_i \right), d_{n_t, n_r}^*(r_j) |  \\
\qquad {\cal I} \subseteq {\cal U}, \abs{{\cal I}}=u > 1, j \in {\cal U}
\end{array} \right\}.
\label{eq:gendmtlb}
\end{multline}
In particular, for the symmetric MIMO MAC where $r_1=r_2=\cdots=r_U=r$, we have
\begin{multline}
\hspace{-0.15in}d(r, \ldots, r)  \geq   \min\left\{ d_{u n_t, n_r}^* \left( U u r \right), d_{n_t, n_r}^*(r) | u=2,\ldots,U \right\}\\
= \left\{
\begin{array}{ll}
\min\{ d_{2 n_t, n_r}^* \left( 2 U r \right), d_{n_t, n_r}^*(r)\}, & \text{ if $r\in[0,\theta]$}\\
\min\{ d_{U n_t, n_r}^* \left( U^2 r \right), d_{n_t, n_r}^*(r)\} & \text{ if $r\in[\theta, \frac{n_t}{U} ]$}
\end{array}
\right.,
 \label{eq:symdmtlb}
 \end{multline}
 where $\theta=\min\{ \frac{n_t}{U}, \frac{n_r}{U(U+2)}\}$.
\end{corollary}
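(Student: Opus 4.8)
The heavy lifting for this corollary is already contained in Theorem \ref{thm:E_I}, whose proof is deferred to the Appendix; what remains is a union bound together with, in the symmetric case, an optimization over the size $u$ of the error profile. The plan is as follows.

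First I would note that the sum in \eqref{eq:pcwe} runs over the $2^{U}-1$ nonempty subsets ${\cal I}\subseteq{\cal U}$, a number independent of $\snr$. Since a finite sum of terms $\snr^{-a_{{\cal I}}}$ satisfies $\sum_{{\cal I}}\snr^{-a_{{\cal I}}}\doteq\snr^{-\min_{{\cal I}}a_{{\cal I}}}$, substituting the two estimates of Theorem \ref{thm:E_I} into \eqref{eq:pcwe} gives
\[
P_{\text{cwe}}(r_1,\ldots,r_U)\ \dot\leq\ \snr^{-d_{\mathrm{LB}}},
\]
where $d_{\mathrm{LB}}$ is exactly the right-hand side of \eqref{eq:gendmtlb}, namely the minimum of $d^*_{un_t,n_r}\bigl(U\sum_{i\in{\cal I}}r_i\bigr)$ over subsets ${\cal I}$ of size $u>1$ together with $d^*_{n_t,n_r}(r_j)$ over singletons $j$. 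Since $P_{\text{cwe}}\doteq\snr^{-d(r_1,\ldots,r_U)}$ by definition, this yields $d(r_1,\ldots,r_U)\geq d_{\mathrm{LB}}$, which is \eqref{eq:gendmtlb}.

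For \eqref{eq:symdmtlb} I would put $r_1=\cdots=r_U=r$, so that $U\sum_{i\in{\cal I}}r_i=Uur$ depends only on $u=\abs{{\cal I}}$ and all subsets of a common size give the same exponent; \eqref{eq:gendmtlb} then reduces to $d(r,\ldots,r)\geq\min\{\,d^*_{un_t,n_r}(Uur),\,d^*_{n_t,n_r}(r)\mid u=2,\ldots,U\,\}$. The remaining, and only nontrivial, task is to show that $\min_{2\leq u\leq U}d^*_{un_t,n_r}(Uur)$ is attained at the endpoint $u=2$ when $r\leq\theta$ and at $u=U$ when $r\geq\theta$, for $\theta=\min\{n_t/U,\,n_r/(U(U+2))\}$. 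The guiding picture is that $u\mapsto d^*_{un_t,n_r}(Uur)$ behaves like a concave function of $u$: on each linear segment of the DMT curve $d^*_{un_t,n_r}(\cdot)$ it is a downward parabola in $u$ with leading coefficient $-Un_tr$, and the restriction $r\leq n_t/U$ is precisely what keeps this concave trend from being overturned by the convex kinks that occur when $Uur$ crosses an integer; near $r=0$ the values $un_tn_r=d^*_{un_t,n_r}(0)$ increase in $u$, while for larger $r$ the $u=U$ term falls off fastest, its argument growing in $r$ at rate $U^2$ against $2U$ for $u=2$. Making this precise requires tracking, for each $u$ and each range of $r$, which linear segment of $d^*_{un_t,n_r}$ is active, and then comparing the two candidate minimizers; solving $d^*_{2n_t,n_r}(2Ur)=d^*_{Un_t,n_r}(U^2r)$ on the relevant segments produces $\theta$, the minimum in its definition corresponding to the two possible pairs of active segments according to whether $n_r$ is large or small relative to $n_t(U+2)$ (when $n_r\geq n_t(U+2)$ the second branch is vacuous and the $u=2$ term always dominates). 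For $U=2$ the two branches of \eqref{eq:symdmtlb} coincide and there is nothing to prove.

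I expect the genuinely delicate step to be this last optimization. Because the argument $Uur$ can land on different linear pieces of $d^*_{un_t,n_r}$ for different $u$ and different $r$, there is no one-line convexity argument, and one must carry out the case analysis over the position of $Uur$ relative to $\min\{un_t,n_r\}$ and over the ordering of $n_t$, $n_r$ and $U$, verifying in each case both that the minimum over $u$ sits at an endpoint and that the crossover is exactly $\theta$. By contrast, the union bound, the finiteness of the number of error profiles, and the passage to the symmetric case are entirely routine.
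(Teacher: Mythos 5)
Your proposal matches the paper's proof in essence: the bound \eqref{eq:gendmtlb} follows exactly as you say, from the union bound \eqref{eq:pcwe}, Theorem \ref{thm:E_I}, and the fact that a finite sum of SNR powers is dotted-equal to its largest term, and the symmetric case reduces to the endpoint optimization of $\min_{2\leq u\leq U} d^*_{un_t,n_r}(Uur)$ with crossover at $\theta$. The only difference is that the paper does not carry out that final optimization either; it invokes the analogous argument of \cite[Sec.~VIII]{Tse} with $\gamma=Ur$ and $\theta'=\min\{n_t,\,n_r/(U+2)\}$, which is precisely the case analysis you flag as the delicate step.
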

\begin{proof}
It simply follows from \eqref{eq:pcwe}, after noting  that when  $\snr \to \infty$ we have
\[
P_{\text{cwe}}(r_1, \ldots, r_U) \ \leq \ \sum_{{\cal I} \subseteq {\cal U}} \Pr\left\{ {\cal E}_{\cal I} \right\} \ \doteq \ \max_{{\cal I} \subseteq {\cal U}} \Pr\left\{ {\cal E}_{\cal I} \right\}.
\]
The second equality in \eqref{eq:symdmtlb} can be shown by arguing similarly as in \cite[Sec. VIII]{Tse} that
\begin{multline*}
\min\left\{ d_{u n_t, n_r}^* \left( u \gamma \right)| u=2,\ldots,U \right\}
= \left\{
\begin{array}{ll}
d_{2 n_t, n_r}^* \left( 2 \gamma \right) & \text{ if $\gamma\in[0,\theta']$}\\
d_{U n_t, n_r}^* \left( U \gamma \right) & \text{ if $\gamma\in[\theta',n_t ]$}
\end{array}
\right.
\end{multline*}
where $\theta' = \min\{n_t, \frac{n_r}{U+2}\}$ and by setting $\gamma=U r$.
\end{proof}

To summarize, in this section we have presented a general lower bound \eqref{eq:gendmtlb} on the DMT performance of code $C_{U,n_t}$. The lower bound for the symmetric case is given in \eqref{eq:symdmtlb}. While the  optimal MAC DMT \cite{Tse} for all possible MIMO MAC codes is given by
\begin{equation}
\begin{array}{l}
d^*_{U,n_t,n_r}(r_1, \ldots, r_U) =
\hspace{0.4in} \min\left\{
\begin{array}{l}
d_{u n_t, n_r}^* \left(  \sum_{i \in {\cal I}} r_i \right), d_{n_t, n_r}^*(r_j) |  \\
\qquad {\cal I} \subseteq {\cal U}, \abs{{\cal I}}=u > 1, j \in {\cal U}
\end{array} \right\}.
\end{array}\label{eq:optimaldmt}.
\end{equation}

Below we apply the bounds to study the DMT performance of code $C_{U,n_t}$ over some MIMO MACs. For simplicity, we will focus only on the symmetric case. In Fig. \ref{fig:1} we present the bounds on the DMT performance of $C_{U,n_t}$ for $n_t=2$, $n_r=4$ and $U=3$. The ``optimal'' DMT curve represents the optimal MIMO MAC DMT $d^*_{U,n_t,n_r}(r_1, \ldots, r_U)$ given in \eqref{eq:optimaldmt}. It is known \cite{Tse} that the maximal possible multiplexing gain for this channel is $\frac{\min\{U n_t,n_r\}}{U}=\frac43$. This means that whenever $r > \frac43$ the corresponding diversity gain must be zero, and communications over this channel cannot be reliable. Furthermore, Tse \etal   \cite{Tse} show that for $r \leq \frac{\min\{U n_t,n_r\}}{U+1}=1$ the optimal DMT is dominated by the single-user performance. The remaining region where $1 \leq r \leq \frac43$ is termed the ``antenna pooling'' region \cite{Tse} and the DMT performance is dominated by the case when all user's signals are erroneously decoded. From Fig. \ref{fig:1} we see that the code $C_{u,n_t}$ is in fact MAC-DMT optimal for $r \leq 0.24$.

In Fig. \ref{fig:2} we present the bounds on the DMT performance of code $C_{U,n_t}$ for $n_t=2$, $n_r=8$ and $U=3$, corresponding to the case without antenna pooling region. The optimal MAC-DMT is completely dominated by the single-user performance. It is seen from Fig. \ref{fig:2} that the code $C_{U,n_t}$ remains to be DMT optimal whenever $r \leq \frac{23}{45} \approx 0.311$. Fig. \ref{fig:3} shows the bounds on the DMT performance of code $C_{U,n_t}$ for $n_t=3$, $n_r=6$ and $U=2$. We see that the code $C_{U,n_t}$  is DMT optimal whenever $r \leq 0.6$.

\begin{figure}[h!]
\[
\includegraphics[width=0.9\columnwidth]{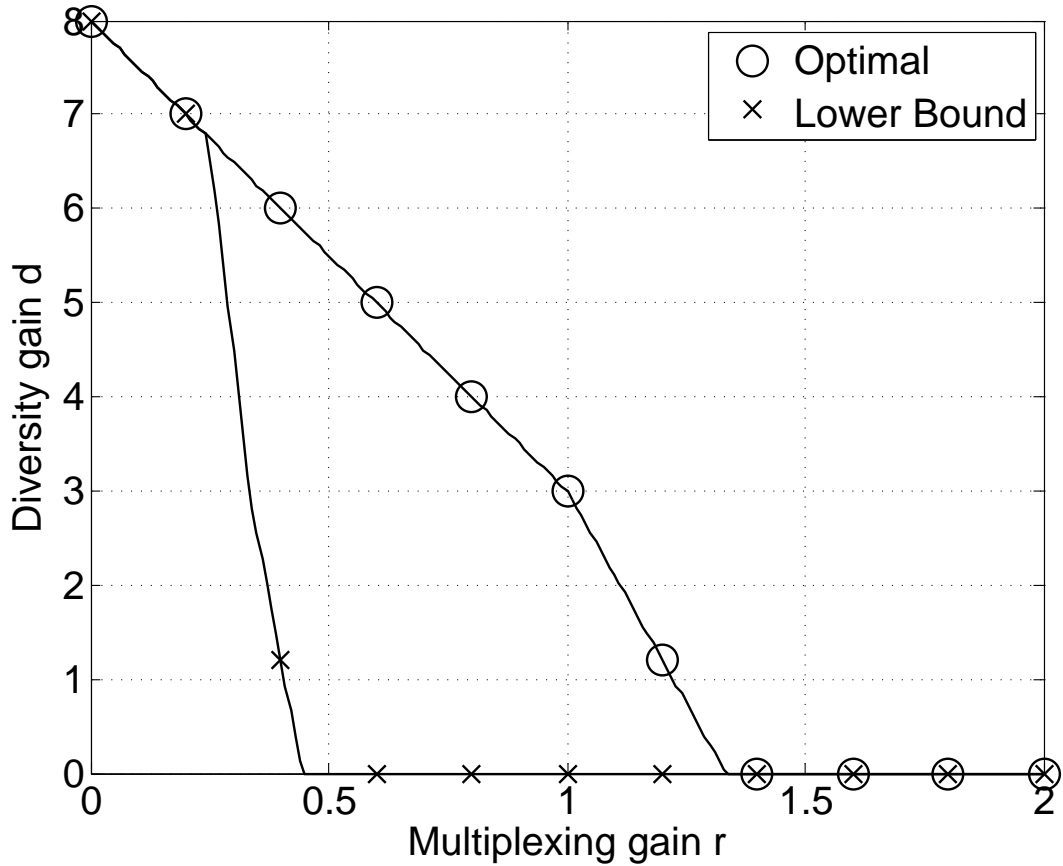}
\]
\caption{Bounds on the DMT performance of code $C_{U,n_t}$ for $n_t=2$, $n_r=4$ and $U=3$} \label{fig:1}
\end{figure}

\begin{figure}[h!]
\[
\includegraphics[width=0.9\columnwidth]{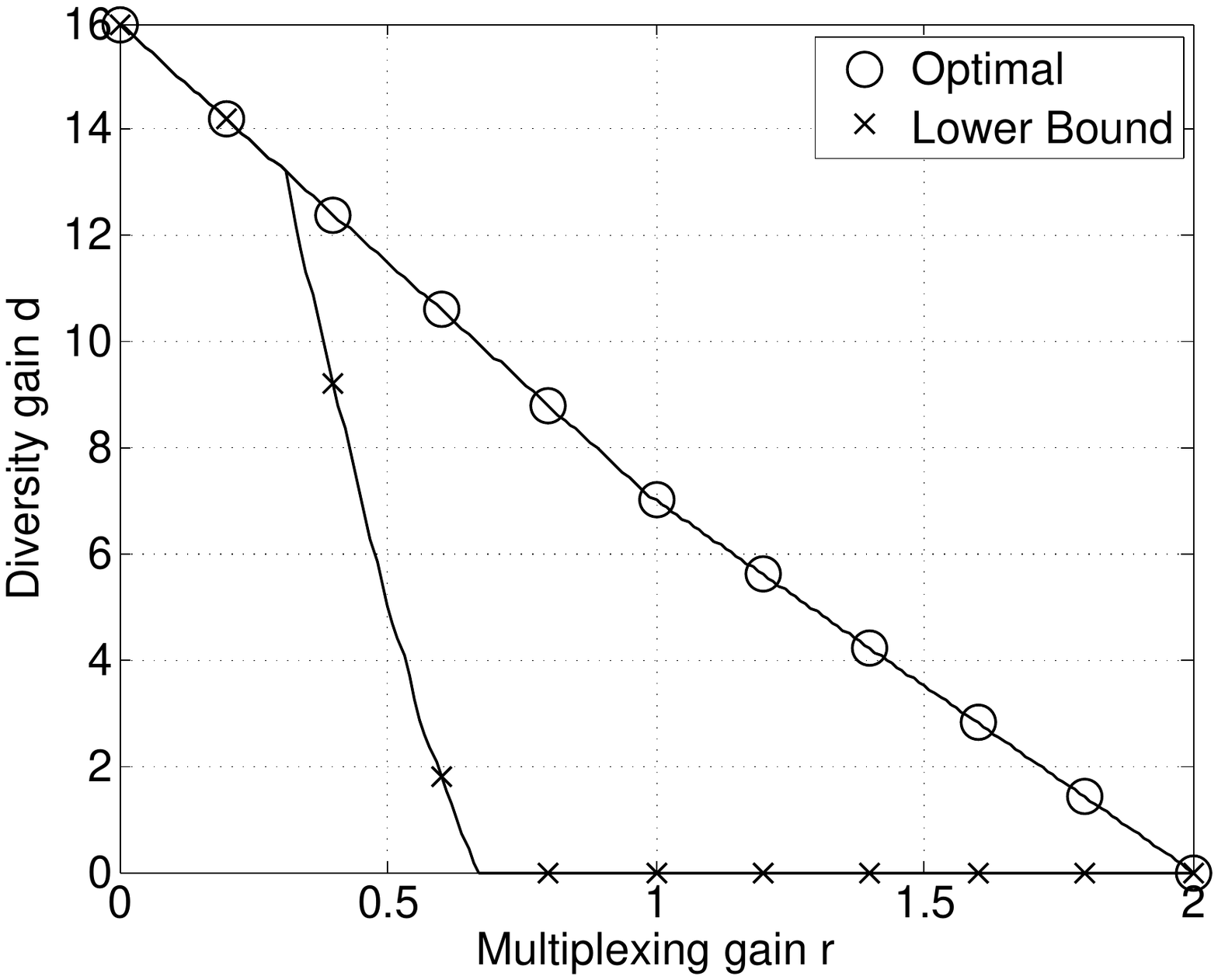}
\]
\caption{Bounds on the DMT performance of code $C_{U,n_t}$ for $n_t=2$, $n_r=8$ and $U=3$} \label{fig:2}
\end{figure}

\begin{figure}[h!]
\[
\includegraphics[width=0.9\columnwidth]{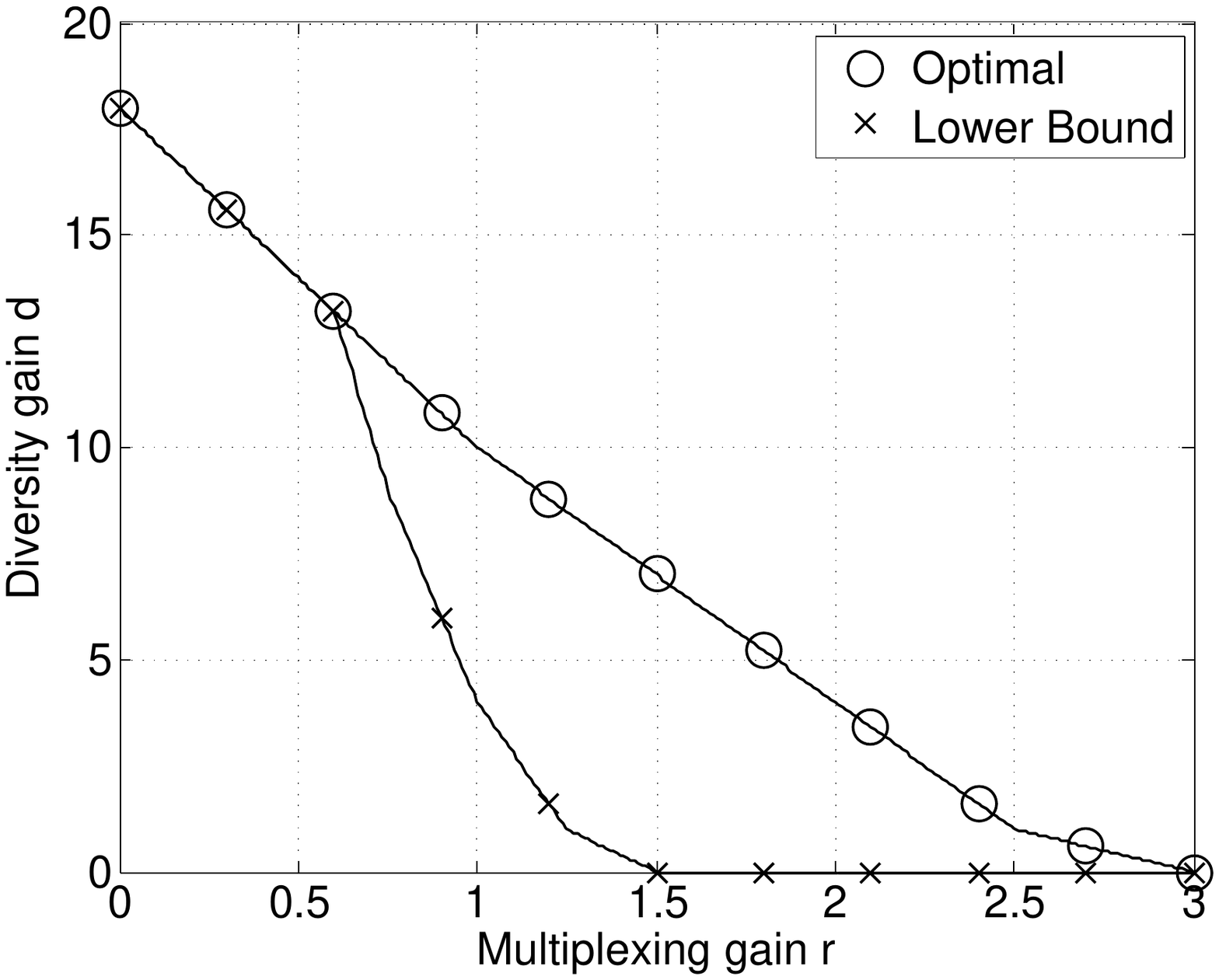}
\]
\caption{Bounds on the DMT performance of code $C_{U,n_t}$ for $n_t=3$, $n_r=6$ and $U=2$} \label{fig:3}
\end{figure}

\section{Constructions}\label{construction}
In this section we are giving our general constructions, having properties discussed in Section \ref{construdiscussion}. As explained there
the main idea is to build codes having as good as possible decay. Unfortunately we are forced to use several techniques from different parts of number theory and will assume that the reader has working knowledge on the topic. We will also skip some of the proofs as they are purely number theoretical and rather standard.

First in Subsection \ref{2user} we analyze the single antenna two user case thoroughly. Then in Subsection \ref{construction} we will generalize this construction for $U$ users with $n_t$ transmit antennas.

\subsection{A 2-user code}\label{2user}
In \cite{bb}  Badr and Belfiore introduced  a $2$-user single antenna MAC  code where the matrix coefficients were from the field $\Q(i,\sqrt{5})$ and had the generalized full rank property. It was proved in \cite{decaypaperi} that their construction had particularly good decay behavior and that the key to this behavior was that the algebraic elements in the code matrices were from a numberfield of low degree (4 to be exact). We will now study a general version of their construction and study under which conditions it is possible to achieve  the full rank condition under the extra condition that the coefficients of the codematrices are from a degree 4 number field.

Let us now suppose we have a complex quadratic field $K$ and a degree two extension $L/K$ and denote the Galois group $G(L/K)$ with $<\sigma>$.

If $a$ and $b$ are non zero elements from $L$, we can  define an embedding $\psi_{a,b}:\OO_K \mapsto \C^2 $ where
$$
\psi_{a,b}(x)=(ax, b\sigma(x)),
$$
for $x \in \OO_L$.
The following is then a standard result.

\begin{proposition}\label{singlecode}
We have that $\psi_{a, b}(\OO_L)$  is  a $4$-dimensional lattice in $\C^2$, has NVD property and is therefore
DMT optimal in $1\times 1$ MIMO channel.
\end{proposition}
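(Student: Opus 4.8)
The plan is to unwind the three claims in order, each reducing to a standard fact once we identify the right structure. First I would check that $\psi_{a,b}(\mathcal{O}_L)$ is a $4$-dimensional lattice in $\C^2 = M_{1\times 2}(\C)$. Since $L/\Q$ has degree $4$ (as $K/\Q$ is quadratic and $L/K$ is quadratic), $\mathcal{O}_L$ is a free $\Z$-module of rank $4$, with $\Z$-basis $\omega_1,\dots,\omega_4$. The map $\psi_{a,b}$ is $\Z$-linear, so its image is spanned over $\Z$ by the four vectors $\psi_{a,b}(\omega_j) = (a\omega_j,\, b\sigma(\omega_j))$. To see these are $\R$-linearly independent it suffices to show $\psi_{a,b}$ is injective on $\R\otimes_\Z\mathcal{O}_L$, equivalently that the $2\times 4$ complex matrix whose columns are $\psi_{a,b}(\omega_j)$ has $\R$-rank $4$; this follows because $a,b\neq 0$ and because $x\mapsto (x,\sigma(x))$ already embeds $L$ as a rank-$4$ lattice in $\C^2$ (the classical Minkowski-type embedding attached to the relative extension $L/K$ — here one uses that $\sigma$, an automorphism of $L$ over the quadratic subfield, together with the two embeddings of $K$ into $\C$ separates the $\R$-span of $\mathcal{O}_L$). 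Multiplying the first coordinate by $a$ and the second by $b$ is an invertible $\R$-linear change of coordinates on $\C^2$, so the lattice property is preserved.

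Next I would establish the NVD property. A code matrix here is the $1\times 2$ row $X = (ax,\, b\sigma(x))$ for $x\in\mathcal{O}_L\setminus\{0\}$, so $\det(XX^\dagger) = \|X\|^2 = |a|^2|x|^2 + |b|^2|\sigma(x)|^2$. This is the wrong quantity for NVD in the usual space-time sense; the relevant determinant for a $1\times 2$ single-antenna lattice used over $2$ channel uses is instead the product of the two "coordinate norms", i.e. the quantity controlled by $|ax|\cdot|b\sigma(x)| = |ab|\cdot|x\sigma(x)| = |ab|\cdot|N_{L/K}(x)|$ up to the complex embeddings of $K$. The point is that $N_{L/K}(x) = x\sigma(x) \in \mathcal{O}_K$ and is nonzero whenever $x\neq 0$, hence $|N_{K/\Q}(N_{L/K}(x))| \geq 1$ since it is a nonzero rational integer. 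Therefore the product of the squared absolute values of the two entries, summed over the two complex places of $K$ (which is exactly what the NVD determinant for this $4$-dimensional lattice computes), is bounded below by $|ab|^{2}\cdot$(a positive constant depending only on $K$), and in particular stays away from $0$ as $x$ ranges over $\mathcal{O}_L\setminus\{0\}$. This is the NVD property.

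Finally, DMT optimality in the $1\times 1$ MIMO channel is immediate: by the cited result of \cite{EKPKL} and \cite{TV}, any $2n_tk$-dimensional lattice code in $M_{n_t\times k}(\C)$ with the NVD property is DMT optimal in the single-user $n_t\times n_r$ channel; applying this with $n_t = 1$, $k = 2$ gives a $4$-dimensional lattice, which matches $\psi_{a,b}(\mathcal{O}_L)$, and the NVD property was just verified. I expect the main obstacle to be purely expository rather than mathematical: carefully matching the "determinant" used in the NVD definition (which is $\det(XX^\dagger)$ for $X\in M_{1\times 2}$, a sum, not a product) against the product-of-norms quantity $|N_{L/K}(x)|$ that is actually bounded below — i.e. making precise that for a $1\times k$ lattice the operative invariant is the product over the $k$ coordinates and that this is what the relative norm controls. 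Once that identification is stated cleanly, every step is a one-line consequence of either the module structure of $\mathcal{O}_L$, the fact that a nonzero algebraic integer has norm of absolute value $\geq 1$, or the quoted NVD$\Rightarrow$DMT theorem.
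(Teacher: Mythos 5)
The paper gives no argument here at all --- the proposition is introduced with ``the following is then a standard result'' --- so there is no proof of record to compare against; judged on its own merits, your proposal is right in outline, and two of its three parts are fine. For the lattice claim, the essential point (which you gesture at correctly) is that $\sigma$ fixes $K$ pointwise while complex conjugation does not, since $K$ is imaginary quadratic; hence $\sigma$ is not the conjugate of the identity embedding, the pair $\{1,\sigma\}$ contains one representative of each conjugate pair of embeddings of the totally imaginary quartic field $L$, and $x\mapsto(x,\sigma(x))$ is a canonical (Minkowski) embedding sending $\OO_L$ onto a rank-$4$ lattice in $\C^2$; multiplying the coordinates by the nonzero constants $a,b$ is an invertible $\R$-linear map, so the image stays a rank-$4$ lattice. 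The final step, quoting the NVD$\Rightarrow$DMT result of \cite{EKPKL}, \cite{TV} with $n_t=1$, $k=2$, is exactly what the paper intends.

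The NVD paragraph, however, is framed incorrectly. For a $1\times 2$ codeword the paper's NVD quantity is precisely $\det(XX^{\dagger})=|ax|^2+|b\sigma(x)|^2$; this is not ``the wrong quantity'' --- in the quasi-static model of this paper the fading is constant over the $k$ channel uses, so the PEP determinant is this sum, and it is what the quoted DMT theorem requires. The product distance you substitute is the operative invariant for a parallel or fast-fading channel, not this one, and your parenthetical that the product ``summed over the two complex places of $K$'' is ``exactly what the NVD determinant computes'' is false as stated (also, $K$ imaginary quadratic has a single complex place). As written you therefore bound a different quantity and never actually verify the paper's NVD condition. The gap is harmless and closes in one line in either of two ways: (i) once the rank-$4$ lattice property is proved, discreteness alone gives $\inf_{0\neq X}\|X\|^2>0$, so for $n_t=1$ NVD is automatic; or (ii) keep your estimate $|ax|^2|b\sigma(x)|^2=|ab|^2|N_{L/K}(x)|^2\geq |ab|^2$ (valid because $N_{L/K}(x)\in\OO_K\setminus\{0\}$ and $|z|^2=N_{K/\Q}(z)\geq 1$ for nonzero $z\in\OO_K$ with $K$ imaginary quadratic) and apply AM--GM to get $\det(XX^{\dagger})\geq 2|ab|>0$. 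It is worth noting that the relative-norm bound you prove is the genuinely stronger fact that matters later in the paper --- it is the mechanism behind the decay analysis (cf.\ Theorem \ref{inertdecay3} and the treatment of the BB-code in \cite{decaypaperi}) --- but it is not the definition of NVD used here.
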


Let us now suppose we have chosen elements $a,b,c$ and $d$ from the field $L$ and that the first user has code $\psi_{a,b}(\OO_L)$ and the second user has code $\psi_{c,d}(\OO_L)$. The joint codewords in the MAC code $(\psi_{a,b}(\OO_L),\psi_{c,d}(\OO_L))$ now have form
$$
\begin{pmatrix}
           ax & b \sigma(x) \\
           cy & d \sigma(y) \\
\end{pmatrix}.
$$
We are now interested in when this construction will yield codes with generalized full rank property.

We will denote with $\OO_{L}^{*}$ the ring of integers of $L$ without zero element.
\begin{thm}
\label{2olemassa}
Let $K/\Q$ and $L/K$ be two field extensions of degree $2$ and $a,b,c,d \in L$. Let also $\sigma$ be the non-trivial element in the Galois group $Gal(L/K)$. Define
$$
C = \left\{ \left(
         \begin{array}{cc}
           ax & b \sigma(x) \\
           cy & d \sigma(y) \\
         \end{array}
       \right) | x,y \in \OO_{L}^{*} \right\}.
$$
There exists a matrix in $C$ with zero determinant if and only if
\begin{equation}
\label{kaava1}
\left|
         \begin{array}{cc}
           N(a) & N(b) \\
           N(c) & N(d) \\
         \end{array}
       \right|=0,
\end{equation}
where the function $N=N_{L/K}$ denotes the norm of extension $L/K$.
\end{thm}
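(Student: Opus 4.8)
The plan is to rewrite the determinant of a generic code matrix as an element of $L$ and to recognize the vanishing condition as an instance of Hilbert's Theorem 90 for the quadratic extension $L/K$. For $x,y\in\OO_L^*$ one computes
$$
\det\begin{pmatrix} ax & b\sigma(x) \\ cy & d\sigma(y) \end{pmatrix} \ = \ ad\, x\sigma(y) - bc\, y\sigma(x) \ \in \ L .
$$
Since the fixed embedding $L\hookrightarrow\C$ used to define the $\psi$-maps is injective, this complex number is zero if and only if the corresponding element of $L$ is zero, so from now on everything takes place inside $L$.

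Next I would dispose of the degenerate cases by a short trichotomy on the pair $(ad,bc)$. If $ad=0=bc$, the determinant vanishes for every choice of $x,y$, and the $2\times 2$ norm determinant in \eqref{kaava1} is $N(ad)-N(bc)=0$ as well. If exactly one of $ad,bc$ is zero, the determinant equals a fixed nonzero element of $L^*$ times $x\sigma(y)$ (respectively $y\sigma(x)$), hence is nonzero for all $x,y\in\OO_L^*$ because $L$ is a field and $\sigma$ an automorphism; and \eqref{kaava1} equals $\pm N(ad)$ or $\pm N(bc)$, which is nonzero. In both situations the asserted equivalence holds trivially, so it remains to treat the main case $abcd\neq 0$.

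In that case, for $x,y\in\OO_L^*$ the determinant vanishes iff $\frac{ad}{bc}=\frac{y\sigma(x)}{x\sigma(y)}$. Setting $w=x/y\in L^*$ and using that $\sigma$ is a field automorphism gives $\frac{y\sigma(x)}{x\sigma(y)}=\frac{\sigma(w)}{w}$; conversely every $w\in L^*$ can be written as $x/y$ with $x,y\in\OO_L^*$ by clearing denominators. Hence $C$ contains a singular matrix if and only if there is some $w\in L^*$ with $\sigma(w)/w=ad/bc$. Since $L/K$ is cyclic of degree $2$ with $\mathrm{Gal}(L/K)=\langle\sigma\rangle$, Hilbert's Theorem 90 says that the image of the map $w\mapsto\sigma(w)/w$ on $L^*$ is exactly the kernel of $N_{L/K}\colon L^*\to K^*$ (the ``easy'' inclusion being multiplicativity of the norm together with $N_{L/K}(\sigma(w))=\sigma(w)\cdot\sigma^2(w)=w\sigma(w)=N_{L/K}(w)$). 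Therefore such a $w$ exists iff $N_{L/K}(ad/bc)=1$, i.e. iff $N(a)N(d)=N(b)N(c)$, which is precisely the vanishing of \eqref{kaava1}.

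The only nontrivial ingredient is Hilbert's Theorem 90; everything else is bookkeeping. The step most likely to need care is the passage between the nonzero algebraic integers $\OO_L^*$ appearing in the definition of $C$ and the full group $L^*$ on which Hilbert 90 operates --- one must verify both that $x/y$ ranges over all of $L^*$ as $x,y$ range over $\OO_L^*$, and that the converse direction (a zero-determinant matrix forcing \eqref{kaava1}$=0$) indeed reduces to multiplicativity of the norm. I would also re-check the degenerate case analysis so that the stated ``if and only if'' is literally correct even when several of $a,b,c,d$ vanish at once.
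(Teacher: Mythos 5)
Your proposal is correct and follows essentially the same route as the paper's proof: the forward implication comes from multiplicativity of the norm applied to the vanishing-determinant equation, and the converse from Hilbert's Theorem 90 for the quadratic extension $L/K$ together with clearing denominators to land in $\OO_L^*$. Your reorganization (trichotomy on $(ad,bc)$ and packaging both directions into the Hilbert 90 equivalence) is only a cosmetic difference from the paper's argument, which handles the degenerate case via $b$ or $c$ being zero and takes $x=w$, $y=1$ after clearing denominators.
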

\begin{proof}
Assume first that we have a matrix
$$\left(
         \begin{array}{cc}
           ax & b \sigma(x) \\
           cy & d \sigma(y) \\
         \end{array}
       \right)$$
        with zero determinant. This means that $adx\sigma(y)=bc\sigma(x)y$, which gives $N(a)N(d)N(x)N(\sigma(y))=N(b)N(c)N(\sigma(x))N(y)$. Continuing we get  $N(a)N(d)=N(b)N(c)$ \emph{i.e.} $N(a)N(d)-N(b)N(c)=0$.

Assume then that  $N(a)N(d)-N(b)N(c)=0$. If $b$ or $c$ is zero then $N(a)N(d)=0$ \emph{i.e.} $a$ or $d$ is zero and then for all $x,y \in \OO_L$ we have
 $$
 \left|
         \begin{array}{cc}
           ax & b \sigma(x) \\
           cy & d \sigma(y) \\
         \end{array}
       \right| = 0.
 $$
 Otherwise $N(\frac{ad}{bc})=1$. Then by Hilbert 90 we have some $z \in L$ such that $\frac{ad}{bc}=\frac{\sigma(z)}{z}$. Then write $z=\frac{w}{n}$ with $w \in \OO_{L}^{*}$ and $n \in \Z$. This gives $\frac{ad}{bc}=\frac{\sigma(w)}{w}$ \emph{i.e.} $adw-bc\sigma(w)=0$. This means that the determinant of
$$
\left(
         \begin{array}{cc}
           aw & b \sigma(w) \\
           c1 & d \sigma(1) \\
         \end{array}
       \right)
$$
is zero.
\end{proof}

Let us now  summarize the properties of the codes of previous type.
\begin{itemize}
\item The single user codes are DMT optimal lattice codes.
\item The MAC code has generalized full rank property.
\item The overall MAC code matrices have coefficients from a number field of low degree.
\end{itemize}

In the next section we will generalize these properties for the MAC codes.
The reader can check  Equation \eqref{generalsingle} to see that when restricting to a single antenna case  our general construction has indeed the form described in this section.

\subsection{ Construction of multi access codes with several transmission antennas}\label{construction}

From now on we concentrate  on the  scenario where we have $U \in \Z_+$ users and each user has $n_t \in \Z_+$ transmission antennas.  Throughout this section we assume $K$ to be an imaginary quadratic extension of $\Q$ with class number $1$. The field  $L$ is a cyclic Galois extension of $K$ of degree $Un_t$, such that $L=K(\alpha)$ with $\alpha \in \R$, $\sigma$ a generating element in $Gal(L/K)$ and $p \in \OO_{K}$ an inert prime in $L/K$. We also define $\tau = \sigma^{U}$ and $F$ to be the fixed field of $\tau$. So we have $[L:F]=n_t$, $[F:K]=U$, $Gal(L/F)=<\tau>$, and $Gal(F/K)=<\sigma_{F}>$ where $\sigma_{F}$ is a restriction of $\sigma$ in $F$. Let  $v=v_p$ be the   $p$-adic valuation  of the field $L$. In this section, when we say that $L/K$, $p$, and $\sigma$ are suitable we mean that they are as above.

 We skip the proof of the following proposition.
\begin{proposition}\label{existence}
For every complex quadratic field $K$, having class number 1, and for any $U$ and for any $n_t$ we have   a suitable degree $n_t U$ extension $L/K$,  prime $p\in \OO_K$ and automorphism $\sigma \in Gal(L/K)$.
\end{proposition}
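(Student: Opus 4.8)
The plan is to realize $L$ as the compositum of $K$ with a carefully chosen \emph{real} cyclic field over $\Q$, and then to produce the inert prime by a density argument together with the class-number-one hypothesis. Write $n = Un_t$.

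First I would construct a real cyclic extension $E/\Q$ of degree $n$. By Dirichlet's theorem on primes in arithmetic progressions there is a rational prime $q \equiv 1 \pmod{2n}$; then $Gal(\Q(\zeta_q)/\Q)$ is cyclic of order $q-1$, so $\Q(\zeta_q)$ contains a unique subfield $E$ with $[E:\Q]=n$, and since $n \mid (q-1)/2$ we get $E \subseteq \Q(\zeta_q)^+ \subseteq \R$. Thus $E$ is real and cyclic of degree $n$ over $\Q$; pick a real primitive element $\alpha$ with $E = \Q(\alpha)$.

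Next, set $L := KE$. Because $E \subseteq \R$ while $K$ is imaginary quadratic, the intermediate field $K \cap E$ of $K/\Q$ cannot equal $K$, hence $K \cap E = \Q$. Standard Galois theory then gives $[L:K] = [E:\Q] = n$, that $L/K$ is Galois with $Gal(L/K) \cong Gal(E/\Q)$ cyclic of order $n$, and $L = K(\alpha)$ with $\alpha \in \R$. Fix a generator $\sigma$ of $Gal(L/K)$; the auxiliary data $\tau = \sigma^U$ and the fixed field $F$ of $\tau$ then have exactly the degrees $[L:F]=n_t$, $[F:K]=U$ claimed in the surrounding text, with $F/K$ cyclic.

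Finally, for the inert prime: since $L/K$ is abelian, the Chebotarev density theorem applied to $L/K$ shows that the set of primes $\mathfrak p$ of $\OO_K$, unramified in $L$, whose Frobenius equals $\sigma$ has density $1/n > 0$; any such $\mathfrak p$ is inert in $L/K$, since its decomposition group is all of $Gal(L/K)$. As $K$ has class number one, $\OO_K$ is a principal ideal domain, so $\mathfrak p = (p)$ for a prime element $p \in \OO_K$, and $(L/K, p, \sigma)$ is suitable. The only point demanding attention is keeping the two requirements ``$L/K$ cyclic'' and ``$L$ generated over $K$ by a real number'' simultaneously satisfied — this is precisely why one composes with a real cyclic subfield of a cyclotomic field rather than with an arbitrary cyclic extension — while the existence of the inert prime is a routine Chebotarev argument and the class-number-one assumption enters only to pass from the prime ideal $\mathfrak p$ to a prime element $p$.
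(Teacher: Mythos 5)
Your construction is correct, and it is essentially the argument the paper has in mind: the paper omits the proof as standard, but its Table 1 examples are built exactly this way, by composing $K$ with a totally real (cyclic) subfield of a cyclotomic field $\Q(\zeta_q)$ and then locating an inert prime. Your only extra ingredient, Chebotarev over the base $K$ to get a prime ideal with Frobenius generating $Gal(L/K)$ (hence inert) and class number one to make it principal, is sound; one could even replace it by Dirichlet's theorem on primes in arithmetic progressions since $L/\Q$ is abelian, but that is a matter of taste.
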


We are now ready to begin to build  our MAC codes.

We can define an associative $F$-algebra
$$
\mathcal{D}=(L/F,\tau, p)=L\oplus uL\oplus u^2L\oplus\cdots\oplus u^{n_t-1}L,
$$
where   $u\in\mathcal{A}$ is an auxiliary
generating element subject to the relations
$xu=u\tau(x)$ for all $x\in L$ and $u^{n_t}=p$. The choice of  $p$ and fields  $L$ and $F$ guarantees that $\mathcal{D}$ is a division algebra.

 The ${\cal O}_L$-module
$$ \Lambda={\cal O}_L\oplus u {\cal O}_L\oplus\cdots\oplus
u^{n_t-1}{\cal O}_L, $$
where ${\cal O}_L$ is the ring of integers, is a subring in the
algebra $(L/F,\tau,p)$. We refer to this ring as the {\it natural order}.

Let us now suppose  that we have an element $x\in \Lambda$. It can be written as $\sum_{i=0}^{n_t-1}x_iu^{i}$,
where $x_j \in \OO_{L}$ for all $j=1, \dots ,n_t$. We now have the left regular representation $\psi:\Lambda \mapsto M_{n_t}(\OO_L)$, where
$\psi(x)=$
\begin{equation}\label{representation}
\left(
         \begin{array}{ccccc}
           x_1   &     p \tau(x_{n_t}) & p \tau^2(x_{n_t-1}) & \ldots & p \tau^{n_t-1}(x_2) \\
           x_2   &       \tau(x_1) & p \tau^2(x_{n_t}) & \ldots & p \tau^{n_t-1}(x_3) \\
           x_3   &       \tau(x_2) &   \tau^2(x_1) & \ldots & p \tau^{n-1}(x_4) \\
           \vdots & \vdots         & \vdots          &        & \vdots              \\
           x_{n_t-1}   &   \tau(x_{n_t-2}) &   \tau^2(x_{n_t-3}) & \ldots & p \tau^{n_t-1}(x_{n_t}) \\
           x_{n_t}   &       \tau(x_{n_t-1}) &   \tau^2(x_{n_t-2}) & \ldots &   \tau^{n_t-1}(x_1) \\
         \end{array}
       \right).
\end{equation}
If the context is clear we can also use  the notation  $M(x_1,\cdots, x_{n_t})=\psi(x)$.

Let us suppose we consider $U$ user MAC scenario, where each of the  single users has $n_t$ transmit antennas.
Note that in this definition  we are using the notation $M(x_1,\dots, x_{n_t})=\psi(x_1 +x_2u+\cdots+ x_{n_t}u^{n_t-1})$.

\begin{definition}[MAC code]
Define $M_j=M(x_{j,1}, x_{j,2}, \dots, x_{j,n_t})$ for all $j=1, \dots ,U$. In our multi access system the code $\mathcal{C}_j$ of $j$th user consists of $n_t \times Un_t$ matrices $B_j=$
$$
 \left( M_j , \sigma(M_j) , \sigma^2(M_j) , \dots , p^{-m} \sigma^{j-1}(M_j) , \dots , \sigma^{U-1}(M_j) \right)
$$
where $m$ is any rational integer strictly greater than $\frac{U(n_t-1)}{2}$ and $x_{j,l} \neq 0$ for some $l$. Here  $m$ is same for all the users. Then the whole code $C_{U,n_t}$ consists of matrices

$$
A = \left(
         \begin{array}{c}
           B_1   \\
           B_2    \\
           \vdots              \\
           B_U \\
         \end{array}
       \right),
$$
where $B_j \in \mathcal{C}_j$ for all $j=1, \dots, U$. This means that the matrices $A \in C_{U,n_t}$ have form
\begin{equation}\label{code}
\left(
         \begin{array}{cccc}
           p^{-m} M_1   &  \sigma(M_1) & \ldots &  \sigma^{U-1}(M_1) \\
           M_2   & p^{-m} \sigma(M_2) & \ldots &  \sigma^{U-1}(M_2) \\
           \vdots & \vdots                  &        & \vdots              \\
           M_U   &  \sigma(M_U) & \ldots & p^{-m} \sigma^{U-1}(M_U) \\
         \end{array}
       \right).
\end{equation}
\end{definition}

The code depends on how we did choose $L/K$, $p$, $\sigma$, and $m$, so to be precise, we can also refer to $C_{U,n_t}$ with $C_{U,n_t}(L/K,p,\sigma,m)$. Let us call the family of all such codes $C_{U,n_t}(L/K,p,\sigma,m)$ (\emph{i.e.} codes constructed with any suitable $L/K$, $p$, $\sigma$, and $m$) by $\mathfrak{C}_{U,n_t}$. That is
$$
\mathfrak{C}_{U,n_t} = \bigcup_{L/K,p,\sigma,m} \{ C_{U,n_t}(L/K,p,\sigma,m) \}
$$
where $L/K$, $p$, $\sigma$, and $m$ are any suitable ones.

According to Proposition \ref{existence}  we can always find suitable $L/K$, $p$, $\sigma$, and $m$ for any $U \in \Z_+$ and $n_t \in \Z_+$. We therefore  have the following theorem.
\begin{thm}
For any choice of $U \in \Z_+$ and $n_t \in \Z_+$ we have $\mathfrak{C}_{U,n} \neq \emptyset$.
\end{thm}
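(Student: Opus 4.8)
The plan is to observe that this theorem is an immediate consequence of Proposition \ref{existence} together with the fact that the auxiliary integer $m$ in the definition of the MAC code is only required to satisfy a lower bound, and hence can always be chosen. Recall that $\mathfrak{C}_{U,n_t}$ was defined as the union of the singletons $\{C_{U,n_t}(L/K,p,\sigma,m)\}$ over all suitable quadruples $(L/K,p,\sigma,m)$; so to prove the family is nonempty it suffices to produce a single such quadruple.

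First I would invoke Proposition \ref{existence}: for the given $U$ and $n_t$, fixing any imaginary quadratic field $K$ of class number $1$ (for instance $K=\Q(i)$), it provides a cyclic Galois extension $L/K$ of degree $Un_t$ with $L=K(\alpha)$, $\alpha\in\R$, a generator $\sigma$ of $Gal(L/K)$, and a prime $p\in\OO_K$ that is inert in $L/K$; by definition these data are suitable in the sense of Section \ref{construction}. Next I would pick the integer $m$: any $m\in\Z$ with $m>\tfrac{U(n_t-1)}{2}$ works, e.g. $m=\big\lfloor \tfrac{U(n_t-1)}{2}\big\rfloor+1$, which is manifestly a positive rational integer satisfying the required strict inequality. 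With $(L/K,p,\sigma,m)$ in hand, all the ingredients entering \eqref{code} --- the division algebra $\D=(L/F,\tau,p)$ with $\tau=\sigma^{U}$, its natural order $\Lambda$, the left regular representation $\psi$, and the diagonal scaling by $p^{-m}$ --- are well defined, so $C_{U,n_t}(L/K,p,\sigma,m)$ is a bona fide code and an element of $\mathfrak{C}_{U,n_t}$. Hence $\mathfrak{C}_{U,n_t}\neq\emptyset$.

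There is essentially no obstacle in the argument above: once Proposition \ref{existence} is granted, the theorem is a matter of unwinding definitions. All the genuine content is concentrated in Proposition \ref{existence} itself --- the existence, for arbitrary $U$ and $n_t$, of a cyclic degree-$Un_t$ extension of a fixed class-number-one imaginary quadratic field that admits an inert prime --- whose proof is a standard but not entirely trivial application of class field theory (prescribing a cyclic global Galois group together with the required local splitting behaviour at one place, in the spirit of Chebotarev's density theorem and a Grunwald--Wang type construction). Since the excerpt explicitly states that the proof of Proposition \ref{existence} is skipped, I would likewise only point to where it comes from rather than reproduce it here.
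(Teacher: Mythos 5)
Your proposal is correct and matches the paper's own argument: the theorem is stated as an immediate consequence of Proposition \ref{existence}, which supplies suitable $L/K$, $p$, $\sigma$ (and $m$ can always be taken to be any integer exceeding $\tfrac{U(n_t-1)}{2}$), so $C_{U,n_t}(L/K,p,\sigma,m)$ exists and $\mathfrak{C}_{U,n_t}\neq\emptyset$. Your additional remarks on where the real content lies (in Proposition \ref{existence}, whose proof the paper skips) are accurate but not needed.
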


We will skip the proof of the following proposition, stating that each  of the single user codes satisfies the NVD condition and are therefore DMT optimal as a single user MIMO code.
\begin{proposition}\label{1user}
Let $C_{U,n_t} \in \mathfrak{C}_{U,n_t}$ and $\mathcal{C}_j$ be the $j$th users code in the system $C_{U,n_t}$ for some $j \in {1, \dots, U}$. Then the code $\mathcal{C}_j$ is a $2Un_t^2$-dimensional lattice code    with the NVD property.
\end{proposition}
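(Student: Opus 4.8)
The statement has two independent parts: that $\mathcal{C}_j$ is a $2Un_t^2$-dimensional matrix lattice in $M_{n_t\times Un_t}(\C)$, and that it has the NVD property. DMT optimality then follows immediately from \cite{EKPKL,TV} quoted in Section \ref{DMTanalysis}, because $M_{n_t\times Un_t}(\C)$ has real dimension $2n_t(Un_t)=2Un_t^2$ and $k:=Un_t\geq n_t$. For the lattice part I would first read off the entries of a generic codeword. Writing a generic element of the natural order as $x=x_1+x_2u+\cdots+x_{n_t}u^{n_t-1}$ with $x_l\in\OO_L$, inspection of \eqref{representation} shows that the $c$-th column of the block $\sigma^{i-1}(M_j)$ inside $B_j$ consists of the numbers $\sigma^{i-1}\tau^{c-1}(x_l)=\sigma^{i-1+(c-1)U}(x_l)$, $l=1,\dots,n_t$, some of them multiplied by the fixed scalar $p$, and the whole $j$-th block multiplied by $p^{-m}$. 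As $i$ runs over $1,\dots,U$ and $c$ over $1,\dots,n_t$ the exponent $i-1+(c-1)U$ runs exactly once over $\{0,1,\dots,Un_t-1\}$, so each of the $Un_t^2$ conjugates $\sigma^{d}(x_l)$, $0\le d\le Un_t-1$, $1\le l\le n_t$, occurs exactly once among the entries of $B_j$, always as a fixed nonzero scalar multiple of itself. Hence $B_j=D(\Phi(x))$, where $\Phi$ places these conjugates in fixed positions of an $n_t\times Un_t$ array and $D$ is entrywise multiplication by those fixed nonzero scalars. Because $K$ is imaginary quadratic, $L$ is totally complex of degree $2Un_t$ over $\Q$, and $\sigma^{0},\dots,\sigma^{Un_t-1}$ supply exactly one embedding from each complex-conjugate pair; thus $y\mapsto(\sigma^{d}(y))_{d=0}^{Un_t-1}$ is the canonical $\R$-isomorphism $L\otimes_{\Q}\R\cong\C^{Un_t}$. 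Taking $n_t$ copies, $\Phi$ extends to an $\R$-isomorphism $\Lambda\otimes_{\Q}\R\cong\C^{Un_t^2}$, so $\Phi(\Lambda)$ — hence $\mathcal{C}_j=D(\Phi(\Lambda))$, since $D$ is invertible — is a full-rank $2Un_t^2$-dimensional lattice, a $\Z$-basis of $\Lambda$ mapping to an $\R$-basis of $M_{n_t\times Un_t}(\C)$, so that $\mathcal{C}_j$ is a matrix lattice in the sense of Definition \ref{latticedef}; injectivity of $x\mapsto B_j$ is clear.

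For NVD, fix $0\neq x\in\Lambda$ and let $B_j^{(i)}$ denote the $i$-th $n_t\times n_t$ block of $B_j$. By the Cauchy--Binet formula $\det(B_jB_j^{\dagger})$ is a sum of squared absolute values of $n_t\times n_t$ minors of $B_j$, hence $\det(B_jB_j^{\dagger})\geq\max_{1\le i\le U}|\det B_j^{(i)}|^2$. From \eqref{representation}, $\det B_j^{(i)}=\sigma^{i-1}(\det M_j)$ for $i\neq j$ and $\det B_j^{(j)}=p^{-mn_t}\sigma^{j-1}(\det M_j)$, while $\det M_j=\mathrm{Nrd}_{\mathcal{D}/F}(x)$ lies in the center $F$. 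Since $x\in\Lambda$ forces $\psi(x)\in M_{n_t}(\OO_L)$, we get $\mathrm{Nrd}(x)\in\OO_L\cap F=\OO_F$, and since $\mathcal{D}$ is a division algebra, $\mathrm{Nrd}(x)\neq0$. Using that the restrictions $\sigma^{0}|_F,\dots,\sigma^{U-1}|_F$ exhaust $\mathrm{Gal}(F/K)$, the product over all $U$ blocks is $\prod_{i=1}^{U}|\det B_j^{(i)}|^2=|p|^{-2mn_t}\,\bigl|N_{F/K}(\mathrm{Nrd}(x))\bigr|^2=|p|^{-2mn_t}\,N_{F/\Q}(\mathrm{Nrd}(x))$, the last step by transitivity of the norm together with $N_{K/\Q}(z)=|z|^2$ on the imaginary quadratic field $K$. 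As $\mathrm{Nrd}(x)$ is a nonzero element of $\OO_F$, the quantity $N_{F/\Q}(\mathrm{Nrd}(x))$ is a positive rational integer, hence $\geq1$; and the geometric mean of nonnegative reals does not exceed their maximum, so $\det(B_jB_j^{\dagger})\geq\bigl(\prod_{i=1}^{U}|\det B_j^{(i)}|^2\bigr)^{1/U}\geq|p|^{-2mn_t/U}>0$. This bound is independent of $x$, so $\mathcal{C}_j$ has NVD.

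The step that needs the most care is the first one: the entry bookkeeping of $B_j$ — verifying that every conjugate $\sigma^{d}(x_l)$ appears exactly once, so that $D$ is genuinely an invertible entrywise rescaling — together with the observation that the partial family $\sigma^{0},\dots,\sigma^{Un_t-1}$, which is only half of the $2Un_t$ embeddings of $L$, still yields a real-linear isomorphism onto $\C^{Un_t}$. This is exactly where the hypothesis that $K$ is imaginary quadratic (so $L$ is totally complex and these $Un_t$ conjugates form a CM-type) is essential, whereas the class-number-one hypothesis plays no role in this particular argument. The two ring-theoretic inputs used above, namely $\det\psi(x)=\mathrm{Nrd}_{\mathcal{D}/F}(x)$ for the left regular representation and $\mathrm{Nrd}(\Lambda)\subseteq\OO_F$ for the natural order, are standard facts about cyclic division algebras and can be quoted. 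Finally, DMT optimality of $\mathcal{C}_j$ as a single-user $n_t\times n_r$ MIMO code is then just the theorem of \cite{EKPKL,TV}, since $\mathcal{C}_j$ is a $2n_tk$-dimensional NVD lattice in $M_{n_t\times k}(\C)$ with $k=Un_t\geq n_t$.
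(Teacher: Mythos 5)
Your proof is correct. Note that the paper itself gives no argument for Proposition \ref{1user}: the authors explicitly skip it as standard, so there is nothing of theirs to compare against; what you have written is essentially the standard argument implicit in the multi-block division-algebra constructions they cite (\cite{Lu}, \cite{YB07}). Both halves check out. For the rank: your bookkeeping is right that every conjugate $\sigma^{d}(x_l)$, $0\le d\le Un_t-1$, $1\le l\le n_t$, occurs exactly once among the $Un_t^2$ entries of $B_j$, each multiplied by one of the fixed nonzero scalars $1$, $p$, $p^{-m}$, $p^{1-m}$; and the CM-type step is sound, since if $\sigma^{a}$ and $\sigma^{b}$ (viewed as embeddings of $L\subset\C$) were complex conjugates of one another, restricting to $K$, which every power of $\sigma$ fixes pointwise, would force $K\subset\R$, contradicting that $K$ is imaginary quadratic; hence the $Un_t$ embeddings $\sigma^{0},\dots,\sigma^{Un_t-1}$ meet each conjugate pair exactly once, $y\mapsto(\sigma^{d}(y))_{d}$ is an $\R$-isomorphism $L\otimes_{\Q}\R\cong\C^{Un_t}$, and the image of $\Lambda\cong\OO_L^{n_t}$ under the invertible entrywise rescaling is a lattice of rank $2Un_t^2$. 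For NVD: $\det\psi(x)=\mathrm{Nrd}(x)\in\OO_L\cap F=\OO_F$ is nonzero for $x\neq 0$ (division algebra), the restrictions $\sigma^{i-1}|_F$ exhaust $Gal(F/K)$, and $|N_{F/K}(\mathrm{Nrd}(x))|^2=N_{F/\Q}(\mathrm{Nrd}(x))\geq 1$ because it is a nonzero rational integer that is manifestly nonnegative; so the product of the $U$ block minors is at least $|p|^{-2mn_t}$, and Cauchy--Binet together with the max-versus-geometric-mean step gives $\det(B_jB_j^{\dagger})\geq |p|^{-2mn_t/U}>0$ uniformly in $x$, which is even a clean explicit NVD constant. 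Your closing remarks are also accurate: DMT optimality is not part of the statement but follows from \cite{EKPKL}, \cite{TV} exactly as the surrounding text asserts, and the class-number-one hypothesis plays no role in this particular proposition.
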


Note that the code $C_{U,1}=C_{U,1}(L/K,p,\sigma,1) \in \mathfrak{C}_{U,1}$, a code for $U$ users each having one transmission antenna, consists of matrices of form

\begin{equation}\label{generalsingle}
\left(
         \begin{array}{ccccc}
           p^{-1}x_1   &         \sigma(x_1) &       \sigma^2(x_1) & \ldots &         \sigma^{U-1}(x_1) \\
                 x_2   & p^{-1}  \sigma(x_2) &       \sigma^2(x_2) & \ldots &         \sigma^{U-1}(x_2) \\
                 x_3   &         \sigma(x_3) & p^{-1}\sigma^2(x_3) & \ldots &         \sigma^{U-1}(x_3) \\
           \vdots      & \vdots              & \vdots              &        & \vdots                    \\
                 x_U   &         \sigma(x_U) &       \sigma^2(x_U) & \ldots & p^{-1}  \sigma^{U-1}(x_U) \\
         \end{array}
       \right).
\end{equation}

Note also that the code $C_{1,n_t}=C_{1,n_t}(L/K,p,\sigma,m) \in \mathfrak{C}_{1,n_t}$ is a usual single user  code multiplied by $p^{-m}$.

Let us now prove that the defined MAC code satisfies the generalized rank criterion. We need first the following Lemma.

\begin{lemma}
\label{determinanttivaluaatio}
Let $x_j \in \OO_{L}$ for all $j=1, \dots , n_t$ such that $x_l \neq 0$ for some $l$ and $\min(v(x_1), \dots, v(x_{n_t}))=0$. We then have
$$
\det(M(x_1, x_2, \dots, x_{n_t})) \neq 0
$$
and
$$
v(\det(M(x_1, x_2, \dots, x_{n_t}))) \leq n_t - 1.
$$
\end{lemma}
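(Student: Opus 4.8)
The plan is to work in the $p$-adic completion $L_p$ of $L$ at the inert prime $p$, and to analyze the matrix $M=M(x_1,\dots,x_{n_t})$ by tracking the $p$-adic valuation $v$ entry by entry. Recall that $M$ is the left regular representation of the element $x=x_1+x_2u+\cdots+x_{n_t}u^{n_t-1}$ in the division algebra $\mathcal{D}=(L/F,\tau,p)$, so $\det(M)=\mathrm{Nrd}_{\mathcal{D}/F}(x)$ up to a unit, and in particular it equals zero if and only if $x=0$ in $\mathcal{D}$, i.e. if and only if all $x_j=0$. Since by hypothesis some $x_l\neq 0$, we get $\det(M)\neq 0$ immediately; the substance of the lemma is the upper bound $v(\det(M))\leq n_t-1$.

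For the valuation bound, the key observation is that $v$ extends to $L_p$ and that $\tau$ is an automorphism of $L$ fixing $F$ and commuting with the valuation (since $p$ is inert, $v\circ\tau = v$, as $\tau$ permutes the unique prime above $p$ trivially — there is only one). Hence for every $i,j$ the entry of $M$ in row $i$, column $j$ is $p^{\epsilon_{ij}}\tau^{j-1}(x_{\rho(i,j)})$ for a suitable index permutation $\rho$ and exponent $\epsilon_{ij}\in\{0,1\}$ read off from \eqref{representation}, so its valuation is $\epsilon_{ij}+v(x_{\rho(i,j)})$. The hypothesis $\min_j v(x_j)=0$ says the valuations of the $x_j$'s are nonnegative with at least one equal to $0$. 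The plan is to use the Leibniz expansion $\det(M)=\sum_{\pi\in S_{n_t}}\mathrm{sgn}(\pi)\prod_{i}M_{i,\pi(i)}$, so that $v(\det(M))\geq \min_\pi \sum_i v(M_{i,\pi(i)})$ with the possibility of cancellation. To get an \emph{upper} bound one must instead identify a single term (or an aggregate) whose valuation is provably at most $n_t-1$ and which is not cancelled — or argue that the minimal-valuation term is unique.

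Concretely, I would use the structure of $\mathcal{D}$: writing $x = \sum x_j u^{j-1}$ and exploiting $u^{n_t}=p$, the reduced norm $\mathrm{Nrd}(x)$ has a known expansion as a sum of terms, each a product of norms $N_{L/F}(x_j)$ times a power of $p$, the $p$-power being determined by which $u$-components appear. The hypothesis $\min_j v(x_j)=0$ forces $v(N_{L/F}(x_l))=0$ for the index $l$ achieving the minimum (since $N_{L/F}$ is a product of $\tau$-conjugates, each of valuation $v(x_l)$, and $\tau$ preserves $v$ — so $v(N_{L/F}(x_l)) = n_t\cdot v(x_l)$, wait, this needs care: actually $v$ on $L_p$ is normalized so that $v(p)=1$ and the residue extension absorbs the degree, so $v(N_{L/F}(x_j))=v(x_j)$ in the right normalization). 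The cleanest route is: reduce $M$ modulo $p$ (i.e. pass to the residue field $\overline{L}=\mathcal{O}_L/p\mathcal{O}_L$, which is a field of degree $n_tU$ over $\mathcal{O}_K/p$ since $p$ is inert), observe that $M \bmod p$ becomes lower-triangular-block in the right coordinates with the reduction of $x_l$ (which is nonzero) on enough of the structure to force $\mathrm{rank}(M\bmod p)\geq 1$, in fact to force that $p^{n_t}\nmid \det(M)$.

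The main obstacle, and where I would concentrate effort, is ruling out catastrophic $p$-adic cancellation in the Leibniz/reduced-norm expansion: a priori many terms could have the same minimal valuation and sum to something of higher valuation. The way to defeat this is to choose coordinates adapted to the hypothesis — permute/scale so that the variable $x_l$ with $v(x_l)=0$ sits in the first $u$-component, use that in the natural order $\Lambda$ the reduced norm of $x_l\cdot 1$ (the pure component) has valuation exactly $0$, and show that the off-diagonal correction terms all carry a strictly positive power of $p$ bounded away from contributing below level $n_t-1$; more precisely, every monomial in the reduced-norm expansion other than $N_{L/F}(x_1)$ is divisible by $p$, and the total $p$-content of $\det(M)$ when all components are $p$-integral but $x_1$ is a unit is at most $n_t-1$ because the maximal $p$-power appearing among the $n_t$ "cyclic" monomials of $\mathrm{Nrd}$ is $p^{n_t-1}$ (coming from the term $p\,\tau(\cdots)\cdots$ with all $n_t-1$ off-diagonal entries each contributing one factor of $p$), and the unit term $N_{L/F}(x_1)$ prevents the sum from having valuation $\geq n_t$. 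Making this last step rigorous — that the unique lowest-valuation monomial is $N_{L/F}(x_1)$ with valuation $0$, hence $v(\det M)=0\leq n_t-1$ in this normalized case, and then handling the general case where $x_1$ need not be the unit by applying a valuation-preserving automorphism or by a symmetry argument over which index achieves the minimum — is the crux, and I expect it to be a short but delicate $p$-adic computation using the explicit form \eqref{representation}.
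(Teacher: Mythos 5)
Your treatment of the first claim (nonvanishing of $\det M$ via the reduced norm of a nonzero element of the division algebra $\D$) is exactly the paper's argument, and your special case $v(x_1)=0$ is also correct and matches the paper's first case: the identity term $N_{L/F}(x_1)$ is the unique Leibniz monomial of valuation $0$, since every other permutation picks at least one above-diagonal entry carrying a factor $p$ while all components are $p$-integral. The genuine gap is the general case, which you yourself call the crux and leave open: when the minimal-valuation component is $x_l$ with $l>1$, neither of your two fallback devices closes it. There is no valuation-preserving ``symmetry over which index achieves the minimum'': moving $x_l$ into the first slot amounts to multiplying $x$ by a power of $u$, and since $\mathrm{Nrd}(u)=\pm p$ this shifts $v(\det M)$ by exactly that power, so the reduction is not free. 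The mod-$p$ rank idea cannot give an upper bound at all: over the local ring, residue rank $\geq 1$ is compatible with $v(\det M)$ being arbitrarily large (think of a diagonal matrix with entries $1$ and $p^{100}$), so ``$\mathrm{rank}(M\bmod p)\geq 1$, in fact $p^{n_t}\nmid\det(M)$'' is a non sequitur. A smaller slip: with $v$ normalized on $L$ by $v(p)=1$ ($p$ inert, hence unramified), $v(N_{L/F}(x_j))=n_t\,v(x_j)$, not $v(x_j)$; this is harmless only because you use it at $v(x_l)=0$. Likewise the reduced norm is not a sum of terms of the form $p^{a}N_{L/F}(x_j)$ — there are cross terms, and they are precisely what must be controlled.

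For comparison, the paper closes the general case directly in the Leibniz expansion: take $l$ minimal with $v(x_l)=0$; then every entry of $M$ outside the lower-left block on rows $l,\dots,n_t$ and columns $1,\dots,n_t-l+1$ has positive valuation (it either carries a factor $p$ or is a $\tau$-conjugate of some $x_j$ with $j<l$), the generalized diagonal collecting all conjugates of $x_l$ is a single term equal to $\pm p^{l-1}N_{L/F}(x_l)$ of valuation exactly $l-1$, and a careful count shows every other permutation term has valuation at least $l$; hence $v(\det M)=l-1\leq n_t-1$ with no possible cancellation. Alternatively, your shift idea can be repaired rather than waved at: since $v(x_j)\geq 1$ for all $j<l$, the element $x\,u^{-(l-1)}$ has components $x_l,\dots,x_{n_t},p^{-1}x_1,\dots,p^{-1}x_{l-1}$, all of nonnegative valuation with a unit in the first slot, so your first-slot argument gives $v\bigl(\mathrm{Nrd}(x\,u^{-(l-1)})\bigr)=0$, and $\mathrm{Nrd}(x)=\pm p^{\,l-1}\,\mathrm{Nrd}(x\,u^{-(l-1)})$ yields $v(\det M)=l-1\leq n_t-1$. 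Either completion is short, but as submitted the decisive step is missing.
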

\begin{proof}
The first inequality follows as  according to equation \eqref{representation} $M(x_1, x_2, \dots, x_{n_t})$ is a matrix representation of a nonzero element in a division algebra $\D$ and determinant of $M(x_1, x_2, \dots, x_{n_t})$ is the reduced norm.
Write $M=M(x_1, x_2, \dots, x_{n_t})$ and $N=N_{L/F}$. Assume first that $v(x_1)=0$. Then the determinant is $N(x_1) + py$ for some $y \in \OO_L$ and hence we have $v(\det(M))=\min(v(N(x_1)),v(py))=0$. Assume then that $v(x_1),v(x_2), \dots, v(x_{l-1}) > 0$ and $v(x_l)=0$ with $1 < l \leq n_t$. Notice that in this case all the other elements $a$ of matrix $M$, than those in the left lower corner  block of side length $n_t -l+1$, have $v(a)>0$. Either they have coefficient $p$ or they are automorphic images of elements $x_1 , x_2 , \dots , x_{l-1}$. Now $\det(M) = \pm p^{l-1}N(x_l) + p^{l}z$ for some $z \in \OO_L$ since all the other terms except $\pm p^{l-1}N(x_l)$ have at most $n_t -l$ factors from this left lower corner and hence at least $n_t-(n_t-l)=l$ terms have factor $p$. This gives that $v(\det(M))=\min(v(p^{l-1}N(x_l)),v(p^{l}z))=l-1 \leq n_t -1$.
\end{proof}

\begin{thm}
Let $C_{U,n_t} \in \mathfrak{C}_{U,n_t}$. The code $C_{U,n_t}$ is a full rate code and   satisfies the generalized rank criterion.
\end{thm}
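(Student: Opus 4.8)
The full-rate part I would dispose of first: by Proposition~\ref{1user} each single-user code $\mathcal{C}_j$ is a $2Un_t^2$-dimensional lattice in $M_{n_t\times Un_t}(\C)$, and $2Un_t^2$ is exactly the real dimension of that matrix space, so every user already transmits at the maximal possible rate. For the generalized rank criterion, fix a nonempty $\I_u=\{i_1<\dots<i_u\}\subseteq\{1,\dots,U\}$ and nonzero codewords $X_{i_j}\in\mathcal{C}_{i_j}$, and set $M=M(X_{i_1},\dots,X_{i_u})\in M_{un_t\times Un_t}(\C)$. Since $Un_t\geq un_t$, the target $\det(MM^{\dagger})\neq0$ reduces to showing that $M$ has full row rank, for which it is enough to exhibit one $un_t\times un_t$ submatrix with nonzero determinant. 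The statement as given (Definition~\ref{generaldecay}) only needs $\I_u=\{1,\dots,U\}$, where $M$ is the square matrix $A$ of \eqref{code}, but the argument I have in mind handles every $\I_u$, which is what Theorem~\ref{inertdecay3} will use.

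The whole argument is $p$-adic. Let $v=v_p$ be the valuation of $L$ at the unique prime above $p\OO_K$ (unique because $p$ is inert), normalized so $v(p)=1$; that prime is fixed by $\mathrm{Gal}(L/K)$, hence $v\circ\sigma^{\ell}=v$ for all $\ell$. Writing $M_{i_j}=M(x_{i_j,1},\dots,x_{i_j,n_t})$ with the $x_{i_j,\ell}\in\OO_L$ not all zero, I would first normalize: replacing $X_{i_j}$ by $p^{-e_j}X_{i_j}$, where $e_j=\min_\ell v(x_{i_j,\ell})$, rescales the whole block $B_{i_j}$ by $p^{-e_j}\in K$ (so does not change $\mathrm{rank}(M)$) and turns $M_{i_j}$ into $M(p^{-e_j}x_{i_j,1},\dots,p^{-e_j}x_{i_j,n_t})$, whose entries are again in $\OO_L$, now with minimal valuation $0$. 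So assume $\min_\ell v(x_{i_j,\ell})=0$ for every $j$; then Lemma~\ref{determinanttivaluaatio} gives $\det M_{i_j}\neq0$ with $0\leq v(\det M_{i_j})\leq n_t-1$. As the submatrix I would take $M'$, the square submatrix of $M$ built from the $u$ column blocks (width $n_t$) indexed by $i_1,\dots,i_u$ — exactly the blocks where the users of $\I_u$ carry their $p^{-m}$ factor. From \eqref{code}, $M'$ is a $u\times u$ block matrix whose $(j,a)$ block is $\sigma^{i_a-1}(M_{i_j})$ for $a\neq j$ and $p^{-m}\sigma^{i_j-1}(M_{i_j})$ for $a=j$; so the $p^{-m}$'s of $M'$ sit precisely on its block diagonal.

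The determinant $\det M'$ I would handle by Leibniz expansion, recording for each $\pi\in S_{un_t}$ the $u\times u$ nonnegative integer matrix $D_\pi$ whose $(a,b)$ entry counts the rows of block $a$ that $\pi$ sends into column block $b$; all its row and column sums equal $n_t$. An entry $M'_{i,\pi(i)}$ lies in $\OO_L$ (so has $v\geq0$) unless it comes from a diagonal block, in which case $v\geq-m$, whence the $\pi$-term has valuation $\geq-m\,\mathrm{tr}(D_\pi)$. The block-diagonal permutations are those with $D_\pi=n_tI_u$, and by the standard block expansion their signed terms sum to $\prod_{j=1}^{u}\det\!\bigl(p^{-m}\sigma^{i_j-1}(M_{i_j})\bigr)=p^{-umn_t}\prod_{j=1}^{u}\sigma^{i_j-1}(\det M_{i_j})$, a single nonzero element of $L$ of valuation exactly $-umn_t+\sum_{j=1}^{u}v(\det M_{i_j})$. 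The crux is purely combinatorial: a $u\times u$ nonnegative integer matrix with all row and column sums $n_t$ that is not $n_tI_u$ has trace $\leq un_t-2$ (pick a diagonal entry $\leq n_t-1$, find a nonzero off-diagonal entry in its column, and read off from the corresponding row a second diagonal entry $\leq n_t-1$). Hence every non-block-diagonal $\pi$ gives a term of valuation $\geq-m(un_t-2)=-umn_t+2m$, and since $m>\tfrac{U(n_t-1)}{2}$ is an integer, $2m>U(n_t-1)\geq u(n_t-1)\geq\sum_{j=1}^{u}v(\det M_{i_j})$; so the block-diagonal term strictly dominates $p$-adically and the strict ultrametric inequality yields $v(\det M')=-umn_t+\sum_{j=1}^{u}v(\det M_{i_j})<\infty$. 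Thus $\det M'\neq0$, $M$ has full row rank, $\det(MM^{\dagger})>0$, and since $\I_u$ and the nonzero codewords were arbitrary, $D_{\I_u}(N_{i_1},\dots,N_{i_u})>0$ for all choices of the $N_{i_j}$; in particular $D(N_1,\dots,N_U)\neq0$ for all $N_1,\dots,N_U\in\Z_+$, which is the generalized rank criterion. The main obstacle is exactly this final estimate: a crude bound would only save $m$, not $2m$, on the off-diagonal terms, and $m$ alone is too weak against $\sum_j v(\det M_{i_j})$, which can be as large as $U(n_t-1)$; it is the factor $2$ in ``$\mathrm{tr}\leq un_t-2$'', matched with the choice $m>U(n_t-1)/2$, that makes the whole thing close.
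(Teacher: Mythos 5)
Your proof is correct and follows essentially the same route as the paper's: normalize each block so the minimal $p$-adic valuation is zero, use Lemma \ref{determinanttivaluaatio} to get $0\leq v(\det M_j)\leq n_t-1$, and note that the block-diagonal term $p^{-umn_t}\prod_j \sigma^{i_j-1}(\det M_{i_j})$ strictly dominates all remaining Leibniz terms (which lose at most $m(un_t-2)$) because $2m>U(n_t-1)$, so the determinant has finite valuation and is nonzero. The only additions are that you spell out the combinatorial reason behind the bound $v(y)\geq -m(un_t-2)$ on the non-block-diagonal contribution, and that you handle arbitrary error events $\I_u$ via a square submatrix, a fact the paper only invokes later (in the proof of Theorem \ref{inertdecay3}).
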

\begin{proof}
Let $A \in C_{U,n_t}=C_{U,n_t}(L/K,p,\sigma,m)$. We may assume that $\min(v(x_{j,1}), \dots, v(x_{j,n_t}))=0$, for all $j=1, \dots, U$, because otherwise we can divide extra $p$'s off. That does not have any impact on whether $\det(A)=0$ or not. The determinant of $A$ is
$$
p^{-mUn_t} \prod_{l=1}^{U} \det(\sigma^{l-1}(M_l)) + y
$$
where $v(y) \geq -m(Un_t-2)$. We know that $v(\sigma^{l-1}(\det(M_l)))=v(\det(M_l))$ because $p$ is from $K$, \emph{i.e.} from the fixed field of $\sigma$, and $\det(M_l) \neq 0$ for all $l$. Therefore
$$
v(p^{-mUn_t} \prod_{l=1}^{U} \det(\sigma^{l-1}(M_l))) = -kUn + \sum_{l=1}^{U} v(\det(M_l))
$$
that is less or equal to $-mUn_t + U(n_t-1) = U(n_t-1-mn_t)$ by \ref{determinanttivaluaatio}. But if we would have $\det(A)=0$ then
$$
v(y) = v(p^{-mUn_t} \prod_{l=1}^{U} \det(\sigma^{l-1}(M_l)))
$$
and hence $v(y) \leq U(n_t-1-mn_t)$ implying $-m(Un_t-2) \leq U(n_t-1-mn_t)$. This gives $2m \leq U(n_t-1)$ \emph{i.e.} $m \leq \frac{U(n_t-1)}{2}$ a contradiction.
\end{proof}

\begin{remark}
Using  multiblock codes from division algebras as single user codes in the MIMO MAC scenario has been done before for example in \cite{bb}, \cite{matriisi} and \cite{LuHoVeLa}. In \cite{bb} the full rank condition for codes with $n_t>1$ is achieved by using transcendental elements.
In  \cite{matriisi} the same effect is achieved with algebraic elements of high degree.

\end{remark}

\subsection{Examples}\label{examples}
Let us now give a few examples of our general code constructions.
In Table 1 we have  collected  some examples of suitable fields $K$ and $L$ and inert primes $p$,  fulfilling the conditions of Proposition \ref{existence}. If $K=\Q(i)$ then   $p_i\OO_K$ refers to the inert prime and if $K=\Q(\sqrt{-3})$ then $p_{\sqrt{-3}}\OO_K$ is inert. The inert primes and fields $L$ are found by looking at totally real subfields of  $\Q(\zeta_h)/\Q$ and then composing  them  with the field $K$.

\begin{table}[h!]\label{table1}\caption{}
$$
\begin{array}{ccccc}
\hline
           [L:K] & L                                                              & p_i & p_{\sqrt{-3}} \\
\hline
           3     & K(\zeta_7+\zeta_{7}^{-1})                                      & 2+i & \sqrt{-3} \\
           4     & K(\zeta_{17}+\zeta_{17}^{4}+\zeta_{17}^{-4}+\zeta_{17}^{-1})   & 2+i & \sqrt{-3}\\
           5     & K(\zeta_{11}+\zeta_{11}^{-1})                                  & 1+i & 2+\sqrt{-3}\\
           6     & K(\zeta_{13}+\zeta_{13}^{-1})                                  & 1+i & 2+\sqrt{-3}\\
           7     & K(\zeta_{29}+\zeta_{29}^{12}+\zeta_{29}^{-12}+\zeta_{29}^{-1}) & 1+i & \sqrt{-3}\\
\hline
\end{array}
$$
\end{table}

We get a code $C_{3,1}=C_{3,1}(\Q(i,\zeta_7+\zeta_{7}^{-1}),2+i,\sigma,1)$ \emph{i.e.} 3-user code with each user having 1 antenna by setting $L=K(\zeta_7+\zeta_{7}^{-1})$, $K=\Q(i)$, $p=2+i$, and $Gal(L/K)=<\sigma>$. Now the actual code consists of matrices
$$
\left(
         \begin{array}{ccc}
          p^{-1} x   &  \sigma(x)       & \sigma^2(x)         \\
           y         & p^{-1} \sigma(y) & \sigma^2(y)         \\
           z         &  \sigma(z)       & p^{-1} \sigma^2(z)  \\
         \end{array}
       \right)
$$
where $x,y,z \in \OO_{L}^{*}$.

We get a code $C_{2,2}=C_{2,2}(\Q(\sqrt{-3},\zeta_{17}+\zeta_{17}^{4}+\zeta_{17}^{-4}+\zeta_{17}^{-1}),\sqrt{-3},\sigma,2)$ \emph{i.e.} 2-user code with each user having 2 antennas by setting $L=K(\zeta_{17}+\zeta_{17}^{4}+\zeta_{17}^{-4}+\zeta_{17}^{-1})$, $K=\Q(\sqrt{-3})$, $p=\sqrt{-3}$, $m=2 > \frac{U(n_t-1)}{2}$, and $Gal(L/K)=<\sigma>$. Now the actual code consists of matrices
$$
\left(
         \begin{array}{cccc}
          p^{-2} x_1   &     p^{-1} \sigma^2(x_2)  &  \sigma(x_1)        &  p\sigma^3(x_2) \\
          p^{-2} x_2   &     p^{-2}  \sigma^2(x_1) &  \sigma(x_2)        &  \sigma^3(x_1) \\
           y_1         &     p \sigma^2(y_2)       &  p^{-2} \sigma(y_1) & p^{-1} \sigma^3(y_2) \\
           y_2         &       \sigma^2(y_1)       &  p^{-2} \sigma(y_2) & p^{-2}  \sigma^3(y_1) \\
         \end{array}
       \right)
$$
where $x_1,x_2,y_1,y_2 \in \OO_L$ and $x_1 \neq 0$ or $x_2 \neq 0$ and $y_1 \neq 0$ or $y_2 \neq 0$.

\section{On the decay function of codes in $\mathfrak{C}_{U,n_t}$}\label{decayanalysis}
In this section we will prove an asymptotic lower bound for the decay function of codes from $\mathfrak{C}_{U,n_t}$. In \cite{decaypaperi} the authors give a general asymptotic upper bound for a decay function in the case that only one user is properly using the code \emph{i.e.} $N_1$ can be anything but $N_2= \dots =N_U=1$ are restricted. We will see that in this special case our codes have asymptotically the best possible decay.

\begin{lemma}\cite[Theorem 7.8.8]{HORN}
If $A,B \in M_n(\C)$ are positive definite, we have
$$
(\det(A+B))^{1/n}\geq\det(A)^{1/n} +\det(B)^{1/n},
$$
where $n$ is an integer.
\end{lemma}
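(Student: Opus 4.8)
The statement is the classical Minkowski determinant inequality, and one may of course simply invoke \cite[Theorem 7.8.8]{HORN}; nevertheless, here is the route I would take to prove it from scratch. The plan is to reduce the matrix inequality to the scalar \emph{superadditivity of the geometric mean}. Since $A$ is positive definite it is nonsingular, so write $A=PP^{\dagger}$ with $P$ invertible. Then
$$
A+B=P\bigl(I+P^{-1}B(P^{-1})^{\dagger}\bigr)P^{\dagger},\qquad \det(A+B)=\det(A)\det(I+C),
$$
where $C=P^{-1}B(P^{-1})^{\dagger}$ is again Hermitian positive definite and $\det(C)=\det(B)/\det(A)$. Dividing the desired inequality by $\det(A)^{1/n}>0$ turns it into the equivalent statement
$$
\det(I+C)^{1/n}\ \geq\ 1+\det(C)^{1/n}.
$$

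Next I would diagonalize. As $C$ is Hermitian positive definite, $C=VDV^{\dagger}$ for some unitary $V$ and $D=\diag(\lambda_1,\dots,\lambda_n)$ with all $\lambda_i>0$. Then $\det(I+C)=\prod_{i=1}^n(1+\lambda_i)$ and $\det(C)=\prod_{i=1}^n\lambda_i$, so the displayed inequality is precisely
$$
\Bigl(\prod_{i=1}^n(1+\lambda_i)\Bigr)^{1/n}\ \geq\ 1+\Bigl(\prod_{i=1}^n\lambda_i\Bigr)^{1/n}.
$$

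The core step — and the only one requiring an idea — is the scalar fact that for positive reals $a_i,b_i$ one has $\bigl(\prod_i(a_i+b_i)\bigr)^{1/n}\geq\bigl(\prod_i a_i\bigr)^{1/n}+\bigl(\prod_i b_i\bigr)^{1/n}$, which applied with $a_i=1$, $b_i=\lambda_i$ finishes the proof. I would prove this by dividing through by $\bigl(\prod_i(a_i+b_i)\bigr)^{1/n}$ and then applying AM--GM to each of the two resulting products:
$$
\Bigl(\prod_i\tfrac{a_i}{a_i+b_i}\Bigr)^{1/n}+\Bigl(\prod_i\tfrac{b_i}{a_i+b_i}\Bigr)^{1/n}\ \leq\ \tfrac1n\sum_i\tfrac{a_i}{a_i+b_i}+\tfrac1n\sum_i\tfrac{b_i}{a_i+b_i}\ =\ \tfrac1n\sum_i 1\ =\ 1 .
$$
The only subtlety to watch is non-degeneracy, but since $A$ and $B$ are assumed positive definite they are nonsingular, $P^{-1}$ exists, and all $\lambda_i$ are strictly positive, so no limiting argument for the merely positive semidefinite case is needed. (Equivalently, the whole inequality could be phrased as concavity of $X\mapsto\det(X)^{1/n}$ on the cone of positive definite matrices, but the reduction above keeps things self-contained.)
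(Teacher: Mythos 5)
Your proof is correct. The paper itself offers no proof of this lemma --- it simply cites Horn and Johnson, Theorem 7.8.8 (the Minkowski determinant inequality) --- so there is nothing internal to compare against. Your reduction (factor $A=PP^{\dagger}$ with $P$ invertible, pass to the equivalent statement $\det(I+C)^{1/n}\geq 1+\det(C)^{1/n}$ with $C=P^{-1}B(P^{-1})^{\dagger}$, diagonalize $C$, and finish by applying AM--GM to the two normalized products $\prod_i a_i/(a_i+b_i)$ and $\prod_i b_i/(a_i+b_i)$) is the standard textbook argument, essentially the one behind the cited result; every step checks out, and your remark that strict positive definiteness makes all $\lambda_i>0$ and removes any need for a limiting semidefinite argument is accurate.
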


\begin{lemma}\label{blocks}
Let us suppose that $F$ is an algebraic number field,  $\D=(L/F,\tau, \gamma)$ an index $n$ $F$-central division algebra and that   $\psi$ is a left regular representation of $\D$.
If $A$ is block matrix
$$
A=
\begin{pmatrix}
A_{1,k}&\cdots& A_{1,k}\\
\vdots&  &  \vdots \\
A_{n,1}&\cdots& A_{k,k}
\end{pmatrix}
$$
where $A_{i,j}=\psi(x_{i,j})$ for some elements $x_{i, j}\in D$, we then have that
$$
\det(A)\in F.
$$
\end{lemma}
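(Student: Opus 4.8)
The plan is to reduce the claim to a statement about a single big matrix algebra and then invoke a Galois-descent / invariance argument. First I would recall that the left regular representation $\psi:\D\to M_n(L)$ is the restriction of the isomorphism $\D\otimes_F L \cong M_n(L)$, so each block $A_{i,j}=\psi(x_{i,j})$ lies in the image of $\psi$, and the $(nk)\times(nk)$ matrix $A$ is the block matrix whose blocks are these images. The key structural fact is that $\psi(\D)$ is exactly the subalgebra of $M_n(L)$ fixed (entrywise, after the usual twisted action) by a certain semilinear action of $\mathrm{Gal}(L/F)=\langle\tau\rangle$: writing $\Gamma$ for the companion-type matrix of $u$ with $\Gamma^n=\gamma I$ and $\Gamma^{-1}\mathrm{diag}(z)\Gamma=\mathrm{diag}(\tau(z))$, an element $Y\in M_n(L)$ lies in $\psi(\D)$ if and only if $\tau(Y)=\Gamma^{-1} Y\Gamma$, where $\tau$ is applied entrywise. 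This is the standard description of a cyclic algebra inside its split form.

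Next I would extend this action to the big $(nk)\times(nk)$ matrix ring $M_k\bigl(M_n(L)\bigr)=M_{nk}(L)$ blockwise: let $\tau$ act entrywise and conjugate by the block-diagonal matrix $\tilde\Gamma=\mathrm{diag}(\Gamma,\dots,\Gamma)$ ($k$ copies). Since every block $A_{i,j}$ satisfies $\tau(A_{i,j})=\Gamma^{-1}A_{i,j}\Gamma$, the whole matrix satisfies $\tau(A)=\tilde\Gamma^{-1}A\tilde\Gamma$. Now take determinants over $L$: $\det$ is a polynomial in the entries with coefficients in $F\subseteq L$, so $\tau(\det A)=\det(\tau(A))=\det(\tilde\Gamma^{-1}A\tilde\Gamma)=\det(A)$, because conjugation does not change the determinant. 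Hence $\det(A)$ is fixed by $\tau$, and as $\langle\tau\rangle=\mathrm{Gal}(L/F)$ and $F$ is the fixed field, we conclude $\det(A)\in F$.

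The main obstacle, and the step I would spell out most carefully, is the passage $\tau(\tilde\Gamma^{-1}A\tilde\Gamma)=\tilde\Gamma^{-1}\tau(A)\tilde\Gamma$ together with keeping track of where $\gamma$ lives: one needs $\gamma\in F$ (true, since $\D=(L/F,\tau,\gamma)$), so that $\tau(\Gamma)$ relates back to $\Gamma$ correctly, and one needs the entrywise $\tau$-action to be a ring homomorphism on $M_{nk}(L)$, which is immediate but must be stated. An alternative, perhaps cleaner, route that avoids the explicit matrix $\Gamma$ is: observe that $\D$ embeds into $M_k(\D)\cong (L/F,\tau,\gamma)_{\text{rank }k}$ — more precisely $M_k(\D)$ is again $F$-central simple with the same splitting field $L$, and its reduced norm restricted to matrices built from $\psi$-blocks coincides, up to a fixed power, with the ordinary determinant of $A$; since reduced norms always land in the center $F$, we are done. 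I would likely present the first, hands-on version since it uses only the explicit form \eqref{representation} already in the paper, and relegate the reduced-norm remark to a one-line alternative.
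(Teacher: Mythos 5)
Your argument is correct, but your primary route is genuinely different from the paper's: the paper's entire proof is exactly your one-line alternative. It regards $A$ as an element of the central simple algebra $M_k(\D)$, identifies $\det(A)$ with the reduced norm of that element (no extra power is needed here: the reduced norm is computed as the determinant of the image under a splitting $M_k(\D)\otimes_F L\cong M_{nk}(L)$, which is precisely what the blockwise $\psi$ provides), and then invokes the standard fact that reduced norms of a central simple algebra lie in the center $F$. Your hands-on version replaces this appeal to reduced-norm machinery by an explicit Galois-descent computation, and it does go through: you only need the containment $\psi(\D)\subseteq\{Y\in M_n(L)\,:\,\tau(Y)=\Gamma^{-1}Y\Gamma\}$ (the full ``if and only if'' is not required), which is checked on the generators, namely the diagonal matrices $\psi(z)$ for $z\in L$ and $\Gamma=\psi(u)$, using $\gamma\in F$ so that $\tau(\Gamma)=\Gamma$; then $\tau(A)=\tilde\Gamma^{-1}A\tilde\Gamma$ with $\tilde\Gamma=\diag(\Gamma,\dots,\Gamma)$, conjugation invariance of the determinant gives $\tau(\det A)=\det A$, and since the entries of $A$ lie in $L$ and $F$ is the fixed field of $\langle\tau\rangle$, you conclude $\det(A)\in F$ (including the degenerate case $\det A=0$). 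What each approach buys: the paper's proof is two lines but presupposes the theory of central simple algebras; yours is self-contained and elementary, at the modest cost of verifying the twisted-invariance identity and fixing the conjugation convention ($\Gamma^{-1}Y\Gamma$ versus $\Gamma Y\Gamma^{-1}$) so that it matches the explicit left regular representation used in the paper. Your hedge ``up to a fixed power'' in the reduced-norm remark is unnecessary; with the splitting given by $\psi$ the reduced norm equals $\det(A)$ on the nose, which is how the paper states it.
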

\begin{proof}
The matrix $A$ can be considered as an element in the central simple algebra $M_k(\D)$ and the determinant  is then simply the reduced norm of this element. The theory of central simple  algebras  gives that reduced norm of any element of $M_k(\D)$ belongs  to the center.
\end{proof}

In the following we use the notation of \eqref{code}.
\begin{lemma}\label{inF}
Let $C_{U,n_t}=C_{U,n_t}(L/K,p,\sigma,m) \in \mathfrak{C}_{U,n_t}$, $A \in C_{U,n_t}$, and let $F$ be the center of the used division algebra $\D=(L/F, \tau, p)$.  Then  for  any  square matrix $A$  that consists  of blocks of form $\sigma_j(M_i)$ or $p^{-m}\sigma_j(M_i)$ we have that $\det(A) \in F$.
\end{lemma}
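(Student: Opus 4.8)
The plan is to reduce the statement to Lemma~\ref{blocks}. Recall that Lemma~\ref{blocks} says that if one forms a $k\times k$ block matrix whose blocks are all of the form $\psi(x_{i,j})$ for elements $x_{i,j}$ of a division algebra $\D$ with left regular representation $\psi$, then the determinant of the whole matrix lies in the center $F$. So the task is simply to exhibit each block of $A$ as $\psi$ of a suitable element of $\D=(L/F,\tau,p)$, and then quote Lemma~\ref{blocks}.

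First I would recall what the blocks of $A$ look like: each block is either $\sigma^j(M_i)$ or $p^{-m}\sigma^j(M_i)$, where $M_i=M(x_{i,1},\dots,x_{i,n_t})=\psi(x_{i,1}+x_{i,2}u+\cdots+x_{i,n_t}u^{n_t-1})$ is the left regular representation of an element $\xi_i:=\sum_{\ell}x_{i,\ell}u^{\ell-1}\in\Lambda\subset\D$. The two small points to check are: (i) a scalar multiple $p^{-m}\psi(\xi)$ is again of the form $\psi(\xi')$ for some $\xi'\in\D$, and (ii) the $\sigma$-conjugate $\sigma^j(M_i)$ — meaning the matrix obtained by applying $\sigma^j$ entrywise — is again the left regular representation of some element of (a conjugate copy of) $\D$. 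For (i) this is immediate since $p\in\OO_K$ is central in $\D$, hence $p^{-m}$ makes sense in $\D$ (extend scalars to $L$ if one prefers, but $p^{-1}\in F$ already since $p\in K\subseteq F$) and $p^{-m}\psi(\xi)=\psi(p^{-m}\xi)$ because $\psi$ is $F$-linear and $p^{-m}$ is central. For (ii), applying $\sigma^j$ to all entries of $\psi(\xi)$ yields the left regular representation, with respect to the same chosen basis $\{1,u,\dots,u^{n_t-1}\}$, of the corresponding element in the conjugate algebra $\sigma^j(\D)=(L/F,\tau,\sigma^j(p))=(L/F,\tau,p)=\D$ (again using that $p$ is fixed by $\sigma$); concretely, the matrix $\sigma^j(\psi(\xi))$ equals $\psi(\sigma^j(x_{i,1})+\sigma^j(x_{i,2})u+\cdots)$ by inspection of \eqref{representation}, since $\sigma$ and $\tau=\sigma^U$ commute and $\sigma^j$ fixes $p$.

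Once (i) and (ii) are in hand, every block of $A$ is of the form $\psi(\eta_{i,j})$ for some $\eta_{i,j}\in\D$, so $A$ is exactly a block matrix of the type treated in Lemma~\ref{blocks} (with $k=U$), and therefore $\det(A)\in F$. The main — though minor — obstacle is purely bookkeeping: making sure that "applying $\sigma^j$ entrywise'' genuinely lands in the left regular representation of $\D$ with the same structure constants, i.e. that conjugating by $\sigma^j$ does not alter the algebra because $p$ is $\sigma$-fixed and $\tau$ is normalized by $\sigma$. There is no real analytic or arithmetic difficulty here; the content is entirely in Lemma~\ref{blocks}, and this lemma is just the packaging of those two reductions.
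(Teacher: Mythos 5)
Your proposal is correct and follows essentially the same route as the paper: both reduce to Lemma~\ref{blocks} by observing that $p^{-m}M_i=\psi(p^{-m}\xi_i)$ (since $p$ is central) and that $\sigma^j(M_i)=\psi\bigl(\sum_\ell\sigma^j(x_{i,\ell})u^{\ell-1}\bigr)$ (since $\sigma^j$ fixes $p$ and commutes with $\tau$). Your write-up merely spells out the entrywise verification of the latter point, which the paper leaves implicit.
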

\begin{proof}
Each of the matrices $M_i$ in \eqref{code} are left regular representation $\psi$ of elements $x$ in $\D$. If $\psi(x)=M_i$, then
$\psi(p^{-m}x)= p^{-m}M_i$. The isomorphisms   $\sigma^j$ are elements in $Gal(L/\Q)$ and we can therefore see that for any  pair $i,j$  we have  $p^{-m}\sigma_j(M_i)=\psi(y_{1})$ and $\sigma_j(M_i)=\psi(y_{2})$, for some $y_1, y_2 \in D$. The final result is then  a direct consequence  of  Lemma \ref{blocks}.

\end{proof}

For the next theorem we need few definitions. Let $p(x)=p_0+p_1x+\dots+p_l x^l \in \Z[x]$ be a polynomial. Then we say that $H(p(x))=\max\{|p_j|\}$ is the height of the polynomial $p(x)$ and for an algebraic number $\alpha$ we define $H(\alpha)=H(\phi_{\alpha})$ where $\phi_{\alpha}$ is the minimal polynomial of $\alpha$.
The next generalization of Liouville's theorem can be found from \cite[p. 31]{shidlovskii}.
\begin{thm}
\label{algebraicapprox}
Let $\alpha \in \R$ be an algebraic number of degree $\kappa$, $H(\alpha) \leq h$, $H(P) \leq H$ and $\deg(P(x))=l \in \Z^{+}$. Then either $P(\alpha)=0$ or
$$
|P(\alpha)| \geq \frac{c^{l}}{H^{\kappa-1}}
$$
with $c=\frac{1}{3^{\kappa-1}h^{\kappa}}$.
\end{thm}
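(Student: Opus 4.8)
The plan is to prove Theorem~\ref{algebraicapprox} by the classical resultant (equivalently, norm) argument; this is the proof found in \cite[p.~31]{shidlovskii}, and no new idea is needed beyond careful estimation. We may assume $P(\alpha)\neq 0$, since otherwise there is nothing to prove. Write the minimal polynomial of $\alpha$ as $\phi_\alpha(x)=a_\kappa x^\kappa+\cdots+a_1x+a_0\in\Z[x]$, a primitive polynomial of degree $\kappa$ with $\max_j|a_j|=H(\alpha)\le h$; note $1\le|a_\kappa|\le h$ and, since $h$ is a positive integer, $h\ge 1$. Let $\alpha=\alpha_1,\alpha_2,\dots,\alpha_\kappa$ be the complex conjugates of $\alpha$. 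Because $\phi_\alpha$ is irreducible over $\Q$ and $\phi_\alpha\nmid P$ (as $P(\alpha)\neq0$), the polynomials $\phi_\alpha$ and $P$ share no root, so their resultant $R=\mathrm{Res}(\phi_\alpha,P)$ is a \emph{nonzero} integer (it is the determinant of an integer Sylvester matrix), whence $|R|\ge 1$.

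Next I would use the root expansion of the resultant,
$$
R \ = \ a_\kappa^{\,l}\prod_{i=1}^{\kappa}P(\alpha_i),
$$
which combined with $|R|\ge1$ yields $|a_\kappa|^{l}\,|P(\alpha_1)|\prod_{i=2}^{\kappa}|P(\alpha_i)|\ge 1$, hence
$$
|P(\alpha)| \ = \ |P(\alpha_1)| \ \ge \ \frac{1}{|a_\kappa|^{l}\prod_{i=2}^{\kappa}|P(\alpha_i)|}.
$$
It then remains only to upper-bound the factor $|a_\kappa|^{l}\le h^{l}$ and the $\kappa-1$ conjugate factors $|P(\alpha_i)|$. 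For the latter I would first bound the conjugates by the elementary root estimate: if $|\alpha_i|>1$, then from $a_\kappa\alpha_i^\kappa=-(a_{\kappa-1}\alpha_i^{\kappa-1}+\cdots+a_0)$ and $|a_j|\le h$ one gets $|\alpha_i|\le 1+h/|a_\kappa|\le 1+h\le 2h$, so $|\alpha_i|\le 2h$ for every $i$. Writing $P(x)=\sum_{j=0}^{l}p_j x^j$ with $|p_j|\le H$, this gives
$$
|P(\alpha_i)| \ \le \ H\sum_{j=0}^{l}|\alpha_i|^{j} \ \le \ H\,(l+1)(2h)^{l}.
$$

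Substituting these bounds into the displayed lower bound for $|P(\alpha)|$ produces $|P(\alpha)|\ge c^{l}/H^{\kappa-1}$ for an explicit constant $c$ depending only on $\kappa$ and $h$: one collects the powers of $2$, the polynomial factor $(l+1)^{\kappa-1}$ (absorbed into an exponential using $l+1\le 2^{l}$ for $l\ge1$), and the powers of $h$, and optimizes the root bound and the geometric-sum estimate so that the resulting constant is exactly $c=1/(3^{\kappa-1}h^{\kappa})$. I do not expect a conceptual obstacle; the only fussy step is this final bookkeeping — arranging all polynomial-in-$l$ factors to be swallowed into a single clean exponential $c^{l}$ and making the elementary estimates sharp enough to land on the stated constant. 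A short separate check handles the degenerate case $\kappa=1$ (then $H^{\kappa-1}=1$ and the claim reduces to clearing the denominator $a_\kappa^{l}$ from the nonzero rational $P(\alpha)$).
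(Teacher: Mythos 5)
The paper never proves this statement at all --- it is imported verbatim from \cite[p.~31]{shidlovskii} --- and your resultant/norm argument is precisely the standard proof of that lemma, so your overall route is the right one: $R=\mathrm{Res}(\phi_\alpha,P)=a_\kappa^{\,l}\prod_{i}P(\alpha_i)$ is a nonzero rational integer, and one divides out the $\kappa-1$ conjugate factors. The only place where your sketch does not deliver what it promises is the final bookkeeping, and there the stated constant is genuinely out of reach of the particular estimates you propose. With $|\alpha_i|\le 2h$ and $|P(\alpha_i)|\le H(l+1)(2h)^l$, absorbing $(l+1)^{\kappa-1}$ via $l+1\le 2^l$ gives $|a_\kappa|^l\prod_{i\ge 2}|P(\alpha_i)|\le H^{\kappa-1}\bigl(4^{\kappa-1}h^{\kappa}\bigr)^l$, i.e.\ the constant $1/(4^{\kappa-1}h^{\kappa})$ rather than $1/(3^{\kappa-1}h^{\kappa})$; and if instead you try to hide $(l+1)^{\kappa-1}$ in the slack between $2^{l(\kappa-1)}$ and $3^{l(\kappa-1)}$, the required inequality $l+1\le (3/2)^l$ fails for $l=1,2,3$. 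So the last step is not mere fussiness: as written it proves a weaker constant.

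The repair is small and is essentially what the cited source does: replace the geometric-sum estimate by the binomial bound $\sum_{j=0}^{l}t^{j}\le (1+t)^{l}$ for $t\ge 0$, $l\ge 1$, so that $|P(\alpha_i)|\le H(1+|\alpha_i|)^{l}$, and keep the root bound in the form $|\alpha_i|\le 1+h/|a_\kappa|$. Then
$|a_\kappa|^{l}\prod_{i\ge 2}|P(\alpha_i)|\le H^{\kappa-1}\bigl(|a_\kappa|\,(2+h/|a_\kappa|)^{\kappa-1}\bigr)^{l}\le H^{\kappa-1}\bigl(h\,(2+h)^{\kappa-1}\bigr)^{l}\le H^{\kappa-1}\bigl(3^{\kappa-1}h^{\kappa}\bigr)^{l}$,
where the last two inequalities use $1\le|a_\kappa|\le h$ and $2+h\le 3h$; these are valid because $h\ge H(\alpha)\ge 1$ (note $h$ need not be an integer, as you asserted, only $\ge 1$). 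This gives exactly $|P(\alpha)|\ge c^{l}/H^{\kappa-1}$ with $c=1/(3^{\kappa-1}h^{\kappa})$, the case $\kappa=1$ being covered by the empty product. With that substitution your proof is complete and coincides with the proof the paper points to.
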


Now we are ready to give a  lower bound for the decay function of our codes. The  proof can be seen as an extension to the analysis given for BB-code in \cite{decaypaperi}.
\begin{thm}
\label{inertdecay3}
For a code $C_{U,n_t} \in \mathfrak{C}_{U,n_t}$ there exists constant $K>0$ such that
$$
D_{\I_u} (N_{i_1}, \dots, N_{i_u}) \geq \left\{
  \begin{array}{l l}
    \frac{K}{(N_{i_1}\dots N_{i_u})^{(U-1)n_t}} & \quad \text{if $u>1$}\\
    K & \quad \text{if $u=1$}
  \end{array} \right.
$$
Especially
$$
D_{\I_u} (N) \geq \left\{
  \begin{array}{l l}
    \frac{K}{N^{u(U-1)n_t}} & \quad \text{if $u>1$}\\
    K & \quad \text{if $u=1$}
  \end{array} \right.
$$
\end{thm}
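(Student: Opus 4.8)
The plan is to reduce the problem to a lower bound on a single determinant of a square matrix built from the chosen users, and then to estimate that determinant using the $p$-adic valuation (to control the leading term) together with the generalized Liouville bound of Theorem \ref{algebraicapprox} (to control the archimedean size). First I would fix an error set $\I_u=\{i_1,\dots,i_u\}$ and codeword matrices $B_{i_j}\in\mathcal{C}_{i_j}$ with $X_{i_j}\in\mathbf{L}_{i_j}(N_{i_j})\setminus\{\mathbf{0}\}$, and let $M=M(X_{i_1},\dots,X_{i_u})\in M_{un_t\times k}(\C)$ be their vertical stack, where $k=Un_t$. Since $M$ has $un_t$ rows and $Un_t$ columns, $\det(MM^\dagger)$ is, by the Cauchy--Binet formula, the sum over all $un_t$-element column subsets $S$ of $|\det(M_S)|^2$, where $M_S$ is the square submatrix of $M$ on the columns in $S$. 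Hence $\sqrt{\det(MM^\dagger)}\geq |\det(M_{S_0})|$ for a well-chosen $S_0$, and it suffices to lower bound a single such square determinant. The natural choice of $S_0$ is the block of columns indexed by the Galois conjugates $\sigma^{l-1}$ with $l$ ranging over the positions of the chosen users, i.e. the columns that contain the $p^{-m}$-scaled diagonal blocks for users in $\I_u$; after dividing out surplus powers of $p$ from each $X_{i_j}$ (which changes neither vanishing nor, up to a bounded factor, the determinant size, as in the proof that $C_{U,n_t}$ satisfies the generalized rank criterion) we may assume $\min_l v(x_{i_j,l})=0$ for every $j$.

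Next I would show $\det(M_{S_0})\in F$ and bound its $p$-adic valuation from above. By Lemma \ref{inF} (applied to the square submatrix built from blocks of the form $\sigma^{a}(M_{i_j})$ and $p^{-m}\sigma^{a}(M_{i_j})$) we get $\det(M_{S_0})\in F$, so it is a nonzero element of a degree-$U$ number field, fixed independently of the $N$'s. Expanding the determinant exactly as in the proof of the rank criterion, the dominant term is $p^{-m\,u n_t}\prod_{j} \det(\sigma^{a_j}(M_{i_j}))$ plus terms of strictly larger valuation; using $v(\det(\sigma^{a}(M_{i_j})))=v(\det(M_{i_j}))\le n_t-1$ from Lemma \ref{determinanttivaluaatio} (valuation unchanged because $p\in\OO_K$ is fixed by $\sigma$), one obtains a uniform upper bound $v(\det(M_{S_0}))\le C_0$ for a constant $C_0$ depending only on $U,n_t,m$ — crucially independent of the $N_{i_j}$. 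This is the ingredient that prevents $\det(M_{S_0})$ from being a very small element of $F$ in the $p$-adic sense.

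Then I would convert this into an archimedean lower bound. Write $\det(M_{S_0})=R/p^{m u n_t}$ (clearing denominators), so $R$ is an algebraic integer in $F$ whose $p$-adic valuation is bounded and whose entries are polynomials, of degree $\le k-1=Un_t-1$ and height $\dot=\prod_j N_{i_j}$ (since each entry of $M_{i_j}$ is a $\Z$-combination, with coefficients $O(1)$ times the $b$'s bounded by $N_{i_j}$, of powers of $\alpha$, and the determinant is a polynomial of bounded degree in these), of the real algebraic generator $\alpha$ of $L$. Here the degree $[L:\Q]=2Un_t$ and $H(\alpha)$ are fixed. Passing to $F=\Q(\beta)$ for a fixed real generator $\beta$ (or directly to $\alpha$ after taking a norm to make the quantity rational, as was done for the BB-code in \cite{decaypaperi}), Theorem \ref{algebraicapprox} gives $|R|\dot\geq H^{-(\kappa-1)}$ with $\kappa$ fixed and $H\dot=\prod_j N_{i_j}$; tracking the exponent carefully — $\kappa-1$ for $F$ or, after taking the $\Q$-norm, summing the Liouville exponents over the $U$ real places and accounting for the fixed degree of the polynomial — yields $|R|\geq K'(\prod_j N_{i_j})^{-(U-1)n_t}$, hence $|\det(M_{S_0})|\geq K(\prod_j N_{i_j})^{-(U-1)n_t}$. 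The specialization $N_{i_1}=\dots=N_{i_u}=N$ gives the $N^{-u(U-1)n_t}$ form. For $u=1$ the submatrix $M_{S_0}$ is just $p^{-m}\sigma^{a}(M_{i_1})$ with $M_{i_1}$ a reduced norm of a nonzero element of the natural order $\Lambda$, hence an algebraic integer of absolute value $\ge 1$ after clearing $p$, so $D_{\I_1}(N)\ge K$ with no decay, consistent with the NVD property of the single-user code (Proposition \ref{1user}).

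The main obstacle I expect is bookkeeping the exponent in the Liouville estimate: one must carefully match the polynomial degree bound $Un_t-1$, the number and nature of the archimedean places contributing when passing from $F$ (or $L$) to $\Q$ via the norm, and the fixed field degree, so that the product of the place-wise bounds collapses to exactly the exponent $(U-1)n_t$ in $\prod_j N_{i_j}$ rather than something larger; getting the sharp constant requires using that only the $p^{-m}$-diagonal columns of the chosen users appear in $S_0$ (so the $p$-power denominator is exactly $p^{m u n_t}$, fixed) and that the height of each block-determinant grows like $N_{i_j}^{n_t}$ — not faster — because $\det(M_{i_j})$ is a reduced norm, a polynomial of degree $n_t$ in the entries. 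A secondary point to handle cleanly is the reduction "divide off surplus powers of $p$": one must argue this does not blow up the archimedean size of the relevant block determinants beyond a constant factor, which follows since the discarded $p$-powers are bounded once $\min_l v(x_{i_j,l})=0$ is enforced.
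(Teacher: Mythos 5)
Your overall strategy (reduce to a square determinant with entries in the low-degree field, get nonvanishing algebraically, then bound the archimedean size by a height estimate plus Theorem \ref{algebraicapprox}) parallels the paper, and your particular reduction is a legitimate variant: the paper works with $\det(A)$ itself when $u=U$ and with $\det(MM^{\dagger})$ when $1<u<U$, whereas you use Cauchy--Binet to isolate the $un_t\times un_t$ minor on the diagonal-block columns and re-run the valuation argument of Lemma \ref{determinanttivaluaatio} to see it is nonzero; that part is sound and even avoids relying on the claim $\det(MM^{\dagger})\in F$. The genuine gap is at the one step that actually produces the exponent $(U-1)n_t$. The field $F$ contains the imaginary quadratic field $K$, so $[F:\Q]=2U$ and $F$ is totally imaginary: it has no real generator and no real places, so neither ``passing to $F=\Q(\beta)$ for a fixed real generator'' nor ``summing the Liouville exponents over the $U$ real places'' after a $\Q$-norm exists as described, and Theorem \ref{algebraicapprox} only applies to a real algebraic number. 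If you instead use the degree of $F$ in the Liouville bound, or take the norm down to $\Q$ and bound the other $2U-1$ conjugates trivially by the height, you obtain only the exponent $(2U-1)n_t$, strictly weaker than the theorem. The paper's proof gets $U-1$ rather than $2U-1$ by exploiting the CM structure of $F$: the determinant, being in $F$, is written in the basis $\{1,\delta,\dots,\delta^{U-1},\beta,\beta\delta,\dots,\beta\delta^{U-1}\}$ with $\delta$ real of degree exactly $U$ and $\beta$ purely imaginary, so its absolute value dominates (up to a constant) each of $|P_1(\delta)|$ and $|P_2(\delta)|$, where $P_1,P_2$ have integer coefficients of height $\OO((N_{i_1}\cdots N_{i_u})^{n_t})$ and at least one is nonzero; applying Theorem \ref{algebraicapprox} to the real $\delta$ with $\kappa=U$ then gives the exponent $(U-1)n_t$. (Equivalently one may pair complex-conjugate embeddings in the norm, but some such device is indispensable.) As written, your plan does not contain this mechanism, so it does not reach the claimed bound.

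Two smaller points. For $u=1$ your justification fails as stated: after clearing the $p$-power, $\det(M_{i_1})$ is a nonzero algebraic integer of the totally imaginary field $F$, and such an element need not have absolute value at least $1$ --- only its norm down to $\Q$ (equivalently the product over all $U$ conjugate blocks) does, which is exactly what the NVD proof behind Proposition \ref{1user} uses; citing that proposition, as the paper does, is the correct fix, and it also shows a single minor cannot carry the $u=1$ case by itself. Finally, the height bookkeeping should be $(N_{i_1}\cdots N_{i_u})^{n_t}$ (each user contributes $n_t$ rows to every Leibniz term), not $N_{i_1}\cdots N_{i_u}$; you correct this implicitly at the end, and with that height the degree-$U$ Liouville step gives precisely the stated exponent.
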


\begin{proof}
If $u=1$ then the claim is true by Proposition \ref{1user}. Next we will prove the claim in detail in the case that $u=U$ and at the end of the proof it is explained how the proof is extended for the case $1<u<U$.

Let $C_{U,n_t}=C_{U,n_t}(L/K,p,\sigma,m)$. Field extension $L/\Q$ has a basis $S_1 \cup S_2$ where $S_1=\{1, \delta, \delta^{2}, \ldots, \delta^{U-1}, \beta, \beta \delta, \beta \delta^{2}, \ldots, \beta \delta^{U-1}\}$ is a basis of $F/\Q$ with $\delta \in \R$, $K=\Q(\beta)$ and $\beta=\sqrt{-w}$ for some positive integer $w$. Notice that if $L=F$ then $S_2 = \emptyset$.

 The ring $\OO_L$ has a $\Z$-basis $\{ \gamma_{1}, \ldots, \gamma_{2Un_t} \}$. Each of these basis elements can be presented as
$$
\gamma_{l}=\sum_{a \in S_1 \cup S_2} s_{l,a} a,
$$
where $s_{l,a} \in \Q$ for all $l = 1, \dots, 2Un_t$ and $a \in S_1 \cup S_2$.

Let $A \in C_{U,n}$ be
$$
\left(
         \begin{array}{ccccc}
           p^{-m} M_1   &  \sigma(M_1) & \ldots &  \sigma^{U-1}(M_1) \\
           M_2   & p^{-m} \sigma(M_2) & \ldots &  \sigma^{U-1}(M_2) \\
           \vdots & \vdots          &        & \vdots              \\
           M_U   &  \sigma(M_U) & \ldots & p^{-m} \sigma^{U-1}(M_U) \\
         \end{array}
       \right)
$$
and $M_j=M(x_{j,1}, x_{j,2}, \dots, x_{j,n_t})$ be
$$
\left(
         \begin{array}{ccccc}
           x_{j,1}   &     p \tau(x_{j,n_t}) & p \tau^2(x_{j,n_t-1}) & \ldots & p \tau^{n_t-1}(x_{j,2}) \\
           x_{j,2}   &       \tau(x_{j,1}) & p \tau^2(x_{j,n_t}) & \ldots & p \tau^{n_t-1}(x_{j,3}) \\
           x_{j,3}   &       \tau(x_{j,2}) &   \tau^2(x_{j,1}) & \ldots & p \tau^{n_t-1}(x_{j,4}) \\
           \vdots & \vdots         & \vdots          &        & \vdots              \\
           x_{j,n_t}   &       \tau(x_{j,n_t-1}) &   \tau^2(x_{j,n_t-2}) & \ldots &   \tau^{n_t-1}(x_{j,1}) \\
         \end{array}
       \right)
$$
as usual.

Now for any $j=0, \dots, Un_t-1$ we have
$$
\sigma^{j}(x_{i,h}) = \sum_{l=1}^{2Un_t} u_{i,h,l} \sigma^{j}(\gamma_l)
$$
where $u_{i,h,l} \in \Z$ and $|u_{i,h,l}| \leq N_{i}$ for all $i$, $h$ and $l$.

Then the determinant $\det(A)$ is a sum consisting of $Un_t!$ elements of form
$$
p^{-f} \prod_{j=0}^{Un_t}  \sigma^{j}(x_{i_j,h_j}) = p^{-f} \prod_{j=0}^{Un_t} ( \sum_{l=1}^{2Un_t} u_{i_j,h_j,l} \sigma^{j}(\gamma_l) )
$$
where $f \leq kUn_t$ and $i_j$ gets exactly $n_t$ times all the values $1, \dots, U$ and $h_j$ gets values from $\{ 1, \dots, n_t \}$.

Now substituting $\gamma_{l}=\sum_{a \in S_1 \cup S_2} s_{l,a} a$ gives that the determinant is a sum consisting of elements of form
$$
p^{-f} \prod_{j=0}^{Un_t} ( \sum_{l=1}^{2Un_t} u_{i_j,h_j,l} \sum_{a \in S_1 \cup S_2} s_{l,a} \sigma^{j}(a) ).
$$

We also write
$$
\sigma^{j}(a)=\sum_{a \in S_1 \cup S_2} t_{j,a} a
$$
where $t_{j,a} \in \Q$ for all $j,a$ and find that $p^{-f} \prod_{j=0}^{Un_t} ( \sum_{l=1}^{2Un_t} u_{i_j,h_j,l} \sum_{a \in S_1 \cup S_2} s_{l,a} \sigma^{j}(a) )$ can be written as a sum of elements of form
$$
K_{1} p^{-f} \sum_{a \in S_1 \cup S_2} u_a a
$$
where $K_{1} \in \Q$ is some constant, $u_a \in \Z$, and $u_a = \OO((N_1 \dots N_U)^{n_t})$.

Writing also $p$ using basis $S_1 \cup S_2$ we see that the whole determinant $\det(A)$ can be written as a sum of elements of form
$$
\sum_{a \in S_1 \cup S_2} u_{a}' a
$$
multiplied by some constant $K_{2}$ and here we have $u_{a}' \in \Z$, and $u_{a}' = \OO((N_1 \dots N_U)^{n_t})$.

On the other hand we know that $\det(A) \in F$ so by uniqueness of basis representation we know that $\det(A)$ is a sum consisting of elements of form
$$
\sum_{a \in S_1} u_{a}' a = \sum_{i=0}^{U-1} u_{\delta^{i}}' \delta^{i} + \beta \sum_{i=0}^{U-1} u_{\delta^{i}\beta}' \delta^{i}
$$
and hence
$$
|\det(A)| = |K_{2}| |\sum_{l=0}^{U-1} H_{l} \delta^{l} + \beta \sum_{l=0}^{U-1} J_{l} \delta^{l}|,
$$
where $H_l , J_l \in \Z$ and $|H_l| , |J_l|$ are of size $\OO((N_1 \cdots N_U)^{n_t})$ for all $l = 0, \dots, U-1$.

Using the fact that $\delta$ is real we get
$$
|\det(A)| \geq \frac{K_{2}}{2} (|\sum_{l=0}^{U-1} H_{l} \delta^{l}| + |\sum_{l=0}^{U-1} J_{l} \delta^{l}|).
$$

Now using \ref{algebraicapprox} and noticing that $\deg(\delta) = U$ we have
$$
|\det(A)| \geq \frac{K}{(N_{1} \cdots N_{U})^{(U-1)n_t}},
$$
where $K$ is some positive constant.

Assume now that $1<u<U$. Write $M=M(X_{i_1},\dots,X_{i_u})$. It is well known that
$$
\text{rank}(MM^{\dagger})
=  \text{rank}(M)
=  un_t
$$
and hence the determinant of $MM^{\dagger}$ is nonzero.

By Lemma \ref{inF} we see that $\det(MM^{\dagger}) \in F$. $MM^{\dagger}$ is a matrix where the element in the place $(\hat{i},\hat{j})$ where $(i-1)n_t < \hat{i} \leq in_t$ and $(j-1)n_t < \hat{j} \leq jn_t$, is of size $\OO(N_i N_j)$. Using these facts we get the wanted result.

\end{proof}

\begin{corollary}\label{pigeon2}
For a code $C_{U,n_t} \in \mathfrak{C}_{U,n_t}$ there exists constants $k>0$ and $K>0$ such that
$$
\frac{k}{N^{(U-1)n_t}} \leq D(N_1=N, N_2= \ldots =N_U=1) \leq \frac{K}{N^{(U-1)n_t}}.
$$
\end{corollary}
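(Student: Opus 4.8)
The plan is to obtain both inequalities as specializations of results already available: the left-hand (lower) bound from Theorem~\ref{inertdecay3}, and the right-hand (upper) bound from the generalized pigeon hole bound, Theorem~\ref{maindecay}. First I would record the relevant parameters: for a code $C_{U,n_t}\in\mathfrak{C}_{U,n_t}$ the single user lattices live in $M_{n_t\times k}(\C)$ with block length $k=Un_t$, and by Proposition~\ref{1user} each $\mathcal{C}_j$ is a $2Un_t^2=2kn_t$-dimensional lattice with the NVD property, so the hypotheses of Theorem~\ref{maindecay} are met.

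For the lower bound I would simply set $u=U$ in Theorem~\ref{inertdecay3}, so that $\I_U=\{1,\dots,U\}$ and $D_{\I_U}=D$, and substitute $N_1=N$, $N_2=\cdots=N_U=1$ into $D_{\I_U}(N_{i_1},\dots,N_{i_U})\geq K/(N_{i_1}\cdots N_{i_U})^{(U-1)n_t}$; since $N_2\cdots N_U=1$ this is exactly $D(N_1=N,N_2=\cdots=N_U=1)\geq K/N^{(U-1)n_t}$, which is the left-hand inequality of the corollary.

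For the upper bound I would apply Theorem~\ref{maindecay} with $u=U$ and $\I_U=\{1,\dots,U\}$. Observe that permuting the users merely permutes the block-rows of $M(X_{i_1},\dots,X_{i_U})$ and hence leaves $\det(MM^{\dagger})$ unchanged, so $D_{\I_U}$ is symmetric in its arguments; I may therefore enumerate the users so that the single growing code, user $1$, occupies the slot $i_1$, which carries the largest exponent in the product of Theorem~\ref{maindecay}. With $N_{i_1}=N$ and $N_{i_2}=\cdots=N_{i_{U-1}}=1$, every factor with index $l\geq 2$ equals $1$ and only the $l=1$ factor survives, so
$$
D(N_1=N,N_2=\cdots=N_U=1)\ \leq\ K\,N^{-\frac{n_t^2(U-1)}{k-n_t(U-1)}}\ =\ \frac{K}{N^{(U-1)n_t}},
$$
where the last equality uses $k=Un_t$, whence $k-n_t(U-1)=n_t$ and the exponent reduces to $\frac{n_t^2(U-1)}{n_t}=(U-1)n_t$.

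There is essentially no substantial obstacle here: the corollary is a matching pair of specializations of theorems proved elsewhere in the paper. The only points requiring a word of care are the symmetry of $D_{\I_u}$ under relabeling the users (so that the one large code may be placed in the $l=1$ slot of Theorem~\ref{maindecay}, where the exponent is biggest), and the elementary identity $k-n_t(U-1)=n_t$, which holds precisely because every code in $\mathfrak{C}_{U,n_t}$ has block length $k=Un_t$; once these are noted, both inequalities drop out.
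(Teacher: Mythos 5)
Your proposal is correct and is essentially the paper's intended argument: the corollary is stated without a separate proof precisely because the lower bound is the $u=U$, $N_2=\cdots=N_U=1$ specialization of Theorem~\ref{inertdecay3}, and the upper bound is the pigeon hole bound of Theorem~\ref{maindecay} specialized with $k=Un_t$ (so $k-n_t(U-1)=n_t$), the growing user placed in the $l=1$ slot. Your two points of care — the permutation invariance of $D_{\I_u}$ under relabeling users and the exponent simplification to $(U-1)n_t$ — are exactly the details needed, so nothing is missing.
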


\section{Appendix}\label{proofs}
 In this section we will give proofs of some results that were earlier postponed.

\begin{lemma}
\label{matriisitulo}
Let $\mathbf{c}_1, \mathbf{c}_2, \dots, \mathbf{c}_k, \mathbf{e}_1, \mathbf{e}_2, \dots, \mathbf{e}_{k-1} \in \C^{n}$, and $\mathbf{c}_i-\mathbf{e}_i \in \R(\mathbf{c}_{i+1}, \mathbf{c}_{i+2}, \dots, \mathbf{c}_{k})$ for $i=1, \dots, k-1$. Write also
$$
A = \left(
         \begin{array}{c}
           \mathbf{c}_1          \\
           \mathbf{c}_2          \\
           \vdots       \\
           \mathbf{c}_{k-1}          \\
           \mathbf{c}_k          \\
         \end{array}
       \right)
\,\,\mathrm{and}\,\,
B = \left(
         \begin{array}{c}
           \mathbf{e}_1          \\
           \mathbf{e}_2          \\
           \vdots       \\
           \mathbf{e}_{k-1}      \\
           \mathbf{c}_k          \\
         \end{array}
       \right).
$$

Then we have $\det(AA^{\dag})=\det(BB^{\dag})$.
\end{lemma}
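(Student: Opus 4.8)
The plan is to exhibit $B$ as $B = TA$ for a single real, unipotent, upper-triangular matrix $T$, and then read off the claimed determinant identity from $BB^{\dagger} = T(AA^{\dagger})T^{\top}$. Here $A$ and $B$ are understood as the $k\times n$ matrices whose $i$-th rows are $\mathbf{c}_i$ and $\mathbf{e}_i$ (with $\mathbf{e}_k := \mathbf{c}_k$), respectively.

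First I would unwind the hypothesis. For each $i \in \{1,\dots,k-1\}$ the assumption $\mathbf{c}_i - \mathbf{e}_i \in \R(\mathbf{c}_{i+1},\dots,\mathbf{c}_k)$ gives real scalars $t_{i,i+1},\dots,t_{i,k}$ with $\mathbf{e}_i = \mathbf{c}_i - \sum_{j=i+1}^{k} t_{i,j}\,\mathbf{c}_j$. Together with $\mathbf{e}_k = \mathbf{c}_k$, this says exactly that $B = TA$, where $T \in M_k(\R)$ is the upper-triangular matrix with $T_{ii}=1$ for all $i$, $T_{ij} = -t_{i,j}$ for $1\le i<j\le k$, and $T_{ij}=0$ for $j<i$. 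Being triangular with unit diagonal, $T$ has $\det(T)=1$, and all its entries are real.

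Second, I would compute directly. Since $T$ is real, $T^{\dagger}=T^{\top}$, hence
\[
BB^{\dagger} = (TA)(TA)^{\dagger} = T\,A A^{\dagger}\,T^{\dagger} = T\,(AA^{\dagger})\,T^{\top}.
\]
Taking determinants and using multiplicativity,
\[
\det(BB^{\dagger}) = \det(T)\,\det(AA^{\dagger})\,\det(T^{\top}) = \det(T)^2\,\det(AA^{\dagger}) = \det(AA^{\dagger}),
\]
which is the assertion. Note that no invertibility of $AA^{\dagger}$ (or of $A$) is needed, since the identity is purely multiplicative in the determinant.

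There is essentially no real obstacle; the only point deserving care is that the linear combinations in the hypothesis are expressed in terms of the \emph{original} rows $\mathbf{c}_{i+1},\dots,\mathbf{c}_k$ of $A$, not in terms of the already-modified rows $\mathbf{e}_{i+1},\dots,\mathbf{e}_{k-1}$. This is precisely what allows the single global factorization $B=TA$ with $T$ upper-triangular, rather than a product of elementary matrices requiring bookkeeping. Equivalently, one may phrase the argument as performing the row operations ``subtract a real multiple of a lower row from an upper row'' one at a time, proceeding from row $1$ downward to row $k-1$: each such operation is a real, determinant-one operation, and applied top-down it never disturbs any of the rows it refers to. Either phrasing yields $\det(BB^{\dagger})=\det(AA^{\dagger})$.
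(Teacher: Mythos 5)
Your proof is correct, and it takes a genuinely different and more streamlined route than the paper's. You package all the hypotheses into a single factorization $B=TA$ with $T$ a real unipotent upper-triangular $k\times k$ matrix, and then conclude from the congruence $BB^{\dagger}=T(AA^{\dagger})T^{\top}$ together with $\det(T)=1$; this handles all values of $k$ versus $n$ uniformly (including $k>n$, where both sides are automatically $0$, and $k<n$). The paper instead argues by cases: for $k\geq n$ it reduces to equality of the determinants $\det(A)=\det(B)$ via successive row operations, and for $k<n$ it augments $A$ and $B$ by vectors $\mathbf{v}_1,\dots,\mathbf{v}_{n-k}$ chosen orthogonal to the span of the rows, so as to reduce the Gram determinants to determinants of square matrices, and then verifies that the cross terms $\mathbf{v}_j\mathbf{e}_i^{*}$ vanish. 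Your observation that the combinations are taken in the \emph{original} rows $\mathbf{c}_{i+1},\dots,\mathbf{c}_k$, so that one global triangular $T$ works, is exactly the right point of care, and it is what lets you avoid both the case analysis and the auxiliary orthogonal vectors; the paper's argument, while longer, makes the underlying geometric picture (projecting away components along later rows does not change the Gram volume) more explicit. One could also note that your argument would even survive complex coefficients, since $\det(T)\overline{\det(T)}=1$ for unipotent $T$, though the lemma only needs the real case.
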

\begin{proof}
If $k>n$ then $\det(AA^{\dag})=0=\det(BB^{\dag})$. If $k=n$ then $\det(A)$ is
$$
\left|
         \begin{array}{c}
           \mathbf{c}_1          \\
           \mathbf{c}_2          \\
           \vdots       \\
           \mathbf{c}_{k-1}          \\
           \mathbf{c}_k          \\
         \end{array}
       \right|
 = \left|
         \begin{array}{c}
           \mathbf{e}_1          \\
           \mathbf{c}_2          \\
           \vdots       \\
           \mathbf{c}_{k-1}          \\
           \mathbf{c}_k          \\
         \end{array}
       \right|
 = \left|
         \begin{array}{c}
           \mathbf{e}_1          \\
           \mathbf{e}_2          \\
           \vdots       \\
           \mathbf{c}_{k-1}          \\
           \mathbf{c}_k          \\
         \end{array}
       \right|
=\dots
 = \left|
         \begin{array}{c}
           \mathbf{e}_1          \\
           \mathbf{e}_2          \\
           \vdots       \\
           \mathbf{e}_{k-1}          \\
           \mathbf{c}_k          \\
         \end{array}
       \right|
$$
i.e. $\det(B)$ and hence $\det(AA^{\dag})=\det(BB^{\dag})$.

Assume $k<n$. Let $\mathbf{v}_1, \dots, \mathbf{v}_{n-k} \in \C^{n}$ be such that $\mathbf{v}_1 \in \R(\mathbf{c}_1, \mathbf{c}_2, \dots, \mathbf{c}_k,)^{\bot}\setminus \{\mathbf{0}\}$, $\mathbf{v}_2 \in \R(\mathbf{v}_1, \mathbf{c}_1, \mathbf{c}_2, \dots, \mathbf{c}_k)^{\bot}\setminus \{\mathbf{0}\}$, ..., $\mathbf{v}_{n-k} \in \R(\mathbf{v}_1, \mathbf{v}_2, \dots \mathbf{v}_{n-k-1}, \mathbf{c}_1 , \mathbf{c}_2, \dots, \mathbf{c}_k)^{\bot}\setminus \{\mathbf{0}\}$. Now (as in the case $n=k$) we have
$$
\det(\left(
         \begin{array}{c}
           \mathbf{c}_1          \\
           \vdots       \\
           \mathbf{c}_{k-1}      \\
           \mathbf{c}_k          \\
           \mathbf{v}_1          \\
           \vdots       \\
           \mathbf{v}_{n-k}      \\
           \end{array}
       \right))
=
\det(\left(
         \begin{array}{c}
           \mathbf{e}_1     \\
           \vdots       \\
           \mathbf{e}_{k-1}          \\
           \mathbf{c}_k          \\
           \mathbf{v}_1          \\
           \vdots       \\
           \mathbf{v}_{n-k}      \\
           \end{array}
       \right))
$$
and hence
$$
\left|
         \begin{array}{cccccc}
           \mathbf{c}_1 \mathbf{c}_{1}^{*}          & \dots   & \mathbf{c}_1 \mathbf{c}_{k}^{*}     & \mathbf{c}_1 \mathbf{v}_{1}^{*}     & \dots & \mathbf{c}_1 \mathbf{v}_{n-k}^{*}       \\
           \vdots                 &         & \vdots            & \vdots            &       & \vdots  \\
           \mathbf{c}_k \mathbf{c}_{1}^{*}          & \dots   & \mathbf{c}_k \mathbf{c}_{k}^{*}     & \mathbf{c}_k \mathbf{v}_{1}^{*}     & \dots & \mathbf{c}_k \mathbf{v}_{n-k}^{*}  \\
           \mathbf{v}_1 \mathbf{c}_{1}^{*}          & \dots   & \mathbf{v}_1 \mathbf{c}_{k}^{*}     & \mathbf{v}_1 \mathbf{v}_{1}^{*}     & \dots & \mathbf{v}_1 \mathbf{v}_{n-k}^{*}  \\
           \vdots                 &         & \vdots            & \vdots            &       & \vdots           \\
           \mathbf{v}_{n-k} \mathbf{c}_{1}^{*}      & \dots   & \mathbf{v}_{n-k} \mathbf{c}_{k}^{*} & \mathbf{v}_{n-k} \mathbf{v}_{1}^{*} & \dots & \mathbf{v}_{n-k} \mathbf{v}_{n-k}^{*}  \\
           \end{array}
       \right|
$$
is equal to
$$
\left|
         \begin{array}{cccccc}
           \mathbf{e}_1 \mathbf{e}_{1}^{*}          & \dots   & \mathbf{e}_1 \mathbf{c}_{k}^{*}     & \mathbf{e}_1 \mathbf{v}_{1}^{*}     & \dots & \mathbf{e}_1 \mathbf{v}_{n-k}^{*}       \\
           \vdots                 &         & \vdots            & \vdots            &       & \vdots  \\
           \mathbf{c}_k \mathbf{e}_{1}^{*}          & \dots   & \mathbf{c}_k \mathbf{c}_{k}^{*}     & \mathbf{c}_k \mathbf{v}_{1}^{*}     & \dots & \mathbf{c}_k \mathbf{v}_{n-k}^{*}  \\
           \mathbf{v}_1 \mathbf{e}_{1}^{*}          & \dots   & \mathbf{v}_1 \mathbf{c}_{k}^{*}     & \mathbf{v}_1 \mathbf{v}_{1}^{*}     & \dots & \mathbf{v}_1 \mathbf{v}_{n-k}^{*}  \\
           \vdots                 &         & \vdots            & \vdots            &       & \vdots           \\
           \mathbf{v}_{n-k} \mathbf{c}_{1}^{*}      & \dots   & \mathbf{v}_{n-k} \mathbf{c}_{k}^{*} & \mathbf{v}_{n-k} \mathbf{v}_{1}^{*} & \dots & \mathbf{v}_{n-k} \mathbf{v}_{n-k}^{*}  \\
           \end{array}
       \right|.
$$
And since the way we chose $\mathbf{v}_1, \dots, \mathbf{v}_{n-k}$ this means that
$$
\left|
         \begin{array}{cccccc}
           \mathbf{c}_1 \mathbf{c}_{1}^{*}          & \dots       & \mathbf{c}_1 \mathbf{c}_{k}^{*}     & 0     & \dots & 0       \\
           \vdots                 &                      & \vdots            & \vdots            &       & \vdots  \\
           \mathbf{c}_k \mathbf{c}_{1}^{*}          & \dots       & \mathbf{c}_k \mathbf{c}_{k}^{*}     & 0     & \dots & 0  \\
           0          & \dots       & 0     & \mathbf{v}_1 \mathbf{v}_{1}^{*}     & \dots &0  \\
           \vdots                 &                     & \vdots            & \vdots            &       & \vdots           \\
           0      & \dots  & 0 & 0 & \dots & \mathbf{v}_{n-k} \mathbf{v}_{n-k}^{*}  \\
           \end{array}
       \right|
$$
is equal to
$$
\left|
         \begin{array}{cccccc}
           \mathbf{e}_1 \mathbf{e}_{1}^{*}          & \dots       & \mathbf{e}_1 \mathbf{c}_{k}^{*}     & 0     & \dots & 0       \\
           \vdots                 &                     & \vdots            & \vdots            &       & \vdots  \\
           \mathbf{c}_k \mathbf{e}_{1}^{*}          & \dots      & \mathbf{c}_k \mathbf{c}_{k}^{*}     & 0     & \dots & 0  \\
           0          & \dots  & 0         & \mathbf{v}_1 \mathbf{v}_{1}^{*}     & \dots & 0  \\
           \vdots                 &                  & \vdots            & \vdots            &       & \vdots           \\
           0      & \dots  & 0 & 0 & \dots & \mathbf{v}_{n-k} \mathbf{v}_{n-k}^{*}  \\
           \end{array}
       \right|
$$
because if we write $\mathbf{e}_i = \mathbf{c}_i - \mathbf{x}_i$ where $\mathbf{x}_i \in L(\mathbf{c}_{i+1}, \dots, \mathbf{c}_k)$ then $\mathbf{v}_j \mathbf{e}_{i}^{*} = \mathbf{v}_j (\mathbf{c}_i - \mathbf{x}_i)^{*} = \mathbf{v}_j \mathbf{c}_{i}^{*} - \mathbf{v}_j \mathbf{x}_{i}^{*} = 0 - 0 = 0$ for all $i=1, \dots, n-1$ and $j=1, \dots, n-k$.
This gives that
$$
|\mathbf{v}_1|^2 \dots |\mathbf{v}_{n-k}|^2 \det(AA^{\dag}) = |\mathbf{v}_1|^2 \dots |\mathbf{v}_{n-k}|^2 \det(BB^{\dag})
$$
and hence $ \det(AA^{\dag}) = \det(BB^{\dag})$.
\end{proof}

\begin{proof}[The proof of Theorem \ref{maindecay}]
Let us use the notation $C_l=(\mathbf{c}_{l,1}^{\top} , \dots , \mathbf{c}_{l,n_{t}}^{\top})^{\top}$ for $l=i_1, \dots, i_u$.

Let us first fix some small $C_{i_u} \in \mathbf{L}_{i_u}(N_{i_u})$. Now $|C_{i_u}|=\OO(1)$. Then write
$$
W_{i_u} = \{ (\mathbf{x}_{1}^{\top} , \dots , \mathbf{x}_{n_{t}}^{\top})^{\top} | \mathbf{x}_i \in \R(\mathbf{c}_{i_u,1} , \dots , \mathbf{c}_{i_u,n_{t}}) \}.
$$
Then let $\pi_{W_{i_u}^{\bot}}: M_{n_{t} \times k}(\C) \rightarrow W_{i_u}^{\bot}$ be an orthogonal projection as before.

A subspace $W_{i_u}^{\bot}$ has $\dim_{\R}(W_{i_u}^{\bot})=2n_{t}k-\dim_{\R}(W_{i_u})=2n_{t}k-2n_{t}^2=2n_{t}(k-n_{t})$. By corollary \ref{latticepigprincip} we have some $C_{i_{u-1}} \in \mathbf{L}_{i_{u-1}}(N_{i_{u-1}})$ such that
$$
||\pi_{W_{i_u}^{\bot}}(C_{i_{u-1}})||=\OO(N_{i_{u-1}}^\frac{\dim_{\R}(W_{i_u}^{\bot})-\dim_{\R}(\mathbf{L}_{i_{u-1}})}{\dim_{\R}(W_{i_u}^{\bot})})=\OO(N_{i_{u-1}}^{-\frac{n_{t}}{k-n_{t}}}).
$$

Now similarly build
$W_{i_{u-l}}$ for $l=0, \dots, u-2$ by setting
\begin{equation}
\begin{split}
W_{i_{u-l}} = \{ & (\mathbf{x}_{1}^{\top} , \dots , \mathbf{x}_{n_{t}}^{\top})^{\top} | \mathbf{x}_i \in \R(\mathbf{c}_{i_u,1} , \dots , \mathbf{c}_{i_u,n_{t}} , \mathbf{c}_{i_{u-1},1}, \\
& \dots , \mathbf{c}_{i_{u-1},n_{t}} , \dots , \mathbf{c}_{i_{u-l},1} , \dots , \mathbf{c}_{i_{u-l},n_{t}}) \}
\end{split}
\end{equation}

This gives $\dim_{\R}(W_{i_{u-l}}^{\bot})=2n_{t}k-\dim_{\R}(W_{i_{u-l}})=2n_{t}k-2n_{t}^2(l+1)=2n_{t}(k-n_{t}l-n_{t})$. And again, by corollary \ref{latticepigprincip} we find $C_{i_{u-l-1}} \in \mathbf{L}_{i_{u-l-1}}(N_{i_{u-l-1}})$ such that
\begin{eqnarray*}
||\pi_{W_{i_{u-l}}^{\bot}}(C_{i_{u-l-1}})||&=&\OO(N_{i_{u-l-1}}^\frac{\dim_{\R}(W_{i_{u-l}}^{\bot})-\dim_{\R}(\mathbf{L}_{i_{u-1}})}{\dim_{\R}(W_{i_{u-l}}^{\bot}})\\
&=&\OO(N_{i_{u-l-1}}^{-\frac{n_{t}l+n_{t}}{k-n_{t}l-n_{t}}})
\end{eqnarray*}
where $\pi_{W_{i_{u-l}}^{\bot}}: M_{n_{t} \times k}(\C) \rightarrow W_{i_{u-l}}^{\bot}$ is an orthogonal projection as before.

Lemma \ref{matriisitulo} gives that if $A=(C_{i_1}^{\top} , \dots , C_{i_u}^{\top})^{\top}$ and $B=(\pi_{i_2}(C_{i_1})^{\top} , \dots , \pi_{i_u}(C_{i_{u-1}})^{\top}, C_{i_u}^{\top})^{\top}$ then $\det(AA^{\dag})=\det(BB^{\dag})$ that is of size
$$
\OO((\prod_{l=0}^{u-2} N_{i_{u-l-1}}^{-\frac{n_{t}l+n_{t}}{k-n_{t}l-n_{t}}} )^{2n_{t}}) = \OO(\prod_{l=1}^{u-1} N_{i_l}^{-\frac{2n_{t}^2 (u-l)}{k-n_{t}(u-l)}} ).
$$

\end{proof}

\begin{proof}[The proof of Theorem \ref{thm:E_I}]
Given ${\cal I}=\{i_1, \ldots, i_u\}$, we first derive a lower bound on $d_{\min}(H_{i_1}, \ldots, H_{i_u})$. Let $H_{\cal I}=[H_{i_1} \ \cdots \ H_{i_u}]$, $A_{\cal I}=\diag\left(\kappa_{i_1} I_{n_t}, \cdots, \kappa_{i_u} I_{n_t}\right)$,  $\Delta X_{\cal I}=[ \Delta X_{i_1}^\top \ \cdots \ \Delta X_{i_u}^\top]^\top$, and $\Delta \tilde{X}_{\cal I}=A_{\cal I} \Delta X_{\cal I}$. Note that
\[
\norm{\sum_{i \in {\cal I}} \kappa_i H_i \Delta X_i}^2 \ = \
\tr \left( H_{\cal I}^\dag  H_{\cal I} \Delta \tilde{X}_{\cal I} \Delta \tilde{X}_{\cal I}^\dag \right).
\]
Let $\lambda_i$ and $\ell_j$ be respectively the eigenvalues of $H_{\cal I}^\dag  H_{\cal I}$ and $\Delta \tilde{X}_{\cal I} \Delta \tilde{X}_{\cal I}^\dag$ that are ordered as
\[
\begin{array}{l}
0 = \lambda_1 = \cdots = \lambda_{u n_t - K_u} < \lambda_{u n_t-K_u+1} \leq \cdots \leq \lambda_{u n_t}\\
\ell_1 \geq \ell_2 \cdots \geq \ell_{u n_t}
\end{array}
\]
where $K_u = \min\{ u n_t, n_r\}$. For each $t \in \Z$, $1 \leq t \leq K_u$, we have
\begin{IEEEeqnarray*}{rCl}
&& d_{\min}(H_{i_1}, \ldots, H_{i_u}) =  \min_{\Delta X_{\cal I} \neq {\bf 0}} \tr \left( H_{\cal I}^\dag  H_{\cal I} \Delta \tilde{X}_{\cal I} \Delta \tilde{X}_{\cal I}^\dag \right)\\
& \stackrel{\text{(a)}}{\geq} & \min_{\Delta X_{\cal I} \neq {\bf 0}} \sum_{i=u n_t - K_u+1}^{u n_t} \lambda_i \ell_i
 \geq  \min_{\Delta X_{\cal I} \neq {\bf 0}} \sum_{i=u n_t - t +1}^{u n_t} \lambda_i \ell_i\\
& \stackrel{\text{(b)}}{\dot\geq} &\min_{\Delta X_{\cal I} \neq {\bf 0}} \left[ \prod_{i=u n_t - t +1}^{u n_t} \lambda_i \ell_i \right]^{\frac{1}{t}}
 \stackrel{\text{(c)}}{\geq} \min_{\Delta X_{\cal I} \neq {\bf 0}} \left[ \prod_{i=u n_t - t +1}^{u n_t} \lambda_i \right]^{\frac{1}{t}} \left[
\frac{\left(\det(A_{\cal I})D_{\cal I}(N_{i_1}, \ldots, N_{i_u})\right)^2}{\prod_{i=1}^{un_t-t} \ell_i}\right]^{\frac{1}{t}}\\
& \stackrel{\text{(d)}}{\dot\geq} &\left[ \prod_{i=u n_t - t +1}^{u n_t} \lambda_i \right]^{\frac{1}{t}} \left[ \frac{\left(\det(A_{\cal I})D_{\cal I}(N_{i_1}, \ldots, N_{i_u})\right)^2}{\snr^{u n_t - t}}\right]^{\frac{1}{t}}\\
\end{IEEEeqnarray*}
where (a) follows from the mis-match inequality \cite{KW}, (b) is due to the AM-GM inequality, (c) follows from the definition of $D_{\cal I}(N_{i_1}, \ldots, N_{i_u})$ and that $\prod_{i=1}^{u n_t} \ell_i = \det\left(A_{\cal I} \Delta X_{\cal I} \Delta X_{\cal I}^\dag A_{\cal I}^\dag \right) \geq \left( \det(A_{\cal I}) D_{\cal I}(N_{i_1}, \ldots, N_{i_u})\right)^2$, and (d) is again from the AM-GM inequality and
\[
\prod_{i=1}^{un_t-t} \ell_i \dot\leq \left( \sum_{i=1}^{u n_t-t} \ell_i\right)^{u n_t-t} \leq \norm{\Delta \tilde{X}_{\cal I}}^{2 (un_t - t)} \leq \snr^{u n_t - t}.
\]

Set $\lambda_{i+u n_t-K_u} = \snr^{-\alpha_i}$ for $i= 1, \ldots, K_u$. Below we below distinguish two cases.
\begin{enumerate}
\item If $u > 1$, then setting $N_{i_j}=\snr^{\frac{r_{i_j}}{2n_t}}$, $\kappa_{i_j}^2 = \snr^{1- \frac{r_{i_j}}{n_t}}$, and replacing $D_{\cal I}(N_{i_1}, \ldots, N_{i_u})$ by the lower bound in Theorem 8.3 gives
\begin{multline}
d_{\min}(H_{i_1}, \ldots, H_{i_u})
\dot\geq \max_{1 \leq t \leq K_u}\snr^{-\frac{1}{t} \sum_{i=K_u - t +1}^{K_u} \alpha_i } \snr^{\frac{1}{t} \sum_{j=1}^u \left(n_t - r_{i_j}\right)} \times
 \snr^{-\frac{(U-1)}{t} \sum_{j=1}^u r_{i_j} }  \snr^{-\frac{u n_t -t}{t}}. \label{eq:dminlb}
\end{multline}
\item If $u=1$ and ${\cal I}=\{i_1\}$, then similarly we have
\begin{multline*}
d_{\min}(H_{i_1})\  \dot\geq \ \max_{1 \leq t \leq K_u}\snr^{-\frac{1}{t} \sum_{i= K_u - t +1}^{K_u} \alpha_i } \times
 \snr^{\frac{1}{t} (n_t - r_{i_1})} \snr^{-\frac{n_t - t}{t}},
\end{multline*}
as $D_{\cal I}(N_{i_1})\ \dot\geq\ 1$ from Theorem 8.3.
\end{enumerate}

For simplicity, below we focus only on the case $u > 1$ as the analysis of the other case follows the same approach. Note that the norm $\norm{W}^2$ of the noise matrix is a $\chi^2$ random variable with $2n_r k$ degrees of freedom. Continuing from \eqref{eq:bdd}, we have
\begin{IEEEeqnarray*}{rCl}
&& \Pr \left\{ {\cal E}_{\cal I} \right\}\\
& \leq & \Pr \left\{ \norm{W} \geq \frac12 d_{\min}(H_{i_1}, \ldots, H_{i_u}) \right\}\\
& \doteq & \Pr \left\{ d_{\min}(H_{i_1}, \ldots, H_{i_u}) \ \dot\leq \ 1\right\}\\
& \stackrel{\text{(a)}}{\doteq} & \Pr \left\{\max_{1 \leq t \leq K_u} \left\{ \sum_{i=K_u - t +1}^{K_u} (1-\alpha_i ) - U \sum_{j=1}^u r_{i_j} \right\} \leq 0\right\}\\
&=& \Pr \left\{ \sum_{i=K_u - t +1}^{K_u} (1-\alpha_i )   \leq U \sum_{j=1}^u r_{i_j}, \ t=1,2,\ldots, K_u\right\}\\
&\stackrel{\text{(b)}}{\doteq}&
\snr^{-d^*_{u n_t, n_r} \left( U\sum_{i \in {\cal I}} r_i \right)},
\end{IEEEeqnarray*}
where (a) follows from the lower bound on $d_{\min}(H_{i_1}, \ldots, H_{i_u})$ in \eqref{eq:dminlb}, and to establish (b) a similar derivation can be found for example in \cite[p.3313]{Paw09}.
\end{proof}

\section{Conclusion}

While this paper concentrated on building and analyzing codes  for MIMO-MAC the mathematical methods we used and developed can hopefully
be applied to varied problems related to wireless communication.

Here we have collected  some of the main tools we used in this work to the following list.
\begin{itemize}
\item Pigeon hole principle  in subspace (Section \ref{pigeonsec}).
\item Use of valuation theory to  achieve linear independence ( Lemma \ref{determinanttivaluaatio}).
\item Use of Diophantine approximation to achieve Euclidean separation beyond linear independence (Theorem \ref{inertdecay3}).
\end{itemize}

Besides MIMO-MAC these methods might be useful in the study of single user MIMO and  interference channels.

\section*{Acknowledgement}
The authors would like to thank Tapani Matala-aho for his  help with the Diophantine approximation.

\end{document}